\newcommand{\dotcounit}[1]{%
\,\begin{tikzpicture}[dotpic,yshift=-1mm]
\node [#1] (a) at (0,0.25) {}; 
\draw [medium diredge] (0,-0.2)--(a);
\end{tikzpicture}\,}
\newcommand{\dotunit}[1]{%
\,\begin{tikzpicture}[dotpic,yshift=1.5mm]
\node [#1] (a) at (0,-0.25) {}; 
\draw [medium diredge] (a)--(0,0.2);
\end{tikzpicture}\,}
\newcommand{\dotcomult}[1]{%
\,\begin{tikzpicture}[dotpic,yshift=0.5mm]
  \node [#1] (a) {};
  \draw [medium diredge] (-90:0.55)--(a);
  \draw [medium diredge] (a) -- (45:0.6);
  \draw [medium diredge] (a) -- (135:0.6);
\end{tikzpicture}\,}
\newcommand{\dotmult}[1]{%
\,\begin{tikzpicture}[dotpic]
  \node [#1] (a) {};
  \draw [medium diredge] (a) -- (90:0.55);
  \draw [medium diredge] (a) (-45:0.6) -- (a);
  \draw [medium diredge] (a) (-135:0.6) -- (a);
\end{tikzpicture}\,}
\newcommand{\dotonly}[1]{%
\,\begin{tikzpicture}[dotpic,yshift=-0.3mm]
\node [#1] (a) at (0,0) {};
\end{tikzpicture}\,}
\newcommand{\smalldotonly}[1]{%
\,\begin{tikzpicture}[dotpic,yshift=-0.15mm]
\node [#1] (a) at (0,0) {};
\end{tikzpicture}\,}
\newcommand{\smallblackdot}{\smalldotonly{smalldot}}%NEW
\newcommand{\unit}{\dotunit{dot}}
\newcommand{\blackobs}{\ensuremath{\mathcal O_{\!\smallblackdot}}\xspace}
\newcommand{\whitedot}{\dotonly{white dot}}
\newcommand{\smallwhitedot}{\smalldotonly{smallwhitedot}}%NEW
\newcommand{\whiteunit}{\dotunit{white dot}}
\newcommand{\whitecounit}{\dotcounit{white dot}}
\newcommand{\whitemult}{\dotmult{white dot}}
\newcommand{\whitecomult}{\dotcomult{white dot}}
\newcommand{\whiteobs}{\ensuremath{\mathcal O_{\!\smallwhitedot}}\xspace}
\newcommand{\graydot}{\dotonly{gray dot}}
\newcommand{\smallgraydot}{\smalldotonly{smallgraydot}}%NEW
\newcommand{\grayunit}{\dotunit{gray dot}}
\newcommand{\graymult}{\dotmult{gray dot}}
\newcommand{\grayobs}{\ensuremath{\mathcal O_{\!\smallgraydot}}\xspace}
\newcommand{\graytranspose}{\ensuremath{{\,\graydot\!\textrm{\rm\,T}}}}
\newcommand{\whitetranspose}{\ensuremath{{\,\whitedot\!\textrm{\rm\,T}}}}
\newcommand{\whiteconjugate}{\circledast}
\tikzstyle{env}=[copoint,regular polygon rotate=0,minimum width=0.2cm, fill=black]
\tikzstyle{every picture}=[baseline=-0.25em]
\tikzstyle{dotpic}=[scale=0.5]
\tikzstyle{diredges}=[every to/.style={diredge}]
\tikzstyle{dot graph}=[shorten <=-0.1mm,shorten >=-0.1mm,scale=0.6]
\tikzstyle{plot point}=[circle,fill=black,minimum width=2mm,inner sep=0]
\tikzstyle{braceedge}=[decorate,decoration={brace,amplitude=2mm,raise=-1mm}]
\tikzstyle{small braceedge}=[decorate,decoration={brace,amplitude=1mm,raise=-1mm}]
\tikzstyle{left hook arrow}=[left hook-latex]
\tikzstyle{right hook arrow}=[right hook-latex]
\tikzstyle{scalar}=[diamond,draw=black,inner sep=0.5pt,font=\small]
\tikzstyle{black dot}=[inner sep=0.7mm,minimum width=0pt,minimum height=0pt,fill=black,draw=black,shape=circle]
\tikzstyle{dot}=[black dot]
\tikzstyle{smalldot}=[inner sep=0.4mm,minimum width=0pt,minimum height=0pt,fill=black,draw=black,shape=circle]%NEW
\tikzstyle{white dot}=[dot,fill=white]
\tikzstyle{antipode}=[white dot,inner sep=0.3mm,font=\footnotesize]
\tikzstyle{smallwhitedot}=[smalldot,fill=white]%NEW
\tikzstyle{alt white dot}=[white dot,label={[xshift=3.07mm,yshift=-0.05mm,font=\footnotesize]left:$*$}]
\tikzstyle{gray dot}=[dot,fill=gray!40!white]
\tikzstyle{smallgraydot}=[smalldot,fill=gray!40!white]%NEW
\tikzstyle{box vertex}=[draw=black,rectangle]
\tikzstyle{small box}=[box vertex,fill=white]%% added rwd]
\tikzstyle{whitebg}=[fill=white,inner sep=2pt]
\tikzstyle{graph state vertex}=[sg vertex,fill=black]
\tikzstyle{wide copoint}=[fill=white,draw=black,shape=isosceles triangle,shape border rotate=90,isosceles triangle stretches=true,inner sep=1pt,minimum width=1.5cm,minimum height=5mm]
\tikzstyle{wide point}=[fill=white,draw=black,shape=isosceles triangle,shape border rotate=-90,isosceles triangle stretches=true,inner sep=1pt,minimum width=1.5cm,minimum height=4mm]
\tikzstyle{very wide copoint}=[fill=white,draw=black,shape=isosceles triangle,shape border rotate=-90,isosceles triangle stretches=true,inner sep=1pt,minimum width=2.5cm,minimum height=4mm]
\tikzstyle{very wide empty copoint}=[draw=black,shape=isosceles triangle,shape border rotate=-90,isosceles triangle stretches=true,inner sep=1pt,minimum width=2.5cm,minimum height=4mm]
\tikzstyle{symm}=[ultra thick,shorten <=-1mm,shorten >=-1mm]
\tikzstyle{andgate}=[fill=white,draw=black,shape=isosceles
\tikzstyle{orgate}=[fill=black,draw=black,shape=isosceles
 \tikzstyle{notgate}=[box vertex,fill=gray,inner sep=0pt,minimum width=3mm,minimum height=3mm]
\tikzstyle{fangate}=[black dot]
\tikzstyle{square box}=[rectangle,fill=white,draw=black,minimum height=5mm,minimum width=5mm,font=\small]
\tikzstyle{square gray box}=[rectangle,fill=gray!30,draw=black,minimum height=6mm,minimum width=6mm]
\tikzstyle{copoint}=[regular polygon,regular polygon sides=3,draw=black,scale=0.75,inner sep=-0.5pt,minimum width=7mm,fill=white]
\tikzstyle{point}=[regular polygon,regular polygon sides=3,draw=black,scale=0.75,inner sep=-0.5pt,minimum width=7mm,fill=white,regular polygon rotate=180]
\tikzstyle{gray point}=[point,fill=gray!40!white]
\tikzstyle{gray copoint}=[copoint,fill=gray!40!white]
\newcommand{\edgearrow}{{\arrow[black]{>}}}
\newcommand{\edgetick}{{\arrow[black,scale=0.7,very thick]{|}}}
\tikzstyle{diredge}=[->]
\tikzstyle{rdiredge}=[<-]
\tikzstyle{medium diredge}=[->]
\tikzstyle{short diredge}=[->]
\tikzstyle{halfedge}=[-)]
\tikzstyle{other halfedge}=[(-]
\tikzstyle{freeedge}=[(-)]
\tikzstyle{white edge}=[line width=5pt,white]
\tikzstyle{tick}=[postaction=decorate,decoration={markings, mark=at position 0.5 with \edgetick}]
\tikzstyle{small map edge}=[|-latex, gray!60!blue, shorten <=0.9mm, shorten >=0.5mm]
\tikzstyle{thick dashed edge}=[very thick,dashed,gray!40]
\tikzstyle{dashed edge}=[dashed,gray!40]
\tikzstyle{map edge}=[|-latex,very thick, gray!40, shorten <=1mm, shorten >=0.5mm]
\tikzstyle{tickedge}=[postaction=decorate,
\tikzstyle{dirtickedge}=[postaction=decorate,
\tikzstyle{dirdoubletickedge}=[postaction=decorate,
\newcommand{\boxshape}[3]{%
\pgfdeclareshape{#1}{
\inheritsavedanchors[from=rectangle] % this is nearly a rectangle
\inheritanchorborder[from=rectangle]
\inheritanchor[from=rectangle]{center}
\inheritanchor[from=rectangle]{north}
\inheritanchor[from=rectangle]{south}
\inheritanchor[from=rectangle]{west}
\inheritanchor[from=rectangle]{east}
% ... and possibly more
\backgroundpath{% this is new
% store lower right in xa/ya and upper right in xb/yb
\southwest \pgf@xa=\pgf@x \pgf@ya=\pgf@y
\northeast \pgf@xb=\pgf@x \pgf@yb=\pgf@y

\@tempdima=#2
\@tempdimb=#3

\pgfpathmoveto{\pgfpoint{\pgf@xa - 5pt + \@tempdima}{\pgf@ya}}
\pgfpathlineto{\pgfpoint{\pgf@xa - 5pt - \@tempdima}{\pgf@yb}}
\pgfpathlineto{\pgfpoint{\pgf@xb + 5pt + \@tempdimb}{\pgf@yb}}
\pgfpathlineto{\pgfpoint{\pgf@xb + 5pt - \@tempdimb}{\pgf@ya}}
\pgfpathlineto{\pgfpoint{\pgf@xa - 5pt + \@tempdima}{\pgf@ya}}
\pgfpathclose
}
}}
\tikzstyle{map}=[draw,shape=NEbox,inner sep=3pt]
\tikzstyle{mapdag}=[draw,shape=SEbox,inner sep=3pt]
\tikzstyle{maptrans}=[draw,shape=SWbox,inner sep=3pt]
\tikzstyle{mapconj}=[draw,shape=NWbox,inner sep=3pt]
\tikzstyle{probs}=[shape=semicircle,fill=gray!40!white,draw=black,shape border rotate=180,minimum width=1.2cm]
\tikzstyle{arrs}=[-latex,font=\small,auto]
\tikzstyle{arrow plain}=[arrs]
\tikzstyle{arrow dashed}=[dashed,arrs]
\tikzstyle{arrow bold}=[very thick,arrs]
\tikzstyle{arrow hide}=[draw=white!0,-]
\tikzstyle{arrow reverse}=[latex-]
\tikzstyle{cdnode}=[]
\tikzstyle{cnot}=[fill=white,shape=circle,inner sep=-1.4pt]
\tikzstyle{wire label}=[font=\tiny, auto]
\tikzstyle{gray wide copoint}=[fill=gray,draw=black,shape=isosceles triangle,shape border rotate=90,isosceles triangle stretches=true,inner sep=1pt,minimum width=1.5cm,minimum height=5mm]
\newcommand{\inline}[1]{
  \raisebox{0.5ex}{\;#1\;}
}
\newcommand{\circCX}{\inline{%
\begin{tikzpicture}[circuit]
  \node  (0) at (-1, 1) {};
  \node  (1) at (1, 1) {};
% \node [style=ctrl vertex] (2) at (-1, 0) {};
% \node [style=target vertex] (3) at (1, 0) {};
  \circcnot{2}{-1, 0}{3}{1, 0}
  \node  (4) at (-1, -1) {};
  \node  (5) at (1, -1) {};
  \draw  (1) to (3);
  \draw  (0) to (2);
  \draw  (3) to (5);
  \draw  (2) to (4);
\end{tikzpicture}
}}
\newcommand{\whitemu}{\ensuremath{\mu_{\smallwhitedot}}\xspace}
\newcommand{\grayeta}{\ensuremath {\eta_{\smallgraydot}}\xspace}
\newcommand{\whiteeta}{\ensuremath{\eta_{\smallwhitedot}}\xspace}
\newcommand{\graydelta}{\ensuremath {\delta_{\smallgraydot}}\xspace}
\newcommand{\whitedelta}{\ensuremath{\delta_{\smallwhitedot}}\xspace}
\newcommand{\grayepsilon}{\ensuremath {\epsilon_{\smallgraydot}}\xspace}
\newcommand{\whiteepsilon}{\ensuremath{\epsilon_{\smallwhitedot}}\xspace}
\newcommand{\graymeas}{\ensuremath{m_{\!\smallgraydot}}\xspace}
\newcommand{\whitePhi}{\ensuremath{\Phi_{\!\smallwhitedot}}\xspace}
\newcommand{\grayPhi}{\ensuremath{\Phi_{\!\smallgraydot}}\xspace}
\newcommand{\whiteLambda}{\ensuremath{\Lambda_{\!\smallwhitedot}}}
\newcommand{\bra}[1]{\ensuremath{\left\langle #1 \right|}}
\newcommand{\ket}[1]{\ensuremath{\left|  #1 \right\rangle}}
\newcommand{\roundket}[1]{\ensuremath{\left|  #1 \right)}}
\newcommand{\braket}[2]{\ensuremath{\langle#1|#2\rangle}}
\newcommand{\ketbra}[2]{\ensuremath{\ket{#1}\!\bra{#2}}}
\newcommand{\innp}[2]{\braket{#1}{#2}}
\newcommand{\outp}[2]{\ketbra{#1}{#2}}
\newcommand{\CX}{\wedge X}
\newcommand{\id}[1]{\ensuremath{1_{#1}}}
\newcommand{\rw}{\Rightarrow}
\newcommand{\RW}[1]{\stackrel{#1}{\Rightarrow}}
\newcommand{\EQ}[1]{\stackrel{#1}{=}}
\newcommand{\denote}[1]{% --``semantic'' brakets
\left\llbracket #1 \right\rrbracket}
\newcommand{\catname}[1]{\ensuremath{\mathbf{#1}}\xspace}
\newcommand{\catnamesub}[2]{\ensuremath{\mathbf{#1}_{#2}}\xspace}
\newcommand{\spek}{\catname{Spek}}
\newcommand{\stab}{\catname{Stab}}
\newcommand{\qubit}{\catname{Qubit}}
\newcommand{\fhilb}{\catname{FHilb}}
\newcommand{\fhilbd}{\catnamesub{FHilb}{D}}
\newcommand{\fhilbii}{\catnamesub{FHilb}{2}}
\newcommand{\fdhilb}{\fhilb}
\newcommand{\fdhilbd}{\fhilbd}
\newcommand{\fdhilbii}{\fhilbii}
\newcommand{\frel}{\catname{FRel}}
\newcommand{\FRel}{\frel}
\newcommand{\Frel}{\frel}
\newcommand{\freld}{\catnamesub{FRel}{D}}
\newcommand{\frelii}{\catnamesub{FRel}{2}}
\newcommand{\Frelii}{\frelii}
\newcommand{\fset}{\catname{FSet}}
\newcommand{\bool}{\catname{Bool}}
\newcommand{\symgrp}{\catname{SymGrp}}
\newcommand{\toy}{\catname{Toy}}
\newcommand{\boolcircs}{\catname{BoolCirc}}
\newcommand{\boolcirc}{\boolcircs}
\newcommand{\qcircs}{\catname{QuCirc}}
\newcommand{\qcirc}{\qcircs}
\newcommand{\CC}{\catname{C}}
\newcommand{\DD}{\catname{D}}
\newcommand{\catC}{\CC}   %{\ensuremath{\mathcal{C}}\xspace}
\newcommand{\catD}{\DD}    %{\ensuremath{\mathcal{D}}\xspace}
\newcommand{\catFHilb}{\fhilb}  %  {\ensuremath{\textrm{\bf FHilb}}\xspace}
\DeclareMathOperator{\Tr}{Tr}
\newcommand{\boxmap}[1]{\,\tikz[dotpic]{\node[style=square box] (x) {$#1$};
\draw[diredge](0,-1)--(x);
\draw[diredge](x)--(0,1);}\,}
\newcommand{\map}[1]{\,\tikz[dotpic]{\node[style=map] (x) {$#1$};\draw(0,-1.3)--(x)--(0,1.3);}\,}
\newcommand{\mapdag}[1]{\,\tikz[dotpic]{\node[style=mapdag] (x) {$#1$};\draw(0,-1.3)--(x)--(0,1.3);}\,}
\newcommand{\graypointmap}[1]{\,\tikz[dotpic]{\node[style=gray point] (x) at (0,-0.5) {$#1$};\draw (x)--(0,1);}\,}
\newcommand{\idfig}[1]{\,\tikz[dotpic]{
\draw[diredge] (0,-1) to node[wire label]{$\scriptsize #1$} (0,1);}\;}
\newcommand{\boxpoint}[1]{\,\tikz[dotpic]{\node[style=square box] (x) {$#1$};
\draw[diredge](x)--(0,1);}\,}
\newcommand{\boxcopoint}[1]{\,\tikz[dotpic]{\node[style=square box]
    (x) at (0,0){$#1$};
\draw[diredge](0,-1)--(x);
}\,}
\tikzstyle{cdiag}=[matrix of math nodes, row sep=3em, column sep=3em, text height=1.5ex, text depth=0.25ex,inner sep=0.5em]
\tikzstyle{arrow above}=[transform canvas={yshift=0.5ex}]
\tikzstyle{arrow below}=[transform canvas={yshift=-0.5ex}]
\newcommand{\iso}{\ensuremath{\cong}}
\newcommand{\sizeof}[1]{% \sizeof{x} == |x|
  \left|#1\right|}
\newcommand{\whiteplus}{\mathbin{{+_{\!\!\!{}_{\smallwhitedot}}}}}
\spnewtheorem*{exercise*}{Exercise}{\bfseries}{\rmfamily}
\def\bR{\begin{color}{red}} 
\def\bB{\begin{color}{blue}}
\def\bM{\begin{color}{magenta}}
\def\bC{\begin{color}{cyan}}
\def\bW{\begin{color}{white}}
\def\bBl{\begin{color}{black}} 
\def\bG{\begin{color}{green}}
\def\bY{\begin{color}{yellow}}
\def\e{\end{color}}
\begin{document}
\title*{Generalised Compositional Theories and Diagrammatic Reasoning}
\titlerunning{GCTs and Diagrammatic Reasoning}
\author{Bob~Coecke \and Ross Duncan \and Aleks
  Kissinger \and Quanlong Wang}
\institute{
Bob~Coecke 
\at
University of Oxford, Department of Computer Science,\\
Wolfson Building, Parks Road, Oxford OX1 3QD, UK\\
\email{coecke@cs.ox.ac.uk}
\and 
Ross Duncan 
\at
University of Strathclyde, Department of Computer and
Information Sciences,\\
  Livingston Tower, 26 Richmond Street, Glasgow G1 1XH, UK\\
\email{ross.duncan@strath.ac.uk}
\and Aleks Kissinger 
\at
University of Oxford, Department of Computer Science,\\
Wolfson Building, Parks Road, Oxford OX1 3QD, UK\\
\email{alek@cs.ox.ac.uk}
\and Quanlong Wang
\at
Beihang University, School of Mathematics and System Sciences,\\
XueYuan Road No.37, HaiDian District, Beijing, China\\
\email{qlwang@buaa.edu.cn}
}
\authorrunning{Coecke, Duncan, Kissinger, and Wang}
\maketitle
%\vspace{-3cm}
\newpage
\setcounter{minitocdepth}{2}
\dominitoc

This chapter provides an introduction to the use of \emph{diagrammatic
  language}, or perhaps more accurately, \emph{diagrammatic calculus},
in quantum information and quantum foundations.  We illustrate
the use of diagrammatic calculus in one particular case, namely the study
of \emph{complementarity} and \emph{non-locality}, two fundamental
concepts of quantum theory whose relationship we explore in later part
of this chapter.

The diagrammatic calculus that we are concerned with here is not
merely an illustrative tool, but it has both (i) a conceptual physical
backbone, which allows it to act as a foundation for diverse physical
theories, and (ii) a genuine mathematical underpinning, permitting
one to relate it to standard mathematical structures.

(i) The conceptual physical backbone concerns \emph{compositionality}.
Given two systems, there is also a composite system.  This notion of
composition is a primitive ingredient of the diagrammatic language.
Moreover, the basic elements of the diagrammatic language are
\emph{processes}, and states are identified with preparation
processes.  This paves the way for a framework of \emph{generalised
  compositional theories} (GCTs), named in analogy to generalised
probabilistic theories \cite{Barrett2007}.  The latter have recently
received much attention because one can better understand a
theory---quantum theory in particular---by studying it as merely a
member of a broader class of theories.  Notably, the study of
non-locality within this framework has provided important new insights
\cite{InfoCaus,InfoCaus2}.  Whereas generalised probabilistic theories
discard everything except the convex probabilistic structure, in
contrast, GCTs focus on composition.  This approach is informed by
techniques used in computer science, logic, and the branch of
mathematics called category theory, however its roots can be traced to
Schr\"odinger's conviction that the essential characteristic of
quantum theory is the manner in which systems compose
\cite{Schrodinger}.

(ii) On the other hand, the diagrammatic language has a well-defined
mathematical meaning, which permits any diagram to be interpreted as a
definite object in various other concrete mathematical models, for
example in Hilbert spaces.  This translation can be carried out in a
formally precise manner, so that reasoning in the diagrammatic
calculus produces true equations in the chosen model.  At the same
time, the relationship between what is provable in the calculus and
what is provable in concrete models can be described to a high degree
of precision.

We won't discuss this mathematical basis in detail here, however it
may be summarised as follows:  the diagrammatic calculus is itself a
GCT, and GCTs form a certain class of  \emph{monoidal categories},
also known as \emph{tensor categories}. 
The use of diagrammatic languages for tensors traces back
to Penrose in the early 1970's \cite{Penrose}, but was only placed on
a formal mathematical basis in the late 1980's
\cite{KellyLaplaza,JS:1993:GeoTenCal1}.  
Their use in quantum
foundations and quantum information began with an abstract (partial)
axiomatisation of Hilbert spaces in terms of these categories
\cite{AC2004}, eventually resulting in so-called \emph{quantum picturalism}
\cite{CoeckeKinder}.
Meanwhile, the diagrammatic compositional language has been adopted by
several researchers in quantum foundations \cite{Chiri,Hardy}.  The
particular developments related here been used to solve problems in
quantum foundations \cite{CES2011, Bill2} and quantum computation
\cite{CoeckeKissinger2010,DuncanPerdrix2010,Horsman}.

\section{Introduction to Quantum Picturalism}
\label{sec:intr-quant-pict}
%%%%%%%%%%%%%%%%%%%%%%%%%%%%%%%%%%%
%%%%%%%%%%%%%%%%%%%%%%%%%%%%%%%%%%%
%%%%%%%%%%%%%%%%%%%%%%%%%%%%%%%%%%%

\subsection{Theories and Diagrams}
\label{sec:theories-diagrams}

A generalised compositional theory consists of \emph{systems}, or more
accurately \emph{types of   systems}, and \emph{processes} which transform
systems.  A process $f$ which transforms systems of type $A$ into systems of
type $B$ is written $f:A\to B$.  At the highest level of generality we do not
need to give any details as to what $A$, $B$, or $f$ are: it is enough to know
that that $f$ accepts systems of type $A$ as inputs and produces systems of
type $B$ as outputs.  The important thing is how systems and processes are
combined.

Mathematically speaking, general compositional theories are
\emph{strict symmetric monoidal categories}, and a full exposition of
their properties would require a lengthy detour into category theory.
The interested reader can refer to Mac Lane's classic text
\cite{MacLane:CatsWM:1971} for a thorough treatment.  However, we can
avoid reading Mac Lane's book\footnote{We jest; reading Mac Lane's
  book is eventually unavoidable, however the paper
  \cite{BobEric2011cats} is an easy introduction to the subject of
  monoidal categories.} by adopting a diagrammatic
notation, which absorbs all of the relevant equations into the syntax.
This notation is the subject of the first section of this paper.

We will represent processes by \emph{diagrams},
consisting of boxes and wires.  The wires are labelled by systems, and
the boxes by basic processes\footnote{The term ``basic'' simply means
  a process whose internal structure is of no interest.  Typically we
  construct diagrams from some given set of basic processes.}.  Wires
join boxes at the top and bottom; the wires below correspond to the
input systems of the process, and those at the top correspond to the
output systems.  For example:
\[
\begin{array}{ccccc}
    \InputIfFileExists{w1b.tikz}{}{\input{./figures/w1b.tikz}} && \InputIfFileExists{w1a.tikz}{}{\input{./figures/w1a.tikz}} && \InputIfFileExists{w1c.tikz}{}{\input{./figures/w1c.tikz}}
    \\ \\
    f:A\to B & \qquad &
    g:A\otimes B \to B \otimes A  &  \qquad  & 
     \delta : A \to A \otimes A  
  \end{array}
\]
The same is true for the diagram as a whole:  the wires entering the
bottom of the diagram are its input systems, and those leaving from
the top are its outputs.  

Given processes $f:A\to B$ and $g:B\to C$, it seems obvious that doing $f$ then
$g$ is again a process, and we write $g\circ f:A\to C$ to denote this
process.  In other words, processes admit \emph{sequential
  combination}; we will usually call this operation
\emph{composition}.

Similarly, a pair of systems, say $A$ and $B$, can be taken together
and viewed as a single system, $A \otimes B$.  Now, given a pair of
processes $f:A\to B$ and $g:A'\to B'$, a new process is obtained by
placing them in parallel.  We denote the combined process $f\otimes g:
A\otimes A' \to B\otimes B'$.  This operation of parallel combination is
called \emph{tensor}.

In the diagrammatic notation, composition is expressed by plugging the
outputs of one box into the inputs of another, and the tensor
is given by juxtaposition. 
\ctikzfig{cat_compose_tensor}
We require that both operations, composition and tensor, are
associative and obey the interchange law, 
\begin{equation}
(f \otimes g) \circ (h \otimes k) 
=
(f \circ h) \otimes (g \circ k)\;.\label{eq:interchangelaw}
\end{equation}
In the graphical notation, all of these equations become trivial: they
boil down the statement that the three diagrams below are unambiguous.
\ctikzfig{w2}
While it is
easy to translate these diagrams back into conventional notation, to
do so we must make a \emph{choice} of where to put the brackets, even
though the theory tells us this choice does not matter.  This
highlights a key advantage of working with diagrams, namely that the
objects which are \emph{equal} in the theory produce the \emph{same
  diagram}.

In addition to the two operations, composition and tensor, every
generalised compositional theory is equipped with certain primitive
processes.  The simplest process is the process which doesn't do
anything at all, simply returning unchanged the system given to it.
We assume that for every system $A$ such a null process, called the
\emph{identity} and written $\id{A}:A\to A$, exists.  The fact that it
does nothing is expressed by the equations
\[
\id{B} \circ f = f = f \circ \id{A}
\]
for all processes $f:A\to B$.  The identity process $\id{A}:A\to A$ is
drawn as a wire without any box on it, while the identity for $A
\otimes A'$ is simply the tensor product $\id{A} \otimes \id{A'}$,
i.e. two wires.
\[
\id{A} =   \idfig{A} \qquad \id{A\otimes A'} = %
\beginpgfgraphicnamed{tensor_wires}
\InputIfFileExists{tensor_wires.tikz}{}{\input{./figures/tensor_wires.tikz}}
\endpgfgraphicnamed
\]
Once again we see an equation absorbed into the notation:  since the
identity has no effect on a process, the \emph{length} of the wires attached
to a box makes no difference.
\ctikzfig{equiv_planar_diagrams}
In addition, for every pair of systems $A$ and $B$ there is a process
$\sigma_{A,B} : A \otimes B \to B \otimes A$ which exchanges the two
systems.  The class of theories we consider here are \emph{symmetric}:
swapping two systems twice has no effect, hence the equation
\[
\sigma_{B,A} \circ \sigma_{A,B} = \id{A \otimes B}
\]
holds for all systems $A$ and $B$
%%%%%%%%% removed since misleading as stated -- NB scalars commute
%%%%%%%%% in ANY monoidal category.
% \footnote{The choice to work with
%   symmetric theories is a real restriction;  it rules out, for
%   example, quaternionic quantum mechanics.}.
%%%%%%%% removed since misleading as stated
Graphically, the swap is just the crossing of two wires:
\[
\sigma_{A,B} = %
\beginpgfgraphicnamed{w3}
\InputIfFileExists{w3.tikz}{}{\input{./figures/w3.tikz}}
\endpgfgraphicnamed
\]
In fact, the swap should satisfy some further \emph{coherence
  equations}, the details of which can be found in
\cite{MacLane:CatsWM:1971}.  However, we can again make the graphical
notation do the work by allowing wires to cross freely in the
diagrams, and saying that only the connectivity of the wires matters,
and not their configuration in the page.  For example, the following
diagrams are equal: 
\ctikzfig{circuit_diagram_equiv} 
Note that we do not distinguish between wires crossing over and
crossing under.

A process may produce an output without having to consume
an input first, or vice versa.  Therefore we introduce a null system, or
empty system, which we denote $I$.  Hence a process that produces an $A$
from nothing would be written $p:I\to A$.  Like the identity process,
the null system obeys some equations:  
\[
A\otimes I = A = I \otimes A 
\qquad\text{ and } \qquad 
\id{I} \otimes f = f = f \otimes \id{I}\,,
\] 
for all systems $A$ and all processes $f$.  As suggested by the
preceding equations, $I$ is represented as empty space in the diagram,
and its identity process $\id{I}:I\to I$ is represented by the empty
diagram.  
\ctikzfig{w4} 
A process of type $s:I\to I$ is called a \emph{scalar}; this name will
be justified later.  It is clear from the diagrammatic notation that
given scalars $s$ and $s'$ we have $s\circ s = s \otimes s' = s'
\otimes s = s' \circ s$; i.e. the scalars form a commutative
monoid\footnote{ This is true even for non-symmetric monoidal
  categories; see \cite{JS:1993:GeoTenCal1}.  }.

In the preceding text we have introduced various transformations of
diagrams which, we claim, do not change anything.  It is reasonable to
ask:  when are two diagrams considered to be equal?
We use a very intuitive notion here: \textbf{Two diagrams are
  considered equal when, keeping the inputs and outputs fixed, one may
  be transformed to the other by purely topological transformations.}
In other words, if starting from one diagram we---by crossing or
uncrossing wires, stretching wires, moving boxes along wires,
translating boxes in the plane (while maintaining their connections),
etc---arrive at the other, then they are equal.
\begin{figure}[ht]
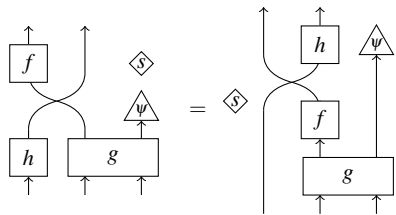

  \centering
  \ctikzfig{w5}
  \caption{Examples of topologically equivalent diagrams.}
  \label{fig:equality-of-diags}
\end{figure}
In particular, since scalars are not connected to the  inputs or
outputs of the diagram, they may be placed anywhere in the diagram
without altering its meaning.

\begin{example}\label{ex:symmetricgroups}
  The simplest non-trivial example is the theory with one primitive
  system, denoted $u$, and whose processes are generated by the
  identity and swap.  We call this theory \symgrp. Since there is only
  one basic system, every other system is just an $n$-fold tensor power of
  $u$, hence the systems of the theory can be identified with the
  natural numbers.  In this theory, a process $p:n\to n$ is nothing
  more than a sequence of swaps; i.e. a permutation on the $n$-element
  set.  Hence \symgrp is exactly the theory of the symmetric groups.
  \begin{figure}[ht]
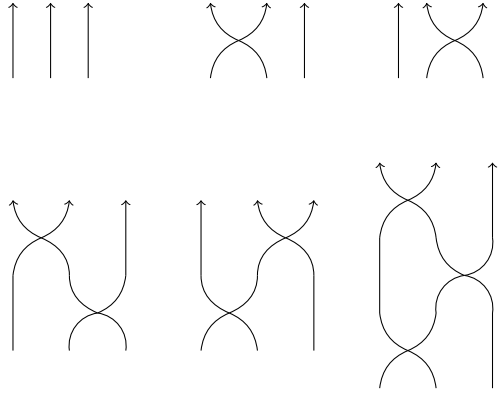

    \centering
    \ctikzfig{w6}
    \caption{Example: the symmetric group $S_3$ presented as diagrams.}
  \end{figure}
\end{example}

\begin{example}[Finite-dimensional Hilbert spaces]\label{ex:fdhilb}
  The theory called \fdhilb has as its systems all finite-dimensional
  complex Hilbert spaces.  The processes of this theory are all linear
  maps $f:A\to B$.  The sequential composition of processes is the
  usual composition of linear maps, and the tensor is the usual
  Kronecker product of vector spaces and maps.  The identity
  process is the identity map, the swap is the evident permutation map,
  and the null system is the base field, $\mathbb{C}$.  Since a linear
  map $\mathbb{C}\to\mathbb{C}$ is totally
  determined by its value at 1, we see that the scalars of \fhilb are
  nothing more than the complex numbers themselves.

  We write \fdhilbd to denote the subtheory \fdhilb restricted to
  Hilbert spaces of dimension $D^n$ and linear maps between them, for
  some fixed $D$.  For convenience, we refer to \fdhilbii as \qubit.
  Notice that the systems of \qubit are all tensor powers of
  $\mathbb{C}^2$, and its processes include all quantum circuits,
  state preparations,  and
  post-selected measurements, justifying the name.
\end{example}

\begin{remark}\label{rem:specify-tensor}
  Note that we must specify what the tensor product is to specify what
  the theory is.  For example, another equally valid theory is the
  collection of finite dimensional Hilbert spaces and linear maps, but
  with the direct sum as the tensor.  This is again a general
  compositional theory, although since it lacks certain other features
  we will require later, it will play no role in this presentation.
\end{remark}

\begin{example}[Sets and Relations]\label{ex:Rel}
  An example with very different flavour, but most of the same
  structure is \FRel.  The systems of \FRel are all finite sets
  (considered up to isomorphism\footnote{Since we identify sets of the
    same cardinality, we can equivalently say that the systems of
    \Frel are just the natural numbers.}), and the
  processes $r:X\to Y$ are relations between $X$ and $Y$, that is
  subsets of $X\times Y$.  The composition of relations is given by
  \[
  s\circ r = \{ (x,z) \;|\; \exists y \text{ s.t. } (x,y) \in r \text{ and
  } (y,z) \in s\}.
  \]
  The identity process is the  diagonal relation, 
  \[
  \id{X} = \{ (x,x) \;|\; x \in X\}.
  \]
  The tensor product in \Frel is the cartesian product $X\otimes
  Y = X \times Y$, which takes the form 
  \[
  r \otimes r' = \{((x,x'),(y,y')) \;|\; (x,y) \in r \text{ and } (x',y') \in
  r'\}
  \]
  on processes.  The null system is the singleton set $\{ * \}$, for
  which we have $\{*\} \times X \iso X$ for all sets $X$.  There are
  exactly two relations from $\{*\}$ to itself, namely the total
  relation and the empty relation.  Hence, the scalars of \frel are the
  Boolean monoid, i.e. $\mathbb{Z}_2$ with the usual multiplication.

  An important subtheory of \frel is \fset, obtained by restricting the
  to relations which are functions:  that is, relations $r:X\to Y$
  where each $x$ is related to exactly one $y$.
  Just as in the case of \fhilb, we can consider restrictions of \FRel
  to systems generated by a set of size $D$, which we call \freld.
  For example, \Frelii contains all the Boolean functions.  The
  intersection of \frelii and \fset consists of precisely the Boolean
  functions; this theory we denote \bool.
  Many other interesting theories are subtheories of \FRel; we'll meet some
  more later.
\end{example}

Since generalised compositional theories all share certain basic
structure, it is natural to consider maps between them.  Given two
such theories \catC and \catD, a map $F:\catC\to\catD$ 
consists of an assignment of each system $A$ in \catC to a system $FA$
in \catD, and an assignment of each process $f:A\to B$ in \catC to a process
$Ff:FA\to FB$ in \catD, obeying the following equations:
\begin{align*}
  F(A \otimes B) &= FA \otimes FB & FI &= I \\
  F(g\circ f) &= Fg \circ Ff & F(f \otimes g) &= Ff \otimes Fg \\
  F\id{A} &= \id{FA} & F\sigma_{A,B} &= \sigma_{FA,FB}
\end{align*}
In the mathematics literature, such a map is called a \emph{strict
  symmetric monoidal  functor}; again, see Mac
Lane \cite{MacLane:CatsWM:1971} for the details.  The important point
to note is that the mapping $F$ sends wires to wires.  Therefore, to
specify such a mapping it is enough to specify the image of the
boxes in a diagram, ensuring that composition and tensor are
respected.

\begin{example}\label{ex:rep-symgrp-in-fhilb}
  We can define a map $R_D: \symgrp \to \fdhilb$ by setting $R_D(u) =
  \mathbb{C}^D$ and then everything else is defined by the requirement
  that $R_D$ is a strict symmetric monoidal functor.  Thus we have a
  $D^n$ dimensional 
  representation of the symmetric group $S_n$ for every $D$.  

  In fact, this construction applies equally well to any generalised
  compositional theory \catC: all that is required is an assignment of the
  unique primitive system $u$ to some system of $\catC$.  Therefore
  every generalised compositional theory contains all the symmetric
  groups.
\end{example}

Given a mapping between theories it is easy to calculate the
image of a given diagram.  One must recursively partition the diagram
into tensors and compositions of smaller diagrams until each partition
contains exactly one element---that is, either a single wire, a
crossing of wires, or a box.  The interchange law (Equation~\ref{eq:interchangelaw}) guarantees that the
result does not depend on the partition chosen.

\ctikzfig{w7}

We may now state:
\begin{theorem}[Fundamental Theorem of Diagrams]
  \label{thm:fund-thm-diags-i}
  Given any two generalised compositional theories \catC and \catD,
  and a map $F:\catC \to \catD$, for any two diagrams $d$ and $d'$ in
  \catC, if $d = d'$ \emph{as diagrams} then $Fd = Fd'$ in \catD.
\end{theorem}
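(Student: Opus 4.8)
The plan is to exploit the only structural content available: that $F$ maps wires to wires and respects $\circ$, $\otimes$, $\id{}$, and $\sigma$, together with the fact recalled just before the statement, that any diagram may be cut into tensors and compositions of single boxes, identity wires, and wire crossings. First I would fix, for each of $d$ and $d'$, such a decomposition, yielding a formal expression built from these basic pieces using $\circ$ and $\otimes$; applying $F$ box-by-box produces a process $Fd$ (resp.\ $Fd'$) in $\catD$, and independence of the chosen cut is exactly what the interchange law~\eqref{eq:interchangelaw} guarantees, as already noted in the text (together with $F(A\otimes B) = FA\otimes FB$ and $FI = I$ to make the types match up).

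It then remains to show that the resulting process does not depend on which representative of the equivalence class we started from, i.e.\ that $d = d'$ as diagrams forces $Fd = Fd'$. By the adopted definition of equality of diagrams there is a finite sequence of elementary topological moves taking a decomposition of $d$ to a decomposition of $d'$, so by induction on its length it suffices to treat a single move. Each elementary move is an instance of one of the structural equations of a strict symmetric monoidal theory: changing the length of a wire or sliding a box along one is an instance of $\id{B}\circ f = f = f\circ\id{A}$; re-bracketing a composite, translating a box in the plane, or re-associating a tensor is an instance of associativity of $\circ$ or of $\otimes$ or of the interchange law; inserting or deleting the empty diagram is an instance of a unit law for $I$; and crossing, uncrossing, or sliding a box past a crossing of wires is an instance of $\sigma_{B,A}\circ\sigma_{A,B} = \id{A\otimes B}$, of the naturality of $\sigma$, or of the coherence equations for $\sigma$. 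Applying $F$ to both sides of any such equation and pushing $F$ through $\circ$ and $\otimes$ via $F(g\circ f) = Fg\circ Ff$, $F(f\otimes g) = Ff\otimes Fg$, $F\id{A} = \id{FA}$, and $F\sigma_{A,B} = \sigma_{FA,FB}$ turns it into the \emph{same} kind of equation in $\catD$ --- for example $f = \id{B}\circ f$ maps to $Ff = \id{FB}\circ Ff$, and an instance of naturality of $\sigma$ in $\catC$ maps to the corresponding instance in $\catD$ --- so the move is carried to a valid rewriting on the $\catD$-side, and the induction closes.

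The step I expect to be the main obstacle, if one insists on a fully rigorous proof rather than the intuitive argument above, is justifying that the topological notion of ``equality of diagrams'' adopted earlier coincides with provable equality under precisely the finite list of strict-symmetric-monoidal equations used in the induction: this is the coherence theorem for symmetric monoidal categories, which is the real substance of \cite{KellyLaplaza,JS:1993:GeoTenCal1}. Granting that identification, the argument is genuinely just the soundness-under-$F$ check above --- which is exactly why the diagrammatic notation pays off: the structural equations that a monoidal functor must preserve are already built into the way diagrams are drawn and deformed.
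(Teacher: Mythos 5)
Your outline is essentially the standard argument, and it is worth noting that the paper itself offers no proof of this theorem at all --- it simply defers to Selinger's survey --- so there is nothing in the text to compare against except that citation. Your reconstruction (decompose each diagram into $\circ$/$\otimes$ expressions in single boxes, identities and crossings; use the interchange law for independence of the cut; then induct on elementary topological moves, each of which is an instance of a strict symmetric monoidal equation that $F$ transports to the corresponding instance in $\catD$) is exactly the shape of the proof in the literature the paper points to, and you have correctly and honestly located the one step that carries real mathematical weight: the identification of ``equal as diagrams'' (a topological statement) with ``provably equal from the strict symmetric monoidal axioms'' (a syntactic statement). That identification is the coherence theorem of Kelly--Laplaza and Joyal--Street, and making your induction rigorous requires what you have elided: a precise combinatorial definition of \emph{diagram} (e.g.\ as an anchored or progressive polarised graph) and of \emph{elementary move}, without which ``finite sequence of elementary topological moves'' is not yet a well-defined induction parameter. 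Granting that machinery, the remainder of your argument --- pushing $F$ through each structural equation via $F(g\circ f)=Fg\circ Ff$, $F(f\otimes g)=Ff\otimes Fg$, $F\id{A}=\id{FA}$, $F\sigma_{A,B}=\sigma_{FA,FB}$ --- is correct and is precisely why the theorem is sometimes summarised as ``only the connectivity matters, and functors preserve connectivity.''
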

This theorem has many variations, and we refer the reader to
Selinger's survey article \cite{SelingerSurvey} for the full details.

\begin{remark}\label{rem:1}
In the diagrams to come, we will often use horizontal separation to
indicate separation in space and vertical separation to indicate
separation in time. For example,  
\[
\beginpgfgraphicnamed{spacetime}
\InputIfFileExists{spacetime.tikz}{}{\input{./figures/spacetime.tikz}}
\endpgfgraphicnamed 
\]
depicts the creation of two systems by the process $\Phi$, which then
become spatially separated over some time and are acted upon by
processes $f$ and $g$ respectively.  Since, as we already know,
topologically equivalent diagrams are equal, these separations have no
formal status and are purely illustrative.
\end{remark}

% %% Example -- Feynman diagrams
% \begin{example}\label{ex:feynman-diagrams}
%   Feynman diagrams also form a generalised compositional theory.  The
%   systems are the elementary particles, and the processes are the
%   possible interactions.
%   \missingfigure{feynman diagrams}
%   \todo{MORE HERE}
%   \textbf{I don't want to keep this example unless there is actually
%     an interpretation functor for this category}
% \end{example}

\subsection{Rewrites and Models}
\label{sec:mapp-rewr-models}

Since we wish to generalise over many concrete mathematical structures,
we are particularly interested in theories which can be specified
\emph{axiomatically}.  That is, to specify the theory we state (i) the
list of basic systems---typically we'll only have one basic system,
the rest being generated by the tensor product---and (ii) the basic
processes.  The processes of the theory are then \emph{all} the
diagrams which can be constructed from these processes and nothing
else.

%% Example  -- boolean circuits
\begin{example}[Boolean Circuits]
  \label{ex:boolean-circuits}
  A simple example of a compositional theory is \boolcircs, the theory
  of boolean circuits.  This theory has only one basic system, the bit
  $b$, and the basic processes are the logic gates:
  \[
  \begin{array}{ccccccc}
\beginpgfgraphicnamed{bool-and}
\InputIfFileExists{bool-and.tikz}{}{\input{./figures/bool-and.tikz}}
\endpgfgraphicnamed & &
\beginpgfgraphicnamed{bool-or}
\InputIfFileExists{bool-or.tikz}{}{\input{./figures/bool-or.tikz}}
\endpgfgraphicnamed & &
\beginpgfgraphicnamed{bool-not}
\begin{tikzpicture}
	\begin{pgfonlayer}{nodelayer}
		\node [style=notgate] (0) at (0, -0) {};
		\node [style=none] (1) at (0, 0.75) {};
		\node [style=none] (2) at (0, -0.75) {};
	\end{pgfonlayer}
	\begin{pgfonlayer}{edgelayer}
		\draw [style=none] (0) to (1.center);
		\draw (2.center) to (0);
	\end{pgfonlayer}
\end{tikzpicture}}
\endpgfgraphicnamed & &
\beginpgfgraphicnamed{bool-fan}
\InputIfFileExists{bool-fan.tikz}{}{\input{./figures/bool-fan.tikz}}
\endpgfgraphicnamed 
  \\ \\
  \mathrm{\wedge}:b \otimes b \to b &\qquad  &
  \mathrm{\vee}:b \otimes b \to b & \qquad &
  \mathrm{\neg}: b \to b & \qquad &
  \textbf{FAN} : b \to b \otimes b
  \end{array}
  \]
  A process in this theory is a circuit for computing some boolean
  function, built up from these basic gates.

  \begin{figure}[ht]
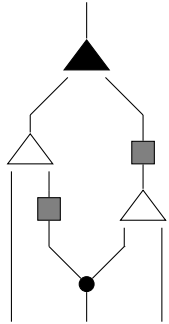

    \centering
    \ctikzfig{w9}
    \caption{A Boolean circuit to compute $(x \wedge \neg y) \vee \neg (y \wedge z)$.}
\label{fig:boolcirc-example}
\end{figure}

  It is tempting to assume that \boolcircs is related to the theory of
  Boolean functions, and we can make this precise by specifying a
  mapping $B:\boolcircs\to\bool$.  We assign $B(b) = \{0,1\}$ and
  define $B$ on the basic processes as follows:
  \begin{align*}
    B( \wedge ) &= a:\left\{\begin{array}{rcl} 
      00 & \mapsto & 0\\
      01 & \mapsto & 0\\
      10 & \mapsto & 0\\
      11 & \mapsto & 1
    \end{array}\right. &
    B( \vee ) &= o:\left\{\begin{array}{rcl} 
      00 & \mapsto & 0\\
      01 & \mapsto & 1\\
      10 & \mapsto & 1\\
      11 & \mapsto & 1
    \end{array}\right. \\
    B( \neg ) &= n:\left\{\begin{array}{rcl} 
      0 & \mapsto & 1\\
      1 & \mapsto & 0
    \end{array}\right. &
    B( \textbf{FAN} ) &= \delta:\left\{\begin{array}{rcl} 
      0 & \mapsto & 00\\
      1 & \mapsto & 11
    \end{array}\right.
  \end{align*}
\end{example}
The mapping $B$ assigns to each diagram the boolean function normally
associated with it.  However this is not the only possibility.
Consider the following mapping, $P:\boolcircs\to\bool$.  Once again
$P(b) = \{0,1\}$, but now we have the following assignment of
processes:
  \begin{align*}
    P( \wedge ) &= a:\left\{\begin{array}{rcl} 
      00 & \mapsto & 0\\
      01 & \mapsto & 0\\
      10 & \mapsto & 0\\
      11 & \mapsto & 1
    \end{array}\right. &
    P( \vee ) &= p:\left\{\begin{array}{rcl} 
      00 & \mapsto & 0\\
      01 & \mapsto & 1\\
      10 & \mapsto & 1\\
      11 & \mapsto & 0
    \end{array}\right. \\
    P( \neg ) &= i:\left\{\begin{array}{rcl} 
      0 & \mapsto & 0\\
      1 & \mapsto & 1
    \end{array}\right. &
    P( \textbf{FAN} ) &= \delta:\left\{\begin{array}{rcl} 
      0 & \mapsto & 00\\
      1 & \mapsto & 11
    \end{array}\right.
  \end{align*}
 The mapping $P$ assigns to each $d:b^n \to b$ in \boolcircs an $n$-variable
 polynomial over the ring $\mathbb{Z}_2$.  (More generally a circuit
 with multiple outputs produces a list of polynomials, one for each
 output.)

In fact, as the example of $P$ suggests, the diagrams of \boolcircs
admit an interpretation in any setting with two binary operations and
one unary operation.  This is not entirely satisfactory.  In order to
capture more than the bare syntax of any given theory we need to
impose some additional equations on the class of diagrams.  We do this
via \emph{rewrite rules}.

A rewrite rule consists of a pair of diagrams of the same type, for
example $d:A\to B$ and $d':A\to B$.  If this rule is called $r$ then
we write $r:d \rw d'$, or diagrammatically
\ctikzfig{w10}
Whenever $d$ occurs as a subdiagram of a larger diagram
$e$ then we can replace $d$ with $d'$ in $e$, written $e[d] \RW{r}
e[d']$, or in diagrams:
\ctikzfig{w11}
Rewrite rules allow us to define a notion of equality in addition to the
basic equality of diagrams.  Given a collection of rewrite rules
$\mathcal{R}$ we write $d\RW{\mathcal{R}} d'$ if there is some rewrite
sequence 
in $\mathcal{R}$ taking $d$ to $d'$.  Evidently $\RW{\mathcal{R}}$ is
a transitive relation;  let $\EQ{\mathcal{R}}$ be its symmetric,
reflexive closure.  Then we say that two processes are equal according
to $\mathcal{R}$ if their corresponding diagrams satisfy $d
\EQ{\mathcal{R}} d'$.  Typically
we'll exhibit this equivalence as a sequence of rewrites.

 \begin{example}[Boolean circuits]\label{ex:boolcircs-rw}
   Consider the following two rewrite rules for \boolcirc, expressing
   respectively the distributivity of AND over OR, and (one half of)
   De Morgan's law.
   \[
\beginpgfgraphicnamed{w12}
\InputIfFileExists{w12.tikz}{}{\input{./figures/w12.tikz}}
\endpgfgraphicnamed \qquad\qquad\qquad    %
\beginpgfgraphicnamed{w13}
\InputIfFileExists{w13.tikz}{}{\input{./figures/w13.tikz}}
\endpgfgraphicnamed
   \]
   Now we can show that a certain Boolean circuit can be transformed
   into its disjunctive normal form:
   \ctikzfig{w14}
\end{example}

Given a theory $\catC$, a set of rewrite rules $\mathcal{R}$, and
a mapping $F:\catC\to\catD$, we
can ask the following question: if $ d \EQ{\mathcal{R}} d'$ in \catC,
is it the case that $F d = F d'$ in \catD?

This property is called \emph{soundness}.  A sound  mapping $F:\catC \to
\catD$ is called an \emph{interpretation} of \catC
in \catD, and the image of \catC in \catD is called a \emph{model}.
In the example above, the mapping $B$ is sound, hence it provides an
interpretation of \boolcirc (and $\mathcal{R}$) in \bool; on the other
hand $P$ does not, due to the failure of De Morgan's law.  Generally
speaking we will always work with a given set of rewrite rules
and a given interpretation map, so we will usually say ``\emph{the}
\catD interpretation of \catC'', although in principle there could be
many.

\begin{remark}\label{rem:completeness}
  The converse property to soundness, $F d = F d'$ implies $d = d'$,
  is called \emph{completeness}.  An interpretation which is both
  sound and complete provides an isomorphism between the formally
  presented theory and its model.  While checking soundness is
  straightforward, showing completeness is often much more
  difficult\footnote{ 
    To show completeness for a rewrite theory it is typically
    necessary, but rarely sufficient, to check that the rewrite rules
    are \emph{confluent}; that is, whenever two rewrites
    simultaneously apply to a given diagram, then the choice between
    then (eventually) does not matter.  Since this property must hold
    for every diagram and every pair of rewrites, even a simple
    rewrite system can produce an extremely large number of cases,
    necessitating a computer-assisted proof.  For example see the work
    of Lafont on Boolean circuits \cite{Lafont:2003qy}.  
  }.  On the
  other hand, not having completeness means there are multiple models
  of a given theory, and the study of the differences between such
  models is often informative.
\end{remark}

Before moving on, we'll introduce an important example, and its
standard model.

\begin{example}[Quantum Circuits]\label{ex:qcircuits}
  Similar to the example of Boolean circuits, we can also view
  (post-selected) quantum circuits as generalised compositional
  theory, called \qcirc.  Again we have a single basic system, the
  qubit $Q$, and the basic processes are a collection of unitary
  gates, state preparations, and projections from which we construct
  the other quantum circuits.
  \[
  \begin{array}{ccccccc}
    \boxpoint{\ket0} && 
    \boxpoint{\ket1} &&
    \boxcopoint{\bra0} && 
    \boxcopoint{\bra1}\\
    \\
    \ket0 : I \to Q & \qquad &
    \ket1 : I \to Q & \qquad &
    \bra0 : Q \to I & \qquad &
    \bra1 : Q \to I \\
  \end{array}
  \]
  \[
  \begin{array}{ccccc}
    \boxmap{Z_\alpha} &  \qquad &
    \boxmap{X_\beta} & \qquad &
    \circCX \\
    Z_\alpha : Q \to Q & \qquad &
    X_\beta : Q\to Q & \qquad &
    \CX : Q\otimes Q \to Q \otimes Q
  \end{array}
  \]
  From these basic elements we can write down any quantum circuit.  We
  now define the standard interpretation of \qcirc into \qubit.
  \[
  \denote{Q} = \mathbb{C}^2
  \]
  \begin{align*}
    \denote{\,\boxpoint{\ket0}\,} &= \ket{0} &
    \denote{\,\boxpoint{\ket1}\,} &= \ket{1} &
    \denote{\,\boxcopoint{\bra0}\,} &= \bra{0} &
    \denote{\,\boxcopoint{\bra1}\,} &= \bra{1} \\
  \end{align*}
  \begin{align*}
    \denote{\;\boxmap{Z_{\alpha}}\;} &=
    \begin{pmatrix*}
      1 & 0 \\ 0 & e^{i\alpha}
    \end{pmatrix*} &
    \denote{\;\boxmap{X_{\beta}}\;} &=
    \begin{pmatrix*}[r]
      \cos\frac{\beta}{2} & -i \sin \frac{\beta}{2} \\ 
      -i \sin \frac{\beta}{2} & \cos\frac{\beta}{2}  
    \end{pmatrix*}
  \end{align*}
  \[
  \denote{\;\circCX\;} \;=\;
  \begin{pmatrix*}
    1 & 0 &0&0 \\ 0 & 1 & 0 & 0 \\ 0 & 0 & 0 & 1 \\ 0&0&1&0
  \end{pmatrix*}
  \]
  Thanks to the well-known universality result
  \cite{Adriano-Barenco:1995qy} this interpretation demonstrates that
  \qcirc can represent all unitary maps between qubits.  In fact,
  since we have the projections $\bra{0},\bra{1}$, all linear maps can
  be represented.  Note, however, that although all quantum circuits
  can be represented, without a set of rewrite rules \qcirc cannot
  express any non-trivial equalities between them.  We could propose
  various sound equations here, but there is no known collection of
  rewrite rules which makes \qcirc complete with respect to this
  interpretation into \qubit.  If such a set of rewrites did exist, it
  would constitute provide a presentation of the unitary group by
  generators and relations.
\end{example}

% \begin{itemize}
%   \TODOitem{Interpreting diagrams as linear maps: Penrose (tensor contraction) style}
% \end{itemize}

\subsection{The Dagger}
\label{sec:dagger}

Now we introduce the \emph{dagger}.  This is simply an operation on
the processes of a theory, sending every process $f:A\to B$ to another
process $f^\dag:B\to A$. We call $f^\dag$ the \emph{adjoint} of $f$.
In the graphical calculus, we represent the dagger by a flip in the
horizontal axis:
\[ 
\left(\map{f}\right)^\dagger = \mapdag{f} 
\] 
Note that we have made the box asymmetric to make this flipping
evident.  For more general diagrams, the dagger flips a diagram upside
down, preserving all the internal structure.  Taking this claim at
face value, we can derive the key properties of the dagger:
\begin{align*}
  (f^\dag)^\dag &= f 
  & \left(\mapdag{f}\right)^\dag  \;&=\; \map{f} 
  \\ \\
  (g\circ f)^\dag &= f^\dag \circ g^\dag 
  & \left( %
\beginpgfgraphicnamed{dagger-comp-lhs}
\InputIfFileExists{dagger-comp-lhs.tikz}{}{\input{./figures/dagger-comp-lhs.tikz}}
\endpgfgraphicnamed \right)^\dag \;&=\;
\beginpgfgraphicnamed{dagger-comp-rhs}
\InputIfFileExists{dagger-comp-rhs.tikz}{}{\input{./figures/dagger-comp-rhs.tikz}}
\endpgfgraphicnamed
  \\ \\
  (f \otimes g)^\dag &= f^\dag \otimes g^\dag 
  & \left( \map{f} \; \map{g} \right)^\dag \;&=\; \mapdag{f} \;
  \mapdag{g}
  \\\\
  \id{A}^\dag &= \id{A} 
  & \left( \idfig{A}\; \right)^\dag \;&=\; \idfig{A} 
\\\\
% temp change to get a better page break
%
% \end{align*}
% \begin{align*}
%
% end temp page break
%
\sigma_{A,B}^\dag &= \sigma_{B,A} 
  & \left( \; %
\beginpgfgraphicnamed{swap}
\InputIfFileExists{swap.tikz}{}{\input{./figures/swap.tikz}}
\endpgfgraphicnamed\;\right)^\dag \;&=\; %
\beginpgfgraphicnamed{swap}
\InputIfFileExists{swap.tikz}{}{\input{./figures/swap.tikz}}
\endpgfgraphicnamed
\end{align*}
The dagger allows two important concepts to be defined.
\begin{definition}\label{def:unitary}
  A process $f:A\to B$ is called \emph{unitary} if $f\circ f^\dag =
  \id{B}$ and $f^\dag \circ f = \id{A}$.  A process is called
  \emph{self-adjoint} when $f^\dag = f$.
\end{definition}

\begin{example}[Finite-dimensional Hilbert spaces] \label{ex:unitary-fdhilb}
  The theory \fhilb admits a dagger:  it is the usual adjoint of a
  linear map.  In this theory, the abstract definitions of unitarity
  and self-adjointness coincide with the usual one.
\end{example}

\begin{example}\label{ex:unitarity-frel}
  In the theory \frel, the dagger of a relation $r:X\to Y$ is defined
  by the converse relation, i.e.
  \[
  r^\dag = \{ (y,x) \;|\; (x,y) \in r\}
  \]
  Here, unitary processes are exactly those relations which encode
  permutations.  A relation is self-adjoint whenever it is symmetric.
  Hence the self-adjoint unitaries in \frel are exactly the
  permutations of order 2.
\end{example}

We extend the definition of mapping to demand that it also preserves
the dagger.  That is, given two theories with dagger, we require that
a map $F:\catC \to \catD$ satisfies
\[
F(f^\dag) = (Ff)^\dag
\]

\begin{example}[Quantum Circuits]\label{ex:unitary-qcirc}
  We define a dagger on \qcirc as follows:
  \begin{align*}
    \left( \boxpoint{\ket0} \right)^\dag &= \boxcopoint{\bra0} &
    \left( \boxpoint{\ket1} \right)^\dag &= \boxcopoint{\bra1}\vspace{2mm} \\
    \left( \boxcopoint{\bra0}\right)^\dag &=  \boxpoint{\ket0}&
    \left( \boxcopoint{\bra1}\right)^\dag &=  \boxpoint{\ket1}\vspace{2mm}\\
    \left( \boxmap{X_{\alpha}} \right)^\dag &= \boxmap{X_{-\alpha}} &
    \left( \boxmap{Z_{\beta}}\right)^\dag &=  \boxmap{Z_{-\beta}}
%    \\
%    \left( \circCX\right)^\dag &=  \circCX \\
  \end{align*}
  \[
  \left( \circCX\right)^\dag =  \circCX
  \]
  It's now easy to check that the interpretation map introduced
  earlier, $\denote{\cdot}:\qcirc \to \qubit$ preserves the dagger as
  required.
\end{example}

\begin{remark}\label{rem:no-dagger-for-boolcircs}
  The theory of Boolean circuits, \boolcirc, does not admit a dagger.
  However, we could formally add new basic processes corresponding to
  the adjoints of the basic processes of \boolcirc and thus define a
  new theory, $\boolcirc^\dag$.  Since the converse of a function is
  not in general a function, the interpretation $B:\boolcirc \to
  \bool$ no longer makes sense.  Instead we must interpret
  $\boolcirc^\dag$ over \frelii, that is as Boolean relations rather
  than functions.  In this case $B$ again defines a valid
  interpretation $\boolcirc^\dag \to \frelii$.  The resulting theory
  is a model of non-deterministic computation.
\end{remark}

In any theory, a process of type $p:I\to A$ is called a \emph{point},
or sometimes a \emph{state}, of $A$.  Dually, a process of type $e:A
\to I$ is called a \emph{co-point}, or sometimes an \emph{effect} on
$A$.  For example, in \fdhilb the points $\psi:I\to A$ are in
one-to-one correspondence with the vectors of $A$, while in \frel a
point $s:I\to X$ is precisely a subset of $X$.

In a theory with a dagger the set of points is isomorphic to the set
of copoints (or in other language, for every state there is a
corresponding effect and vice versa).  This allows us to define
another important concept.

\begin{definition}\label{def:inner-prod}
  Given two points $\psi,\phi :I \to A$ we define their \emph{inner
    product} as $\phi^\dag \circ \psi$.  Dually, the outer product is
  defined as $\phi \circ \psi^\dag$.
\end{definition}

As one may expect, the inner product is always a scalar. The
diagrammatic language automatically allows the same tricks---and
more--- as Dirac notation does in Hilbert spaces.  Indeed one can view
the diagrammatic language as a 2-dimensional generalisation of Dirac
notation.

\begin{example}[Finite-dimensional Hilbert spaces]
  In \fdhilb the inner product defined by the dagger, is exactly
  the usual inner product $\innp{\phi}{\psi}$.
\end{example}
\begin{example}[Sets and Relations]
  In \frel the inner product $r^\dag \circ s$ is 0 if the $r$
  and $s$ are disjoint as subsets, and 1 otherwise.
\end{example}

%%%%%%%%%%%%%%%%%%%%%%%%%%%%%%%%%%%
%%%%%%%%%%%%%%%%%%%%%%%%%%%%%%%%%%%
%%%%%%%%%%%%%%%%%%%%%%%%%%%%%%%%%%%
\section{Pure state quantum mechanics}
%%%%%%%%%%%%%%%%%%%%%%%%%%%%%%%%%%%
%%%%%%%%%%%%%%%%%%%%%%%%%%%%%%%%%%%
%%%%%%%%%%%%%%%%%%%%%%%%%%%%%%%%%%%

\subsection{The elements of an operational theory}
\label{sec:elem-an-oper}

It is remarkable that the the basic language of quantum
mechanics---states, effects, unitarity, self-adjointness, inner
products, tensor products---can all be defined in the abstract setting
of generalised compositional theories.  We now have enough material to
describe a formal operational framework for pure state quantum
mechanics in purely diagrammatic terms.

\begin{itemize}
\item A \emph{preparation} is any process which produces a
  state;  that is to say it is process of type $p:I\to A$.
  \[
\beginpgfgraphicnamed{point}
\begin{tikzpicture}[dotpic]
	\begin{pgfonlayer}{nodelayer}
		\node [style=point] (0) at (0, -0.5) {$p$};
		\node [style=none] (1) at (0, 0.5) {};
	\end{pgfonlayer}
	\begin{pgfonlayer}{edgelayer}
		\draw [swap] (0) to node[wire label]{$A$} (1.center);
	\end{pgfonlayer}
\end{tikzpicture}}
\endpgfgraphicnamed
  \]
  Preparations are not restricted to producing single systems; a
  preparation process of type $I\to A_1 \otimes \cdots A_n$ is called
  \emph{multipartite}.  Of course, multipartite preparations need not
  be separable.
  \[
\beginpgfgraphicnamed{pointpmulti1}
\InputIfFileExists{pointpmulti1.tikz}{}{\input{./figures/pointpmulti1.tikz}}
\endpgfgraphicnamed\qquad\qquad%
\beginpgfgraphicnamed{pointpmulti2}
\InputIfFileExists{pointpmulti2.tikz}{}{\input{./figures/pointpmulti2.tikz}}
\endpgfgraphicnamed
  \]
  When interpreted in \fdhilb each preparation process yields a ray in
  some Hilbert space, which, ignoring global phase, we may identify
  with a specific quantum state.  It may happen, depending on the
  equations of the formal theory, that different preparation
  processes produce the same state.
\item A \emph{transformation} is any process which
  acts on states and produces new states, and which is unitary:
  \[
  \boxmap{U}
  \]
  Once again, transformations may act on one or many systems at the
  same time.
  \ctikzfig{big-unitary}
\item Measurements are processes which accept quantum
  inputs and produce classical information about the state which was
  input.  Since, for now, our theory only has pure states, we will
  work with \emph{non-degenerate} \emph{post-selected}
  measurements\footnote{
    In other words, rank 1 projectors.
};
  i.e. we know that a definite outcome has occurred, and that outcome
  corresponds to a definite quantum state.  Therefore, measurements
  are one-dimensional effects, represented as co-points:
  \[
\beginpgfgraphicnamed{copoint}
\begin{tikzpicture}[dotpic]
	\begin{pgfonlayer}{nodelayer}
		\node [style=copoint] (0) at (0, 0.5) {$v$};
		\node [style=none] (1) at (0, -0.5) {};
	\end{pgfonlayer}
	\begin{pgfonlayer}{edgelayer}
		\draw [swap] (0) to node[wire label]{$A$} (1.center);
	\end{pgfonlayer}
\end{tikzpicture}}
\endpgfgraphicnamed\qquad  %
\beginpgfgraphicnamed{copointpmulti2}
\InputIfFileExists{copointpmulti2.tikz}{}{\input{./figures/copointpmulti2.tikz}}
\endpgfgraphicnamed
  \]
  The classical information is implicit in the choice of
  copoint, and hence not represented.  Since copoints do not have
  quantum outputs, these processes correspond to \emph{demolition
    measurements}, where the original system is consumed by the
  measurement process.  However, by combining an effect with the
  corresponding state preparation we can also represent non-demolition
  measurements:
  \[
\beginpgfgraphicnamed{projectordecomp}
\InputIfFileExists{projectordecomp.tikz}{}{\input{./figures/projectordecomp.tikz}}
\endpgfgraphicnamed \qquad   %
\beginpgfgraphicnamed{projectordecompmulti}
\InputIfFileExists{projectordecompmulti.tikz}{}{\input{./figures/projectordecompmulti.tikz}}
\endpgfgraphicnamed
  \]
  To properly represent the non-determinism of quantum measurements we
  need to consider mixed states; this is dealt with in Section
  \ref{sec:mixed-stat-meas}.  More general measurements can be
  represented within the theory, however they will not be described
  here.
\end{itemize}

This basic recipe---preparations, transformations, and
measurements---allows any experimental setup to be described in terms
of the processes which realise it.  More precisely, since we use
post-selected measurements, the diagram really represents a \emph{run}
of the experiment where a certain outcome occurred.  We call an
experiment \emph{closed} when it has no external inputs or outputs.
Any closed experiment is necessarily described by a process of type
$x:I \to I$; that is, a scalar.  This scalar is the abstract
counterpart to the probability amplitude for performing the process
and observing the specified result.  Indeed, when such a diagram is
interpreted in \fdhilb, the result is exactly the probability
amplitude.

\begin{example}\label{ex:quantum-circs-bis}
  The theory \qcircs has the structure described above, and we
  can use it to define a simple experiment.  For example, the diagram
  below corresponds to preparing a qubit in the $\ket{0}$ state,
  applying a unitary gate to it, and, upon measuring in the
  computational basis, finding that the qubit is in the state
  $\ket{1}$.
  \ctikzfig{experiment-i}
  Using the interpretation map $\denote{\cdot}:\qcirc\to\qubit$ we can
  calculate the amplitude for this experimental result.
  \[
\beginpgfgraphicnamed{experiment-ii}
\InputIfFileExists{experiment-ii.tikz}{}{\input{./figures/experiment-ii.tikz}}
\endpgfgraphicnamed \quad\mapsto\quad \bra{1} \circ \frac{1}{\sqrt{2}}
  \begin{pmatrix*}[r]
      1 & -i  \\  -i  & 1  
    \end{pmatrix*} 
  \circ \ket{0} \quad= \quad \frac{-i}{\sqrt{2}}
  \]
\end{example}

To summarise the elements of the framework, a formal generalised
compositional theory consists of:
\begin{itemize}
\item A collection of basic systems and processes, corresponding to
  the available ``lab equipment''.
\item The collection of all diagrams constructed from the basic
  processes, corresponding to every possible experiment that could be
  built from the given equipment.  We consider diagrams modulo
  topological equivalence: equivalent diagrams correspond to the same
  experiment.
\item A (possibly empty) collection of axioms, presented as rewrite rules
  over diagrams, which specify behavioural equivalence of processes.
  These rules tell us when a piece of the experimental setup can safely
  be replaced by another without changing the result of the
  experiment.
\item Finally, given the above, we'll usually consider (sound)
  \emph{interpretation maps} of the formal theory into some concrete
  mathematical structure, such as Hilbert spaces.
\end{itemize}
So far we have been operating at an extremely high level of
generality.  To focus our attention on quantum systems we will now
gradually introduce more structure to our theories.  We 
identify certain structural features of the Hilbert space
presentation of quantum mechanics, and provide an abstract
realisation of those features in terms of basic processes and
equations, whose behaviour reproduces various quantum phenomena in
the abstract setting of generalised compositional theories.

The rest of this section will layout which
basic processes and equations we will need to realise.
As we do so, we'll say goodbye to
some of the models introduced earlier, but the two most important
ones, \fhilb and \frel, will still be applicable.

\subsection{ Duals}
\label{sec:duals}

The next piece of structure that will be required is the
existence of \emph{duals}\footnote{For the experts in category theory,
  this additional structure can be summed up by saying we operate in a
  dagger-compact category, rather than just a symmetric monoidal
  category.}.

\begin{definition}\label{def:duals}
  A system $A$ has a \emph{dual} if there exists a system $A^*$ and
  processes 
  \[
  e_A:I\to A^*\otimes A \quad \text{ and } \quad d_A:A \otimes A^* \to I
  \]
  such that  we have the following equations:
  \[
  (d_A \otimes \id{A}) \circ (\id{A} \otimes e_A) = \id{A}
  \qquad
  (\id{A^*} \otimes d_A) \circ (e_A \otimes \id{A^*}) = \id{A^*}
  \]
\end{definition}
Since this definition is rather hard to parse we will immediately move
to its diagrammatic form.  We indicate the dual system $A^*$ by a wire
labelled by $A$ but directed in the opposite direction.  The maps
$d_A$ and $e_A$ are represented by wires with half turns, henceforth
``caps'' and ``cups''.  The equations above then take the form of
``straightening wires'':
\ctikzfig{cat_cap_cup}
In general a system might have more than one dual, but they are all
guaranteed to be isomorphic.  We'll assume that every system has a
given dual, and in particular $(A \otimes B)^* = B^* \otimes A^*$, in
which case $d_{A\otimes B}$ and $e_{A \otimes B}$ take the form of
nested caps and cups.  Furthermore, we'll assume that the double dual
$A^{**} = A$.  These simplifications automatically hold in any theory
presented diagrammatically; taking them as the general case saves a
lot of bureaucracy.

\begin{example}[Finite dimensional Hilbert spaces]
  Let $A$ be a Hilbert space of dimension $d$, then $A^*$ is the usual
  dual space; that is, the space of linear functionals from $A$ to the
  complex numbers.  Supposing that $\{\ket{a_i}\}$ is a basis for the
  space $A$, then the cup and cap are given by the linear maps
  \[
  e_A: 1\mapsto \sum_i \bra{a_i}\otimes \ket{a_i}\;,
  \qquad
  d_A :  \sum_i \ket{a_i} \otimes \bra{a_i} \mapsto 1\;.
  \]
  However since we are in a finite-dimensional setting we could also
  choose $A^* =A$;  specialising to the case of qubits, we can now
  view the cup and cap as the preparation and projection onto a Bell
  state:
  \[
  e_Q = \ket{00} + \ket{11} 
  \qquad
  d_Q = \bra{00} + \bra{11} 
  \]
  Recall the quantum teleportations protocol: Alice has some unknown
  state that she wishes to send to Bob, but they do not share a
  quantum channel.  However they have a classical channel, and have
  previously shared a Bell pair.  In order to send her qubit to Bob,
  Alice measures her two qubits in the Bell basis, and transmits the
  result to Bob.  Now Bob simply applies some unitary map (depending
  on Alice's outcome) to his half of the Bell pair to recover the
  qubit that Alice wanted to send.  Since, for the moment, we are
  operating in a post-selected setting, we'll assume that Alice
  observes the outcome corresponding to the state $\Phi_+ = (\ket{00}
  + \ket{11})/\sqrt2)$ at her measurement.  In this case Bob need do
  nothing to his qubit.  The whole set up is shown below:
  \ctikzfig{teleport-i} 
  Knowing that the projection onto $\Phi_+$ is
  just the effect $d_Q$, we can rewrite the protocol as shown and
  demonstrate the protocol purely diagrammatically:
  \ctikzfig{teleport-ii} 
\end{example}

\begin{example}[Sets and Relations]
  In \frel, the  dual of a set $X$ is just the same set $X$ again.
  The cup is given by the ``name'' of the identity:
  \[
  e_X = \{ (*,(x,x)) \;|\; x \in X\}
  \]
  while the cap, $d_X$ is just the converse of $e_X$.
\end{example}

Using caps and cups, we can turn any process $f : A \rightarrow B$
into a process on the dual objects going in the opposite direction:
$f^* : B^* \rightarrow A^*$. 
\ctikzfig{bend_wires}
This is sometimes called the transpose of $f$, but this terminology
can be misleading. In $\catFHilb$, $f^*$ is the map that takes a
linear form $\bra\xi \in B^*$ to $\bra\xi\!f \in A^*$. We refer to
this map simply as the \textit{upper-star} of $f$.  Clearly, we have
$f^{**} = f$.  

It is also required that the dagger and the duals interact nicely.
More precisely we have the equations:
\[
\left(\, %
\beginpgfgraphicnamed{cup}
\begin{tikzpicture}[dotpic]
	\begin{pgfonlayer}{nodelayer}
		\node [style=none] (0) at (-0.75, 0) {};
		\node [style=none] (1) at (0.5, 0) {};
	\end{pgfonlayer}
	\begin{pgfonlayer}{edgelayer}
		\draw [style=diredge, in=-90, out=-90, looseness=1.75] (0.center) to (1.center);
	\end{pgfonlayer}
\end{tikzpicture}}
\endpgfgraphicnamed \,\right)^\dag \;=\;\; %
\beginpgfgraphicnamed{cap-twist}
\InputIfFileExists{cap-twist.tikz}{}{\input{./figures/cap-twist.tikz}}
\endpgfgraphicnamed
\qquad
\left(\, %
\beginpgfgraphicnamed{cap-FU}
\begin{tikzpicture}[dotpic]
	\begin{pgfonlayer}{nodelayer}
		\node [style=none] (0) at (-0.75, 0) {};
		\node [style=none] (1) at (0.5, 0) {};
	\end{pgfonlayer}
	\begin{pgfonlayer}{edgelayer}
		\draw [style=diredge, in=90, out=90, looseness=1.75] (0.center) to (1.center);
	\end{pgfonlayer}
\end{tikzpicture}}
\endpgfgraphicnamed \,\right)^\dag \;=\;\; %
\beginpgfgraphicnamed{cup-twist}
\InputIfFileExists{cup-twist.tikz}{}{\input{./figures/cup-twist.tikz}}
\endpgfgraphicnamed
\]
In any theory with both a dagger and duals, we can define a third
operation, the \textit{lower-star} of $f$ as $f_* := (f^\dagger)^* =
(f^*)^\dagger$.  Again this is involutive, i.e. $f_{**} = f$.  We'll
return to the uses of the upper and lower stars in Section
\ref{sec:mixed-stat-meas}.

Finally, the cup and cap can be used to define a \emph{trace} in purely
diagrammatic terms:
\[
\Tr (\map{f} ) = %
\beginpgfgraphicnamed{trace-i}
\InputIfFileExists{trace-i.tikz}{}{\input{./figures/trace-i.tikz}}
\endpgfgraphicnamed
\]
Checking the Hilbert space interpretation, it is easy to see that this
coincides with the usual definition.  
\[
\Tr( \map{f} ) = (\sum_i \bra{ii}) \circ (\id{A} \otimes f) \circ (\sum_j
\ket{jj})
= \sum_i \bra{i} f \ket{i} = \sum_i f_{ii}
\]
In the diagrammatic form it is
trivial to prove that trace is invariant under cyclic permutation:
\[
\Tr\left( \; %
\beginpgfgraphicnamed{dagger-comp-lhs}
\InputIfFileExists{dagger-comp-lhs.tikz}{}{\input{./figures/dagger-comp-lhs.tikz}}
\endpgfgraphicnamed \;\right)
= %
\beginpgfgraphicnamed{trace-ii}
\InputIfFileExists{trace-ii.tikz}{}{\input{./figures/trace-ii.tikz}}
\endpgfgraphicnamed 
= %
\beginpgfgraphicnamed{trace-iii}
\InputIfFileExists{trace-iii.tikz}{}{\input{./figures/trace-iii.tikz}}
\endpgfgraphicnamed
= %
\beginpgfgraphicnamed{trace-iv}
\InputIfFileExists{trace-iv.tikz}{}{\input{./figures/trace-iv.tikz}}
\endpgfgraphicnamed
= \Tr \left(\;%
\beginpgfgraphicnamed{trace-v}
\InputIfFileExists{trace-v.tikz}{}{\input{./figures/trace-v.tikz}}
\endpgfgraphicnamed\;\right)
\]
The partial trace can be  defined analogously.
\[
\Tr^A_B \left(\; %
\beginpgfgraphicnamed{trace-partial-i}
\InputIfFileExists{trace-partial-i.tikz}{}{\input{./figures/trace-partial-i.tikz}}
\endpgfgraphicnamed \;\right) \;=\; %
\beginpgfgraphicnamed{trace-partial}
\InputIfFileExists{trace-partial.tikz}{}{\input{./figures/trace-partial.tikz}}
\endpgfgraphicnamed
\]

By adding duals we have enlarged the class of possible diagrams, since
wires may now loop back from inputs to outputs and vice versa, but the
basic principle of diagram equality does not change: \textbf{Two
  diagrams are considered equal if one can be smoothly transformed to
  another, by bending, stretching, or crossing wires, and moving boxes
  around}.  With this in mind we can update the key theorem.

\begin{theorem}[Fundamental Theorem of Diagrams with Daggers and Duals]
  \label{thm:fund-thm-diags-i}
  Given any two generalised compositional theories \catC and \catD
  with daggers and duals, 
  and a map $F:\catC \to \catD$, for any two diagrams $d$ and $d'$ in
  \catC, if $d = d'$ \emph{as diagrams} then $Fd = Fd'$ in \catD.
\end{theorem}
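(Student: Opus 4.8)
The plan is to follow the strategy used for the earlier Fundamental Theorem of Diagrams (the version without daggers and duals), upgrading each ingredient to the richer structure. The crucial observation is that a map $F$ between theories \emph{with daggers and duals} is required to preserve, on the nose, every kind of generator out of which a diagram is built: boxes (via the given assignment $f \mapsto Ff$), identity wires, swaps, and the dagger, and — implicit in the hypothesis — also the duals, so that $F(A^*) = (FA)^*$, $F(e_A) = e_{FA}$, and $F(d_A) = d_{FA}$. Since any diagram $d$ in \catC is obtained from such generators by iterated $\circ$ and $\otimes$, one defines $Fd$ by replacing each generator by its image; just as in the cup-free case, where the interchange law (\ref{eq:interchangelaw}) guaranteed independence of the chosen partition, here the analogous coherence equations guarantee that $Fd$ does not depend on how the diagram is cut up.

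First I would make precise what ``equal as diagrams'' now means: two diagrams are identified when one can be obtained from the other by a topological deformation that is additionally allowed to bend wires around through cups and caps, on top of the moves already permitted (crossing and uncrossing wires, stretching wires, sliding boxes). The theorem then asserts that $F$ descends to deformation-equivalence classes. By the coherence theorem for dagger-compact categories (see Selinger's survey \cite{SelingerSurvey}), this topological equivalence on formal diagrams coincides exactly with equality of the associated morphisms, both in \catC and in \catD; hence ``$d = d'$ as diagrams'' is literally ``$d = d'$ as morphisms of \catC''. Since $F$ is a structure-preserving functor it sends equal morphisms to equal morphisms, and we are done.

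If one prefers to avoid a black-box appeal to coherence, I would instead argue by a finite sequence of elementary moves. Any deformation between two diagrams factors into finitely many steps of a few basic kinds: naturality of the swap, a ``yanking'' move straightening a cup–cap snake (the equations of Definition \ref{def:duals}), sliding a box past a cup or a cap, and the dagger–duals compatibility equations displayed above. For each elementary move one checks directly that $F$ carries its left-hand side to its right-hand side, using in turn that $F$ preserves swaps, preserves $e_A$ and $d_A$, preserves composition and tensor, and preserves the dagger. Composing these verifications yields $Fd = Fd'$ whenever $d = d'$ as diagrams.

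The main obstacle — the only point that requires genuine work rather than bookkeeping — is the fact that the purely topological notion of diagram equality is faithfully captured by the algebraic axioms of dagger-compact categories, i.e. the coherence theorem itself. This is precisely what we choose to cite rather than reprove, and it is where the substance of the many ``variations'' of the theorem alluded to in the remark after the statement is concentrated; once it is granted, the preservation of each generator and of each structural equation by $F$ is entirely routine.
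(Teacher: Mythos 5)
The paper offers no proof of this theorem at all --- it defers entirely to Selinger's survey --- and your proposal rests on exactly the same citation for the only non-routine step (the coherence theorem for dagger-compact structure), so it is essentially the same approach, with the routine bookkeeping about $F$ preserving generators, cups, caps, and daggers correctly filled in. One small caution: you only need, and should only claim, the soundness direction of coherence (topologically equal diagrams denote equal morphisms in any such \catC); the converse fails outside the free theory, but since your argument uses only the sound direction it stands as written.
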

Once again, the full details are found in \cite{SelingerSurvey}.

\begin{remark}\label{rem:duals-and-functors}
  We need not demand any additional conditions on the class of
  mappings to guarantee the preservation of duals; since they are
  defined in terms of processes, the structure is automatically
  preserved.
\end{remark}

%%%%%%%%%%%%%%%%%%%%%%%%%%%%%%%%%%%%%%%%%%%%%%
%%%%%%%%%%%%%%%%%%%%%%%%%%%%%%%%%%%%%%%%%%%%%%
%%%%%%%%%%%%%%%%%%%%%%%%%%%%%%%%%%%%%%%%%%%%%%
\subsection{Observable Structures}
\label{sec:observ-struct}
%%%%%%%%%%%%%%%%%%%%%%%%%%%%%%%%%%%%%%%%%%%%%%
%%%%%%%%%%%%%%%%%%%%%%%%%%%%%%%%%%%%%%%%%%%%%%
%%%%%%%%%%%%%%%%%%%%%%%%%%%%%%%%%%%%%%%%%%%%%%

An \emph{observable} yields classical data from a physical system
\cite{CPaqPav}. The key distinction between classical and quantum data
is that classical data may be freely copied and deleted, while this is
impossible for quantum data, due to the no-cloning \cite{Dieks, WZ}
and no-deleting \cite{Pati} theorems.

In quantum mechanics, an observable is represented by a self-adjoint
operator.  This (non-degenerate) operator encodes certain classical
data as its orthonormal basis of eigenstates, the possible outcomes of
the corresponding measurement.  Note that if a quantum state is known to
be a member of a given orthonormal basis, such as the eigenbasis
$\{\ket{a_i}\}$ of some observable, then it \emph{can} be copied and
deleted via the maps
\[
\delta : \ket{a_i} \mapsto \ket{a_i} \otimes \ket{a_i}
\quad\text{ and }\quad
\epsilon : \ket{a_i} \mapsto 1.
\]
Hence we can view the classical content of a quantum measurement as
the possibility to copy and delete its set of outcomes.  We will
axiomatise quantum observables by describing the copying and deleting
operations as algebraic structures inside a general compositional
theory.  The relevant structure is called a \emph{$\dagger$-special
  commutative Frobenius algebra}, and we will now build up its definition
one piece at a time.

\begin{definition}\label{def:monoid}
  A \emph{commutative monoid} in $\catC$ is a triple $(X, \mu ,
  \eta)$, where $\mu$ and $\eta$ are maps
  \[
    \mu : X \otimes X  \rightarrow X\qquad\qquad\qquad
    \eta :I \rightarrow X
    \]
    which we write graphically as $\mu = \whitemult$, $\eta =
    \whiteunit$.  These operations satisfy the following equations:
  \[
\beginpgfgraphicnamed{associativity}
\InputIfFileExists{associativity.tikz}{}{\input{./figures/associativity.tikz}}
\endpgfgraphicnamed,
  \qquad\quad  %
\beginpgfgraphicnamed{unit}
\InputIfFileExists{unit.tikz}{}{\input{./figures/unit.tikz}}
\endpgfgraphicnamed,
  \qquad\quad  %
\beginpgfgraphicnamed{commutativity}
\InputIfFileExists{commutativity.tikz}{}{\input{./figures/commutativity.tikz}}
\endpgfgraphicnamed\;\;.
  \]
\end{definition}

\begin{remark}\label{rem:monoid-explain}
  The process $\mu$ can be understood as a \emph{multiplication} for
  systems of type $X$; the first and last equations assert that this
  operation is associative and commutative respectively.  The process
  $\eta$ is the \emph{unit} for this multiplication: the second
  equation asserts that multiplication by the unit is simply the
  identity.
\end{remark}

The dual to a monoid is a \emph{comonoid}.

\begin{definition}\label{def:comonoid}
  A \emph{comonoid} in a theory \catC consists of a triple
  $(X,\delta,\epsilon)$ where $\delta$ and $\epsilon$ are processes
  \[
  \delta:X \to X \otimes X \qquad\qquad\qquad\epsilon:X \to I
  \]
  satisfying the equations of Definition~\ref{def:monoid} but in
  reverse, viz:
  \[
\beginpgfgraphicnamed{coassoc}
\InputIfFileExists{coassoc.tikz}{}{\input{./figures/coassoc.tikz}}
\endpgfgraphicnamed
  \qquad\qquad
\beginpgfgraphicnamed{counit}
\InputIfFileExists{counit.tikz}{}{\input{./figures/counit.tikz}}
\endpgfgraphicnamed
  \]
  A comonoid is \emph{cocommutative} if it satisfies:
  \[
\beginpgfgraphicnamed{cocomm}
\InputIfFileExists{cocomm.tikz}{}{\input{./figures/cocomm.tikz}}
\endpgfgraphicnamed.
  \]
\end{definition}
The processes $\delta$ and $\epsilon$ are called the
\emph{comultiplication} and \emph{counit} respectively.

\begin{example}\label{ex:fdhilb-comonoid}
  We have already met the basic example of a comonoid: in \fdhilb, for
  any orthonormal basis $\{x_i\}_i$ of a space $X$ we obtain a
  comonoid via `copying' and `erasing' processes mentioned above:
  \[
  \delta : x_i \mapsto x_i \otimes x_i \qquad \epsilon : x_i \mapsto 1
  \]
\end{example}

\begin{remark}
  Thanks to the dagger, if $(X,\delta,\epsilon)$ is a comonoid then
  $(X,\delta^\dagger,\epsilon^\dagger)$ is automatically a monoid, and
  vice versa.
\end{remark}

Generally speaking, a process is called a \emph{homomorphism} if it
preserves some algebraic structure.  In the context of GCTs, such
preservation is usually expressed by a process commuting with another
which reifies that structure.  For example:

\begin{definition}\label{def:comonoid-homomorphism}
  Given two comonoids $(X,\delta,\epsilon)$ and
  $(X',\delta',\epsilon')$, a \emph{comonoid 
    homomorphism} is a process $f:X\to X'$ such that 
  \[
  \delta' \circ f = (f \otimes f) \circ \delta
  \quad\text{ and }\quad
  \epsilon' \circ f = \epsilon\;.
  \]
  \[
\beginpgfgraphicnamed{comonoid-homo-i}
\InputIfFileExists{comonoid-homo-i.tikz}{}{\input{./figures/comonoid-homo-i.tikz}}
\endpgfgraphicnamed
  \qquad\qquad
\beginpgfgraphicnamed{comonoid-homo-ii}
\InputIfFileExists{comonoid-homo-ii.tikz}{}{\input{./figures/comonoid-homo-ii.tikz}}
\endpgfgraphicnamed
  \]
Monoid homomorphisms are defined similarly.
\end{definition}

\begin{remark}
  The definition above is the most general, but we will frequently
  encounter cases where $f:X\to X$ is homomorphism between two comonoids
  defined on the same object, or from a single comonoid to itself.
\end{remark}

The structures of greatest interest for this paper are algebras
containing both monoids and comonoids.

\begin{definition}\label{def:frobenius-alg}
  A \emph{commutative Frobenius algebra} is a 5-tuple
  $(X,\delta,\epsilon,\mu,\eta)$ where 
  \begin{enumerate}
  \item $(X,\delta,\epsilon)$ is a cocommutative comonoid;
  \item $(X,\mu,\eta)$ is a commutative monoid; and,
  \item $\delta$ and $\mu$ satisfy the following equations:
    \[
\beginpgfgraphicnamed{frob-condition}
\InputIfFileExists{frob-condition.tikz}{}{\input{./figures/frob-condition.tikz}}
\endpgfgraphicnamed
    \] 
  \end{enumerate}
\end{definition}

Finally, we can define:

\begin{definition}\label{def:dscfa}
  A \emph{$\dagger$-special Frobenius algebra} ($\dagger$-SCFA) is a commutative Frobenius algebra
  \begin{align*}
  \mathcal O_{\!\smallwhitedot} = ( &
    \whitemu : X \otimes X \rightarrow X, \ \ 
    \whiteeta : I \rightarrow X, \\
  & \whitedelta : X \rightarrow X \otimes X, \ \ 
    \whiteepsilon : X \rightarrow I)
  \end{align*}
  such that $\whitedelta = (\whitemu)^\dagger$, $\whiteepsilon = (\whiteeta)^\dagger$ and
\beginpgfgraphicnamed{special}
\InputIfFileExists{special.tikz}{}{\input{./figures/special.tikz}}
\endpgfgraphicnamed.
\end{definition}

The preceding definitions may seem rather opaque, and not fully
justified by the intuition that a quantum observable is somehow
encoded by the maps which copy and delete its eigenstates.  However
complex it may appear (and we shall shortly simplify it), the
importance of Definition~\ref{def:dscfa} rests on the fact
\cite{CPV2008} that in \fdhilb \emph{every} 
$\dagger$-SCFA arises from a comonoid defined by copying an
orthonormal basis as described above.  Since orthonormal bases define
non-degenerate quantum observables, $\dagger$-SCFAs are also called
\emph{observable structures}.

Concretely, given an orthonormal basis $\{\ket{i}\}_i$ then
$\whitedelta:: \ket{i}\mapsto \ket{ii}$ defines an observable, and
all observables are of this form for some orthonormal basis.  The resulting
intuition is that $\whitedelta$ is an operation that  `copies' basis
vectors, and  that $\whiteepsilon$  `erases' them \cite{CPaqPav}. 
 We will use the symbolic representation
$(\whitemu, \whiteeta, \whitedelta, \whiteepsilon)$ and the pictorial
one $(\whitemult, \whiteunit, \whitecomult, \whitecounit)$
interchangeably. 

\begin{example}[Sets and Relations]\label{ex:observables-in-Rel}
  Perhaps surprisingly, ${\bf FRel}$ also has many distinct observable
  structures, which have been classified by Pavlovic \cite{Dusko}.
  Even on the two element set there are two, namely
\begin{align*}
\whitedelta: & \; i \mapsto \{(i, i)\}
\\
\graydelta: & \;\left\{
\begin{array}{lcr}
0 & \mapsto & \{(0,0), (1,1)\}  \\
1 & \mapsto & \{(0,1), (1,0)\}
\end{array}\right.
\end{align*}
In fact, this pair is \emph{strongly complementary} in the sense to be
explained in Section~\ref{Sec:complementarity}.
\end{example}

Manipulating observable structures in the graphical language is
extremely convenient since they obey a remarkable normal form law.  
Let $\delta_n: X \to X^{\otimes n}$ be defined by the recursion
\[
\delta_0 := \epsilon 
\qquad 
\delta_{n+1} := (\delta_n \otimes 1_A) \circ \delta
\]
and define $\mu_m$ analogously.  Now we have the following important
theorem:
\begin{theorem}\label{thm:frob-alg-nf}
  Given an SCFA $(X,\delta,\epsilon,\mu,\eta)$ let $f:X^{\otimes m}\to
  X^{\otimes n}$ be a map constructed from $\delta$,$\epsilon$,$\mu$
  and $\eta$ whose graphical form is connected.  Then $f =
  \delta_n \circ \mu_m$.
\end{theorem}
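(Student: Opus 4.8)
The plan is to prove this by the standard \emph{spider fusion} argument: show that every connected diagram built from the Frobenius structure collapses to a single ``spider'' $\delta_n\circ\mu_m$. I would start from the observation that each generator, and the identity, is already such a spider: reading off the recursions for $\delta_n$ and $\mu_m$ one has $\delta_1 = \mu_1 = 1_X$, and hence $\eta = \delta_1\circ\mu_0$, $\epsilon = \delta_0\circ\mu_1$, $\mu = \delta_1\circ\mu_2$, $\delta = \delta_2\circ\mu_1$ and $1_X = \delta_1\circ\mu_1$. This settles every diagram with at most one generator (the only connected diagram with none being $1_X$).

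The heart of the argument is a \emph{Spider Fusion Lemma}: if two spiders are plugged into one another along $k\ge 1$ wires, the result is again a spider, whose input and output arities are obtained from those of the two pieces by deleting the $k$ plugged legs. I would prove it in three layers. (a) From associativity, commutativity and the unit law of Definition~\ref{def:monoid}, every diagram built from $\mu$ and $\eta$ alone with a single output, of type $X^{\otimes m}\to X$, equals $\mu_m$; dualising with Definition~\ref{def:comonoid}, every single-input diagram built from $\delta$ and $\epsilon$ of type $X\to X^{\otimes n}$ equals $\delta_n$. (This is the coherence theorem for commutative monoids; it lets me treat the legs of a spider as an unordered set that may be freely permuted and reassociated.) (b) Using the Frobenius condition of Definition~\ref{def:frobenius-alg} I would fuse two spiders joined along a single wire: rewriting $\delta\circ\mu$ as $(\mu\otimes 1_X)\circ(1_X\otimes\delta)$ lets me push the multiplication tree of the lower spider up through the comultiplication tree of the upper one until, by (a), the two trees merge. (c) If the spiders are joined along $k\ge 2$ wires, fusing along one of them turns the remaining $k-1$ connections into loops $\mu\circ\delta$, each erased by the special law $\mu\circ\delta = 1_X$; tracking which legs survive gives the stated arities.

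With the lemma in hand I would induct on the number $N$ of generators of the connected diagram $d$. For $N\le 1$ the base observations apply. For $N\ge 2$, view $d$ as a graph with the generator boxes as vertices and wires as edges; it is connected with more than one vertex, so (a standard fact) it has a generator box $v$ whose removal --- letting the wires that met $v$ become new boundary wires --- leaves a connected diagram $d'$. Then $d'$ has $N-1$ generators, hence is a spider by the induction hypothesis; and since $d$ is connected, $v$ is joined to $d'$ by at least one wire, so the Spider Fusion Lemma collapses $d$ into a single spider. The leg-count bookkeeping in the lemma shows that its arities are exactly the numbers $m$ and $n$ of boundary wires of $d$, so $d = \delta_n\circ\mu_m$.

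The main obstacle is the Spider Fusion Lemma, specifically steps (b) and (c): making the Frobenius law slide an entire multiplication tree through an entire comultiplication tree in full generality, and then verifying that the special law removes exactly the loops this creates while the surviving-leg count comes out right. An essentially equivalent route trades this for a \emph{Separation Lemma} --- every Frobenius-structure diagram equals $h_2\circ h_1$ with $h_1$ built only from $\mu,\eta$ (and identities and swaps) and $h_2$ only from $\delta,\epsilon$ --- again proved by using the Frobenius condition to commute comultiplications above multiplications and the special law to kill the resulting loops; then $h_1$ is a disjoint union of $\mu$-trees and $h_2$ of $\delta$-cotrees, connectedness of $f$ forces exactly one of each, and (a) identifies them as $\mu_m$ and $\delta_n$. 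In either presentation the delicate point is the termination and bookkeeping of the Frobenius-and-special rewriting.
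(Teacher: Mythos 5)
The paper itself offers no proof of Theorem~\ref{thm:frob-alg-nf}: it is stated as a known normal-form result (and immediately restated as the spider rule in Proposition~\ref{prop:spider}), so there is nothing internal to compare you against. Your proposal is the standard argument for this theorem, and its overall architecture --- base cases, a fusion lemma proved from generalised associativity/commutativity plus the Frobenius and special laws, and an induction on the number of generators --- is the right one. Your base-case identities ($\delta_1=\mu_1=1_X$, $\mu=\delta_1\circ\mu_2$, etc.) are correct, and parts (a)--(c) of the fusion lemma are exactly the ingredients needed.

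There is, however, one genuine gap in the induction step. You remove a non-cut generator box $v$ and invoke the Spider Fusion Lemma for ``$v$ plugged into $d'$ along $k\ge 1$ wires''; but your lemma, as stated and as proved in (b)--(c), only covers the case where outputs of one spider are plugged into inputs of the other. When $v$ is an interior box (say a $\mu$ receiving a wire from $d'$ and also feeding a wire back into $d'$), the wires between $v$ and $d'$ run in both directions. That configuration is not a composite of the form your lemma handles; worse, once you replace $d'$ by its normal form $\delta_n\circ\mu_m$ (in which every input is joined to every output), the picture ``$v$ attached to $d'$ both above and below'' contains a directed cycle and cannot even be drawn as a progressive diagram without cups and caps, which are not among your generators. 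The repair is to abandon the graph-theoretic vertex removal in favour of a structural induction on the $\circ/\otimes$ expression, proving the stronger invariant that \emph{every} diagram built from the generators equals a permutation, followed by a tensor product of spiders (one per connected component, with $\delta_0\circ\mu_0$ for closed components), followed by a permutation: tensoring is trivial, and in a composite $g\circ f$ every wire at the interface points from an $f$-spider into a $g$-spider, so repeated application of your fusion lemma (now genuinely of the ``outputs into inputs'' form, with the special law killing the extra parallel wires) merges the components. This is essentially your own ``Separation Lemma'' alternative, and it is the route that actually closes the argument; see Lack's treatment via distributive laws, cited in the paper's footnote to Theorem~\ref{thm:bialg-nf}, for a fully formal version.
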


\begin{proposition}\label{prop:spider}\em 
Given an observable structure $\whiteobs$ on $X$, let
$(\whitedot)_n^m$ denote the `$(n,m)$-legged spider':
\[
 \begin{tikzpicture}
    \begin{pgfonlayer}{nodelayer}
        \node [style=none] (0) at (-2, 1) {};
        \node [style=none] (1) at (-0.5, 1) {};
        \node [style=none] (2) at (0, 1) {};
        \node [style=none] (3) at (1, 1) {};
        \node [style=none] (4) at (1.5, 1) {};
        \node [style=none] (5) at (-2, 0.75) {};
        \node [style=none] (6) at (-1.5, 0.75) {};
        \node [style=none] (7) at (-0.5, 0.75) {};
        \node [style=white dot] (8) at (1.25, 0.75) {};
        \node [style=none] (9) at (-1, 0.5) {...};
        \node [style=whitebg] (10) at (1, 0.5) {\small \rotatebox[origin=c]{45}{...}};
        \node [style=white dot] (11) at (0.75, 0.25) {};
        \node [style=white dot] (12) at (-1.25, 0) {};
        \node [style=none, anchor=east] (13) at (0, 0) {$:=$};
        \node [style=white dot] (14) at (0.75, -0.25) {};
        \node [style=none] (15) at (-1, -0.5) {...};
        \node [style=whitebg, fill=white] (16) at (1, -0.5) {\small \rotatebox[origin=c]{-45}{...}};
        \node [style=none] (17) at (-2, -0.75) {};
        \node [style=none] (18) at (-1.5, -0.75) {};
        \node [style=none] (19) at (-0.5, -0.75) {};
        \node [style=white dot] (20) at (1.25, -0.75) {};
        \node [style=none] (21) at (-2, -1) {};
        \node [style=none] (22) at (-0.5, -1) {};
        \node [style=none] (23) at (0, -1) {};
        \node [style=none] (24) at (1, -1) {};
        \node [style=none] (25) at (1.5, -1) {};
    \end{pgfonlayer}
    \begin{pgfonlayer}{edgelayer}
        \draw [style=diredge, bend left=15] (18.center) to (12);
        \draw (14) to (11);
        \draw [style=diredge] (11) to (2.center);
        \draw [style=small braceedge] (22.center) to node[wire label, inner sep=3 pt]{$m$} (21.center);
        \draw [style=diredge] (8) to (4.center);
        \draw [style=diredge] (23.center) to (14);
        \draw [style=diredge] (24.center) to (20);
        \draw [style=diredge] (8) to (3.center);
        \draw [style=diredge, bend left=15] (12) to (5.center);
        \draw [style=diredge, bend right=15] (19.center) to (12);
        \draw [style=diredge, bend left=15] (17.center) to (12);
        \draw [style=diredge, bend right=15] (12) to (7.center);
        \draw [style=diredge] (25.center) to (20);
        \draw [style=diredge, bend left=15] (12) to (6.center);
        \draw [style=small braceedge] (0.center) to node[wire label, inner sep=3 pt]{$n$} (1.center);
    \end{pgfonlayer}
\end{tikzpicture} \,;
\]
then any morphism $X^{\otimes n}\to X^{\otimes m}$ built from
$\whitemu, \whiteeta, \whitedelta$ and $\whiteepsilon$ via $\dagger$-SMC structure which has a
connected graph is equal to the $(\whitedot)^m_n$. Hence, spiders
compose as follows:
\begin{equation}\label{eq:spidercomp}
\beginpgfgraphicnamed{spidercomp}
\InputIfFileExists{spidercomp.tikz}{}{\input{./figures/spidercomp.tikz}}
\endpgfgraphicnamed
\end{equation}
\end{proposition}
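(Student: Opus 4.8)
The plan is to obtain both claims as corollaries of the normal-form Theorem~\ref{thm:frob-alg-nf}. The key observation is that an observable structure $\whiteobs$ is in particular a special commutative Frobenius algebra (Definition~\ref{def:dscfa}), so that theorem applies verbatim to morphisms built from $\whitemu,\whiteeta,\whitedelta,\whiteepsilon$. Passing to the full $\dagger$-SMC structure adds nothing essentially new: the dagger of each of the four generators is again one of them, since $\whitedelta = (\whitemu)^\dagger$ and $\whiteepsilon = (\whiteeta)^\dagger$, and if one works in the compact structure the algebra induces, its cup and cap are $\whitedelta\circ\whiteeta$ and $\whiteepsilon\circ\whitemu$, once more built from the generators. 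So every morphism of the sort appearing in the proposition is one to which Theorem~\ref{thm:frob-alg-nf} can be applied.

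First I would note that the spider picture $(\whitedot)^m_n$ is itself such a morphism: read from bottom to top it merges its input legs, via a cascade of binary multiplications, into a single internal wire, which a cascade of binary comultiplications then splits into its output legs (the empty cascades being $\whiteeta$ and $\whiteepsilon$), and its underlying graph is evidently connected. Theorem~\ref{thm:frob-alg-nf} applied to it therefore gives $(\whitedot)^m_n = \delta_n \circ \mu_m$. Now let $f$ be any morphism built from the four generators via $\dagger$-SMC structure, with connected graph and the same type; Theorem~\ref{thm:frob-alg-nf} gives $f = \delta_n \circ \mu_m$ as well, so $f = (\whitedot)^m_n$. This is the first assertion of the proposition (with the input/output convention of Theorem~\ref{thm:frob-alg-nf}).

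For the fusion law~\eqref{eq:spidercomp}, take the morphism formed by plugging some outputs of one spider into some inputs of another, as drawn. It is again built from $\whitemu,\whiteeta,\whitedelta,\whiteepsilon$, and its graph is connected precisely because each of the two spider-graphs is connected and they are joined by at least one wire. Hence, by the case just treated, it equals a single spider; listing which legs stay external gives the leg-counts displayed in~\eqref{eq:spidercomp}.

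The honest summary is that the real work lives inside Theorem~\ref{thm:frob-alg-nf}, and granting it the proposition is essentially a corollary. The only genuine checks are (i) that the spider picture is a legitimate connected diagram of the required type, so the theorem applies to it, and (ii) in the fusion law, that gluing two connected spider-diagrams along at least one wire yields a connected diagram --- if they shared no wire the result would be a disjoint union of spiders, which is not a spider --- after which counting external legs is routine. What this reduction hides is the handling of the case where the two fused spiders are joined by several wires and thereby create closed loops; those are absorbed uniformly inside Theorem~\ref{thm:frob-alg-nf}, and it is exactly there that the $\dagger$-special law $\whitemu\circ\whitedelta = \id{X}$ is consumed.
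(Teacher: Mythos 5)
Your proposal is correct and matches the paper's intent exactly: the paper states Proposition~\ref{prop:spider} immediately after Theorem~\ref{thm:frob-alg-nf} with no separate proof, treating it as a direct corollary of that normal-form theorem, which is precisely the reduction you carry out (including the observations that the dagger permutes the four generators among themselves and that fusing two connected spider diagrams along at least one wire preserves connectedness). Your closing remark correctly locates where the real work --- loop elimination via the special law --- is hidden, namely inside Theorem~\ref{thm:frob-alg-nf} itself.
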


Thanks to the spider rule (\ref{eq:spidercomp}), every observable
structure on $X$ makes $X$ dual to itself (in the sense of
Definition~\ref{def:duals}), via the cup and cap:
\ctikzfig{frob_comp}
The upper-star with respect to this cup and cap corresponds in
\catFHilb to transposition in the given basis. For that reason, we
call this the $\whitedot$-transpose $f^\whitetranspose$. The lower
star corresponds to complex conjugation in the basis of $\whiteobs$,
so we call it the $\whitedot$-conjugate $f_\whiteconjugate :=
(f^\whitetranspose)^\dagger$.

Recall that a process $k:I\to X$ is called a \emph{point of $X$}.  In
\fdhilb the points of $X$ are simply vectors in the Hilbert space $X$.
The abstract analogue of the eigenvectors of an observable in \fdhilb
are the \emph{classical points} of an observable structure.

\begin{definition}
  \label{def:classicalpoint}
  A \emph{classical point} for an observable structure is a state that
  is copied by the comultiplication and deleted by the counit:
\begin{equation}\label{eq:classpoint}
\beginpgfgraphicnamed{classpoint}
\InputIfFileExists{classpoint.tikz}{}{\input{./figures/classpoint.tikz}}
\endpgfgraphicnamed 
\end{equation}
\end{definition}
We will depict classical points as triangles of the same colour as
their observable structure. 

\begin{remark}\label{rem:classpoints-are-comonoid-homos}
  Another way to say the same thing, is to define classical points
  as comonoid homomorphisms from the trivial comonoid $(I,1_I,1_I)$
  to $(X,\delta,\epsilon)$.
\end{remark}

In quantum computing, it is common to think of elements of a product
basis as strings of some kind. E.g. for qubits:
\[ \ket{010011} \leftrightarrow
   \ket 0 \otimes \ket 1 \otimes \ket 0 \otimes
   \ket 0 \otimes \ket 1 \otimes \ket 1
\]
Such product bases are precisely the classical points of
\emph{products of observable structures}.  Given an observable
structure \whiteobs on $X$, and another \grayobs on $Y$, we form a new
observable structure on $X \otimes Y$ by taking their product:
\[
\delta = %
\beginpgfgraphicnamed{product-delta}
\InputIfFileExists{product-delta.tikz}{}{\input{./figures/product-delta.tikz}}
\endpgfgraphicnamed \qquad \epsilon = %
\beginpgfgraphicnamed{product-epsilon}
\InputIfFileExists{product-epsilon.tikz}{}{\input{./figures/product-epsilon.tikz}}
\endpgfgraphicnamed
\]
Evidently any pair of classical points for the constituent observable
structures will be copied.  
\ctikzfig{product-class-points}
Generalising, the classical points of any
$n$-ary product of observable structures are nothing more than lists
of classical points, one for each factor.

Working concretely in Hilbert space, one can use the linear structure
to give another set of equations for observable structures. Consider
some basis vector $\ket{i}$, then the map $\outp{ii}{i}$ has the diagrammatic
form: 
\ctikzfig{presumdelta}
But notice that the sum $\sum_i \outp{ii}{i}$ is nothing more than the
the map $\delta: \ket{i} \to \ket{ii}$.  A similar statement can be
made for the counit $\epsilon$.  Hence given the complete set
of classical points for an observable structure \grayobs we have the
following equations:
  \ctikzfig{spider_sums}
Indeed these can be generalised to arbitrary spiders:
  \ctikzfig{spider_sums_n_legs}
Note that generalised compositional theories do not necessarily admit
addition of diagrams:  we introduce these equations as way of
generalising from concepts defined in Hilbert space to the abstract
setting where there need not be any linear structure.

Linear maps have the property that if two maps are equal on every
element of a basis, the maps themselves are equal.  In analogy to this
we define the following:
\begin{definition}
  \label{def:enough-classical-points}
  Let \whiteobs be an observable structure on $X$, with classical
  points $\{k_i\}_i$; we say that \whiteobs has \emph{enough classical
    points} if, for every system $Y$, and every pair of processes $f,g:
  X\to Y$, we have
  \[
  ( \forall i : f\circ k_i = g\circ k_i) \Rightarrow f = g\;\;.
  \]
\end{definition}
This property does not necessarily hold in an arbitrary GCT (although
obviously it does in \fdhilb) however when it does hold certain
statements can be made stronger.  For example, many implications
described in the subsequent sections are equivalences if the
underlying object has enough classical points.

\subsection{Phase Group for an Observable Structure}\label{sec:phasegroup}
 
Let $\psi$ and $\phi$ be two points of $X$.  Given an observable
structure $\whiteobs$ on $X$, applying the 
multiplication $\whitemu$ to $\psi$ and $\phi$ yields another point of
$X$: 
\begin{gather*}
  \psi \whiteplus \phi \;:=\; \whitemu(\psi \otimes \phi)\\\\
\beginpgfgraphicnamed{whiteplus-def}
\InputIfFileExists{whiteplus-def.tikz}{}{\input{./figures/whiteplus-def.tikz}}
\endpgfgraphicnamed
\end{gather*}
Since $\whitemu$ is commutative and associative, and it has a unit
point (namely $\whiteeta$), the operation $\whiteplus$ gives the
points of $X$ the structure of a commutative monoid.

If we restrict to those points $\psi : I \to X$ which satisfy 
\begin{gather*}
  \psi \whiteplus \psi_\whiteconjugate = \whiteeta
  \\\\
\beginpgfgraphicnamed{whiteplus-group}
\InputIfFileExists{whiteplus-group.tikz}{}{\input{./figures/whiteplus-group.tikz}}
\endpgfgraphicnamed
\end{gather*}
we obtain an abelian group $\whitePhi$, called the \emph{phase group}
of $\whiteobs$~\cite{CD2008,CD2009}. The elements of this group are
called \emph{phases}.  Note the phase group is non-empty, since the
unit \whiteeta satisfies the required equation.  We let $-\alpha :=
\alpha_\whiteconjugate$ for phases, and represent these points as
circles with one output, labelled by a phase.
\ctikzfig{phases}
\begin{example}\label{ex:phasegroup-in-fHilb}
  In \fdhilb, let  \whiteobs be defined by some
  orthonormal basis $\{\ket{i}\}_i$.  One can verify by direct
  calculation that a vector $\ket\psi$ lies in the phase group
  \whitePhi if and only if we have $|\innp{i}{\psi}|^2 = 1/D$, for all $i$,
  where $D$ is the dimension of the underlying space.  Such vectors
  are called \emph{unbiased} for the basis $\{\ket{i}\}_i$.  The
  multiplication is then the convolution (pointwise) product.

Concretely, for a qubit observable  given by $\whitemu=
\ketbra{0}{00} + \ketbra{1}{11}$, the phases are the unbiased states,
which are all of the form:
\[
\ket{\alpha} =
\begin{pmatrix}
  1 \\ e^{i\alpha}
\end{pmatrix}\;,
\]
with the multiplication:
\[
\whitemu
\left(
  \begin{pmatrix}
    1\\ e^{i\alpha}
  \end{pmatrix}
\otimes
\begin{pmatrix}
  1\\ e^{i\beta}
\end{pmatrix}
\right)
=
\begin{pmatrix}
  1\\ e^{i(\alpha+\beta)}
\end{pmatrix}\;.
\] 
We therefore see that the phase group for a qubit observable is the
circle group.  It is an easy exercise to check that for a
$D$-dimensional Hilbert space the phase group for any observable is
isomorphic to the $(D-1)$-dimensional torus.
\end{example}

The name `phase group' comes from fact that the elements of the
\whitePhi correspond to unitary maps which preserve the basis defining
\whiteobs.  We can map any point $\psi:I\to X$ onto an operation on
$X$ via the \emph{left action}, $\whiteLambda (\psi) = \whitemu \circ
(\psi \otimes 1_X)$, or in pictures:
\[
\whiteLambda : \;\;%
\beginpgfgraphicnamed{whitelambda-def}
\InputIfFileExists{whitelambda-def.tikz}{}{\input{./figures/whitelambda-def.tikz}}
\endpgfgraphicnamed
\]
Then we have the following facts:

\begin{proposition}\label{thm:phase-group-props}
  Let $\phi,\psi \in \whitePhi$; then 
  \begin{enumerate}
  \item $\whiteLambda(\phi)$ is unitary;
  \item $\whiteLambda(\phi)\circ \whiteLambda(\psi) 
    = \whiteLambda(\phi+\psi)
    = \whiteLambda(\psi)\circ \whiteLambda(\phi) $
  \item If $k$ is a classical point for \whiteobs then
    $\whiteLambda(\phi)\circ k = k \otimes s$ for some
    scalar $s$.
  \end{enumerate}
  \begin{proof}
    \begin{enumerate}
    \item We show that $\whiteLambda(\psi)^\dagger \circ
      \whiteLambda(\psi) = 1$: \ctikzfig{whitelambda-unitary} The
      first equation is the spider rule while the second is the
      definition of $\psi_\whiteconjugate$.  The case
      $\whiteLambda(\psi) \circ (\whiteLambda(\psi))^\dagger = 1$ is
      similar.
      \item This follows immediately from the associativity and
        commutativity of $\whitemu$:
        \ctikzfig{whitelambda-plus}
      \item This follow from the definition of classical points.
        \ctikzfig{whitelambda-classical}
    \end{enumerate}
  \end{proof}
\end{proposition}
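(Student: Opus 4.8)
I would treat the three items in turn, using the spider normal form (Proposition~\ref{prop:spider}) as the workhorse, together with the phase condition $\psi \whiteplus \psi_\whiteconjugate = \whiteeta$ and the monoid/comonoid laws. Nothing here is computationally heavy; the only genuinely delicate point is the bookkeeping around the dagger and the self-duality induced by $\whiteobs$, and I flag that at the end.

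\textbf{Part 1 (unitarity).} I would compute $\whiteLambda(\psi)^\dagger \circ \whiteLambda(\psi)$ directly. Writing $\whiteLambda(\psi) = \whitemu \circ (\psi \otimes \id{X})$ and using $\whitemu^\dagger = \whitedelta$, this composite is a connected diagram built from a single $\whitemu$ and a single $\whitedelta$, with the point $\psi$ attached to one input leg and the effect $\psi^\dagger$ attached to one output leg. By the spider rule the $\whitemu$--$\whitedelta$ core collapses to one spider; since a spider absorbs its own cup and cap, the output leg carrying $\psi^\dagger$ may be bent down and re-read as an input leg carrying $\psi_\whiteconjugate$ (this is precisely the fact that $\psi^\dagger$ is the $\whitedot$-transpose of $\psi_\whiteconjugate$). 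What survives is a spider with $\psi$ and $\psi_\whiteconjugate$ fed into two of its legs, i.e. $\whitemu \circ ((\psi \whiteplus \psi_\whiteconjugate) \otimes \id{X})$; the phase condition rewrites $\psi \whiteplus \psi_\whiteconjugate$ as $\whiteeta$, and the unit law then collapses this to $\id{X}$. The companion identity $\whiteLambda(\psi) \circ \whiteLambda(\psi)^\dagger = \id{X}$ follows by the mirror-image argument (or by taking daggers, noting that $\psi_\whiteconjugate$ is again a phase).

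\textbf{Parts 2 and 3.} For (2) I would simply note that $\whiteLambda(\phi) \circ \whiteLambda(\psi) = \whitemu \circ (\phi \otimes \id{X}) \circ \whitemu \circ (\psi \otimes \id{X})$ is a connected diagram of multiplications, so by associativity (a degenerate case of the spider rule) it equals $\whitemu \circ (\whitemu \otimes \id{X}) \circ (\phi \otimes \psi \otimes \id{X}) = \whitemu \circ ((\phi \whiteplus \psi) \otimes \id{X}) = \whiteLambda(\phi \whiteplus \psi)$; commutativity of $\whitemu$ gives $\phi \whiteplus \psi = \psi \whiteplus \phi$, hence the remaining equality, and closure of $\whitePhi$ identifies $\phi \whiteplus \psi$ with $\phi + \psi$. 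For (3), set $v := \whiteLambda(\phi) \circ k = \whitemu \circ (\phi \otimes k)$. Post-composing with $\whitedelta$ and applying the Frobenius law $\whitedelta \circ \whitemu = (\whitemu \otimes \id{X}) \circ (\id{X} \otimes \whitedelta)$ together with copyability $\whitedelta \circ k = k \otimes k$ yields $\whitedelta \circ v = v \otimes k$; post-composing that with $\whiteepsilon \otimes \id{X}$ and using the counit law turns the left side into $v$ and the right side into $(\whiteepsilon \circ v) \otimes k$, so $v = k \otimes s$ with $s := \whiteepsilon \circ v$ a scalar.

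\textbf{Main obstacle.} The only step requiring care is the identification in Part 1 of the adjoint leg $\psi^\dagger$ with the $\whitedot$-conjugate $\psi_\whiteconjugate$ sitting on a spider leg — that is, keeping straight the interplay of the dagger, the self-duality that $\whiteobs$ puts on $X$, and the leg-permuting/leg-bending freedom of spiders. Once that is pinned down, everything reduces to the unit, counit, and Frobenius identities.
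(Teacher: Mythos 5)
Your proposal is correct and takes essentially the same route as the paper's proof: the spider rule collapses the $\whitemu$--$\whitedelta$ core, the identification of $\psi^\dagger$ with $\psi_\whiteconjugate$ plus the phase condition and unit law give unitarity, associativity and commutativity of $\whitemu$ give (2), and the classical-point (copyability/deletability) equations give (3). Your part (3) is spelled out via the Frobenius and counit laws rather than the paper's one-line appeal to the figure, but it is the same argument in substance.
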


The image $\whiteLambda(\whitePhi)$ is therefore an abelian subgroup
of the unitaries on $X$, which is isomorphic to $\whitePhi$.  We refer
to these as \emph{phase maps}.  If we reinterpret the third part of
the preceding proposition in terms of linear algebra, we see that every
classical point of \whiteobs is an eigenvector of every phase map in 
$\whiteLambda(\whitePhi)$.  This in turn ``explains'' why they commute
with each other.

\begin{example}\label{ex:phasegroup-maps-in-fdhilb}
  Let \whiteobs be defined by $\whitemu=
\ketbra{0}{00} + \ketbra{1}{11}$ as above.  Now for $\alpha \in
\whitePhi$ we have 
\[
\whiteLambda(\alpha) = 
\begin{pmatrix*}[c]
  1 & 0 \\ 0 & e^{i\alpha}
\end{pmatrix*}
\]
Hence the phase group in Hilbert spaces is exactly the group of phase
shifts relative to the given basis.
\end{example}

% phased spiders with one
% input and one output are the abstract analogue to phase gates,
% familiar from quantum computing. This is supported by the following
% fact, proven in~\cite{CD2009}.  
% \begin{proposition}\label{lem:phases-are-unitary}
%   If $\psi_\alpha \in \whitePhi$ then $\whitemu \circ (1_X \otimes
%   \psi_\alpha) : X\to X$ (i.e. a spider with one input and one output,
%   labelled by $\alpha$) is unitary. 
% \end{proposition}

%Therefore we will denote these phases by greek letters and their multiplication by $+$. 

Generalising from the preceding discussion, we can now introduce
`spiders decorated with phases':
\begin{equation}\label{eq:decspider}
\beginpgfgraphicnamed{decspider}
\InputIfFileExists{decspider.tikz}{}{\input{./figures/decspider.tikz}}
\endpgfgraphicnamed
\end{equation}
which compose as follows:
\begin{equation}\label{eq:decspidercomp} 
\beginpgfgraphicnamed{decspidercomp}
\InputIfFileExists{decspidercomp.tikz}{}{\input{./figures/decspidercomp.tikz}}
\endpgfgraphicnamed
\end{equation}
In the following sections we will refer to this generalised
composition rule for phased spiders as \emph{the spider law}.

\subsection{Two toy models}
\label{sec:two-toy-models}

In this section we'll introduce two ``toy'' models of quantum
mechanics.  The first is the restriction of quantum mechanics to
stabilizer states;  this theory we call \stab.  The second is the toy
model due to Spekkens \cite{Spekkens}, which we refer to as \spek.
While the first of these is indeed a true subtheory of quantum
mechanics, \spek is a local hidden variable model.  By casting both of
these in the language of generalised compositional theories we can see
that the difference between is in fact very slight.

Before discussing these concrete models, we'll introduce a formal
precursor theory.  Let \toy be the general compositional theory built
from the formal generators:
\begin{itemize}
\item one basic system, which we denote $T$;
\item six points $z_0,z_1,x_0,x_1,y_0,y_1 : I \to T$ and their
  corresponding copoints;
\item 24 unitary maps $T\to T$ which form a group isomorphic to the
  symmetric group $S_4$;
\item one observable structure \whiteobs, whose classical points are
  $z_0$ and $z_1$, and whose phase group comprises the remaining four
  points.
\end{itemize}
Note that \toy is not fully specified: to do so we ought to say which
group the phase group is, and how the corresponding unitaries embed
into the endomorphisms of $T$.  Since \whitePhi is a four-element
group we have only two choices here: $\mathbb{Z}_4$, or $\mathbb{Z}_2
\times \mathbb{Z}_2$.  As we will see this choice will make the
difference between stabilizer quantum mechanics and the quantum-like
local hidden variable theory.

Let \stab be the subtheory of \fdhilb generated by the following
elements:
\begin{itemize}
\item One basic system $\mathbb{C}^2$, which we call $Q$.
\item Six points $I\to Q$:
  \begin{align*}
z_0 &= \ket0  & 
x_0 & = \frac{1}{\sqrt 2}(\ket0 + \ket1) & 
y_0 & = \frac{1}{\sqrt 2}(\ket0 + i\ket1) \\
z_1 &= \ket1 & 
x_1 &= \frac{1}{\sqrt 2}(\ket0 - \ket1) &
y_1 &= \frac{1}{\sqrt 2}(\ket0 - i\ket1)
  \end{align*}
\item The group of unitaries generated by the matrices:
\[
Z_{\pi/2} = 
\begin{pmatrix*}
  1 & 0 \\ 0 & i
\end{pmatrix*}
\qquad
X_{\pi/2} = 
\frac{1}{\sqrt{-2i}}
\begin{pmatrix*}[r]
1 & -i \\ -i & 1  
\end{pmatrix*}
\]
This group is known in the quantum computation literature as the Clifford
group for one qubit;  it is isomorphic to $S_4$.  The other key
property of this group is that it acts as a permutation on the states
defined above, so we cannot generate new states via unitaries.
\item An observable structure \whiteobs defined by the basis $\ket0,\ket1$.
\end{itemize}
Evidently the classical points of \whiteobs are indeed
$z_0$ and $z_1$ and the remaining points are unbiased for this basis,
hence part of \whitePhi.  One can check that
\[
\whiteLambda(y_0) = Z_{\pi/2}
\]
and which generates a four element cyclic subgroup, hence the phase group
\whitePhi is $\mathbb{Z}_4$.

We now introduce Spekkens' toy theory.  The toy theory is a local
hidden variable theory, based on epistemic restrictions.  There is a
single basic system, the toy bit, which can have one of four possible
states. We formalise the state space simply as a four-element set.
However, we now impose the epistemic restriction that any state
preparation (and, dually, measurement) may only narrow down the state
to two of the possible four.  Hence the ``states'' of the toy theory
are two-element subsets.  Although Spekkens' original presentation
\cite{Spekkens} was informal, the toy theory is ideally studied as
subtheory of \frel.  Following \cite{EdwardsSpek2011}, let \spek be
the subtheory of \frel generated by the following elements:
\begin{itemize}
\item One basic system, the four element set $\mathbf{4} = \{0,1,2,3\}$.
\item Six points:
  \begin{align*}
    z_0 &= \{0,1\}  & 
    x_0 &= \{0,2\}  &  
    y_0 &= \{0,3\}  \\
    z_1 &= \{2,3\}  & 
    x_1 &= \{1,3\}  &  
    y_1 &= \{1,2\} 
  \end{align*}
\item The full group of permutations on $\mathbf{4}$;
\item An observable structure \whiteobs defined by
\[
\whitemu : 
\begin{array}{rcl}
 \{00,11\} &  \sim & 0 \\
 \{01,10\} &  \sim & 1 \\
 \{22,33\} &  \sim & 2 \\
 \{23,32\} &  \sim & 3  
\end{array}
\qquad \whiteeta : * \sim \{0,2\}
\]
where we write the tensor as juxtaposition, i.e. $00 = (0,0)$.
\end{itemize}
Once again we easily check that the classical points for \whiteobs are
$z_0$ and $z_1$, and the other four form the phase group \whitePhi.
The phase group in this case is generated by the transpositions $(0
\;1)$ and $(2\;3)$; hence $\whitePhi \iso \mathbb{Z}_2
\times\mathbb{Z}_2$.

As should be evident by this point both \stab and \spek are
realisations of the incomplete theory \toy.  The only notable
difference between them is the group structure of \whitePhi.  This
highlights the importance of the  phase group for understanding
non-locality in generalised compositional theories.

\begin{remark}\label{rem:completing-stab-spek}
  In the description above the group of unitaries was given \emph{a
    priori}.  This is not necessary.  If we include a second
  observable structure \grayobs, corresponding to the classical points $x_0$
  and $x_1$, the the union of the two phase groups \whitePhi and \grayPhi yields
  \emph{all} unitaries described above.  These two observables are
  complementary in the sense described below.  Hence these two
  theories are in a sense minimal.
\end{remark}

%%%%%%%%%%%%%%%%%%%%%%%%%%%%%%%%%%%
%%%%%%%%%%%%%%%%%%%%%%%%%%%%%%%%%%%
%%%%%%%%%%%%%%%%%%%%%%%%%%%%%%%%%%%
\section{Complementarity and Strong Complementarity}\label{Sec:complementarity}
%%%%%%%%%%%%%%%%%%%%%%%%%%%%%%%%%%%
%%%%%%%%%%%%%%%%%%%%%%%%%%%%%%%%%%%
%%%%%%%%%%%%%%%%%%%%%%%%%%%%%%%%%%%

%%%%%%%%%%%%%%%%%%%%%%%%%%%%%%%%%%%
%%%%%%%%%%%%%%%%%%%%%%%%%%%%%%%%%%%
%%%%%%%%%%%%%%%%%%%%%%%%%%%%%%%%%%%
%\subsection{Mutually Unbiased Bases and Complementarity Observables}
%%%%%%%%%%%%%%%%%%%%%%%%%%%%%%%%%%%
%%%%%%%%%%%%%%%%%%%%%%%%%%%%%%%%%%%
%%%%%%%%%%%%%%%%%%%%%%%%%%%%%%%%%%%

In the Hilbert space presentation of quantum mechanics, two observables are \emph{complementary} if their bases of eigenstates are mutually unbiased. That is, for any $i,j$, $|\braket{v_i}{v_j'}|^2 = 1/D$. In the graphical notation:
\ctikzfig{mub}

A question posed by Coecke and Duncan \cite{CD2008,CD2009} was, ``Can we represent complementarity purely in terms of interacting observable structures?'' It turns out that complementarity is equivalent to a simple diagrammatic equation. First, we can move $1/D$ in the above equation to the other side and express it as a circle, as the trace of the identity always equals $D$. Then, replace $1$ on the RHS with ``deleted points''.
\begin{equation}\label{eq:mub2}
\beginpgfgraphicnamed{mub2}
\InputIfFileExists{mub2.tikz}{}{\input{./figures/mub2.tikz}}
\endpgfgraphicnamed
\end{equation}
As we saw in section~\ref{sec:observ-struct}, observable structures
fix an isomorphism of a space with its dual space, via the
transpose. While it is not true in general that
$\ket\psi^\whitetranspose = \bra\psi$, the transpose does take
classical points for a particular observable structure to their
adjoints: 
\[
\ket{v_i}^\whitetranspose = \bra{v_i} 
\quad\text{ and }\quad
\ket{v_j'}^\graytranspose = \bra{v_j'}\;.
\] 
Graphically:
\begin{equation}
\beginpgfgraphicnamed{dag_frob_transpose}
\InputIfFileExists{dag_frob_transpose.tikz}{}{\input{./figures/dag_frob_transpose.tikz}}
\endpgfgraphicnamed\label{eq:class-point-transpose}
\end{equation}

\begin{exercise*}
  Prove this.
\end{exercise*}

\noindent
We can rewrite the left hand side of equation (\ref{eq:mub2}) using this fact.
\begin{equation}\label{eq:complmentary-bis}
\beginpgfgraphicnamed{mub3}
\InputIfFileExists{mub3.tikz}{}{\input{./figures/mub3.tikz}}
\endpgfgraphicnamed
\end{equation}
The last equation follows by substituting the symbol $S$ for its
definition, viz:
\begin{equation}\label{eq:mub-antipode}
\beginpgfgraphicnamed{mub_antipode}
\InputIfFileExists{mub_antipode.tikz}{}{\input{./figures/mub_antipode.tikz}}
\endpgfgraphicnamed
\end{equation}
Unifying Equations~\eqref{eq:mub2} and \eqref{eq:complmentary-bis} we have:
\begin{equation}\label{eq:pre-hopf-antipode}
\beginpgfgraphicnamed{mub4}
\InputIfFileExists{mub4.tikz}{}{\input{./figures/mub4.tikz}}
\endpgfgraphicnamed
\end{equation}
Since this equation holds for all classical points $i$ and $j$, if we
now appeal to the fact that \fdhilb has enough classical points
(cf. Definition~\ref{def:enough-classical-points}), we can conclude
that identity holds without points:
\begin{equation}\label{eq:antipode-hopf}
\beginpgfgraphicnamed{antipode_hopf}
\InputIfFileExists{antipode_hopf.tikz}{}{\input{./figures/antipode_hopf.tikz}}
\endpgfgraphicnamed
\end{equation}

\begin{remark}\label{rem:why-antipode}
  The above equation is (up to a scalar factor) one of the defining
  equations a \emph{Hopf algebra}, in which case the map $S$ is called
  the \emph{antipode}.  For that reason, we refer to
  (\ref{eq:antipode-hopf}) as the \textit{Hopf law}.  As we will see
  in the next section, subject to some additional assumptions, pairs
  of complementarity observables do indeed form Hopf algebras with the
  antipode defined as in Equation \eqref{eq:mub-antipode}.
\end{remark}

Notice if we assume Equation (\ref{eq:antipode-hopf}) we can derive
Equation \eqref{eq:mub2} without any additional assumptions.  In other words,
if $\whiteobs$ and $\grayobs$ satisfy the Hopf law their classical
points are mutually unbiased.  Thus, we take the Hopf law to be the
defining equation for our \emph{abstract} notion of complementarity.

% \begin{theorem}\label{thm:hopf-complementary}
%   Two observable structures correspond to complementary observables if and only if they satisfy the Hopf law.
% \end{theorem}

\begin{definition}\label{def:complementary-obs}
A pair $(\whiteobs, \grayobs)$ of observables on the same object
$X$ is \emph{complementary} iff:
\[
\beginpgfgraphicnamed{antipode_hopf}
\InputIfFileExists{antipode_hopf.tikz}{}{\input{./figures/antipode_hopf.tikz}}
\endpgfgraphicnamed \quad\ \mbox{\rm where} \ \ \quad %
\beginpgfgraphicnamed{mub_antipode}
\InputIfFileExists{mub_antipode.tikz}{}{\input{./figures/mub_antipode.tikz}}
\endpgfgraphicnamed\ .
\]
\end{definition}

Since every observable in \fdhilb has enough classical points,
Definition~\ref{def:complementary-obs} is equivalent\footnote{Indeed
  Equations~\eqref{eq:mub2} and \eqref{eq:antipode-hopf} are
  equivalent in any theory wherever at least one of the observable
  structures has enough classical points.}
  to saying that
observables are complementary if their eigenbases are mutually unbiased with
respect to the other.  (See \cite{CD2008} for more details).  Hence,
we reclaim the usual notion of quantum complementarity, and extend it
to a more general setting.

\begin{definition}\label{def:coherence}
A pair $(\whiteobs, \grayobs)$ of observables on the same object
$X$ is \emph{coherent} iff: 
\[
\beginpgfgraphicnamed{coher}
\InputIfFileExists{coher.tikz}{}{\input{./figures/coher.tikz}}
\endpgfgraphicnamed\ .
\]
\end{definition}
In other words, the unit point $\whiteeta$ (\whiteunit) is, modulo a
scalar factor, a classical point for $\grayobs$, and vice versa.
\[
\exists i, j : %
\beginpgfgraphicnamed{classpoint-cohere}
\InputIfFileExists{classpoint-cohere.tikz}{}{\input{./figures/classpoint-cohere.tikz}}
\endpgfgraphicnamed
\]
We will assume that the scalar $%
\beginpgfgraphicnamed{scalar}
\begin{tikzpicture}
	\begin{pgfonlayer}{nodelayer}
		\node [style=gray dot] (0) at (0, 0.12) {};
		\node [style=white dot] (1) at (0, -0.12) {};
	\end{pgfonlayer}
	\begin{pgfonlayer}{edgelayer}
		\draw [style=none] (1) to (0);
	\end{pgfonlayer}
\end{tikzpicture}}
\endpgfgraphicnamed$ is always
cancellable.

\begin{example}\label{ex:coherence}
  Consider the two observables on the Hilbert space $\mathbb{C}^2$
  corresponding to the $Z$ and $X$ spins:
  \begin{align*}
    \whitedelta : &
      \begin{array}{rcl}
        \ket{0} & \mapsto & \ket{00} \\
        \ket{1} & \mapsto & \ket{11}
      \end{array}
&
    \graydelta : &
      \begin{array}{rcl}
        \ket{+} & \mapsto & \ket{++} \\
        \ket{-} & \mapsto & \ket{--}
      \end{array}
\\ \\
    \whiteeta : &
      \begin{array}{rcl}
        \ket{0} & \mapsto & 1 \\
        \ket{1} & \mapsto & 1
      \end{array}
&
    \grayeta : &
      \begin{array}{rcl}
        \ket{+} & \mapsto & 1 \\
        \ket{-} & \mapsto & 1
      \end{array}
  \end{align*}
Computing $\whiteeta$ we obtain:
\[
\whiteeta = (\whiteepsilon)^\dagger = \Big[ 1 \mapsto (\ket0 + \ket1)
  \Big]=
\sqrt2 \ket+
\]
which is indeed proportional to a classical point for $\graydelta$.
By a similar computation we obtain $\grayeta = \sqrt{2}\ket0$, from
which the value of their inner product $%
\beginpgfgraphicnamed{scalar}
}
\endpgfgraphicnamed = \sqrt{2}$
follows.  The equations of definition~\ref{def:coherence} can easily
be verified from here, demonstrating that $\whiteobs$ and $\grayobs$
are coherent.
\end{example}

\begin{proposition}\label{prop:cohere}\em
In \fhilb if $O_{\!\smallwhitedot}$ and $O_{\!\smallgraydot}$
are self-adjoint operators corresponding to complementary observables,
one can always choose a pair of coherent observable structures
$(\whiteobs,\grayobs)$ whose classical points correspond to the
eigenbases of $O_{\!\smallwhitedot}$ and $O_{\!\smallgraydot}$.
\end{proposition}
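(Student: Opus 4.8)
The plan is to use the only freedom available when promoting an eigenbasis to an observable structure --- the phase of each individual eigenvector --- and to notice that complementarity is exactly the condition that makes these phases tunable so that coherence holds; no ``hard'' computation is involved.

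First I would fix notation. Assuming $O_{\!\smallwhitedot}$ and $O_{\!\smallgraydot}$ are non-degenerate, let $\{\ket{v_i}\}_i$ and $\{\ket{w_j}\}_j$ be orthonormal eigenbases of $O_{\!\smallwhitedot}$ and $O_{\!\smallgraydot}$, each eigenvector being fixed only up to a unit scalar. By the classification of observable structures in \fdhilb \cite{CPV2008}, an observable structure whose classical-point rays are the eigenspaces of $O_{\!\smallwhitedot}$ is precisely one defined by copying some orthonormal basis $\{\ket{u_i}\}_i$ whose vectors span those eigenspaces, and for such a structure the classical points are the $\ket{u_i}$ themselves while $\whiteeta = \whiteepsilon^{\dagger} = \sum_i \ket{u_i}$. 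So the data left to choose are exactly the phases of the $\ket{u_i}$, and likewise for $\grayobs$; by the ``in other words'' restatement after Definition~\ref{def:coherence}, it suffices to choose these phases so that $\whiteeta$ is a scalar multiple of a classical point of $\grayobs$ and $\grayeta$ is a scalar multiple of a classical point of $\whiteobs$.

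Next I would invoke complementarity in the form $|\braket{v_i}{w_j}|^2 = 1/D$ for all $i,j$, whose crucial consequence is that no overlap $\braket{v_i}{w_j}$ vanishes. Fix arbitrary $i_0$ and $j_0$, and let $\whiteobs$ be the observable structure with classical points $\ket{u_i} := \sqrt{D}\,\braket{v_i}{w_{j_0}}\,\ket{v_i}$ --- genuine unit vectors precisely because $|\braket{v_i}{w_{j_0}}| = 1/\sqrt{D}\neq 0$ --- and let $\grayobs$ be the one with classical points $\ket{u'_j} := \sqrt{D}\,\braket{w_j}{v_{i_0}}\,\ket{w_j}$. Expanding $\ket{w_{j_0}}$ in the $v$-basis and $\ket{v_{i_0}}$ in the $w$-basis then gives $\whiteeta = \sum_i \ket{u_i} = \sqrt{D}\,\ket{w_{j_0}}$ and $\grayeta = \sum_j \ket{u'_j} = \sqrt{D}\,\ket{v_{i_0}}$. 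Thus $\whiteeta$ is a scalar multiple of the $\grayobs$-classical point $\ket{u'_{j_0}}$ and $\grayeta$ is a scalar multiple of the $\whiteobs$-classical point $\ket{u_{i_0}}$, which is exactly coherence; the remaining diagrammatic equations of Definition~\ref{def:coherence} then follow formally from the spider law together with the classical-point equations once the units are identified this way. Finally the scalar in Definition~\ref{def:coherence} is $\braket{\grayeta}{\whiteeta} = D\,\braket{v_{i_0}}{w_{j_0}}$, of modulus $\sqrt{D}\neq 0$, hence cancellable --- consistent with the value $\sqrt{2}$ obtained for $D=2$ in Example~\ref{ex:coherence}.

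I do not expect a genuine obstacle: the construction could only break if some overlap $\braket{v_i}{w_{j_0}}$ were zero, and that is ruled out precisely by mutual unbiasedness. The step needing the most care is the first one --- applying the \fdhilb classification of observable structures \cite{CPV2008} correctly, and checking that the rescaled family $\{\ket{u_i}\}_i$ is still an orthonormal basis (clear, since only phases change) whose copy-comonoid has $\sum_i \ket{u_i}$ as the adjoint of its counit --- rather than any real calculation.
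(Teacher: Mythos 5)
Your proof is correct and follows essentially the same route as the paper's: both arguments exploit the phase freedom in each eigenbasis, using mutual unbiasedness to rephase $\ket{v_i}$ by the (nonzero) overlaps with a fixed $\ket{w_{j_0}}$ so that $\whiteeta=\sum_i\ket{u_i}=\sqrt{D}\,\ket{w_{j_0}}$, and symmetrically for $\grayobs$. The only cosmetic difference is that the paper rephases the second basis against the already-rephased first one so that the coherence scalar comes out as the positive real $\sqrt{D}$, whereas your symmetric choice leaves a unit phase on it; since coherence only requires proportionality to a classical point and a cancellable scalar, this changes nothing.
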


\begin{proof}
  Let $\{\,\ket{a_i}\,\}_{i=1}^n$ and $\{\,\ket{b_j}\,\}_{j=1}^n$ be
  orthonormal eigenbases for $O_{\!\smallwhitedot}$ and
  $O_{\!\smallgraydot}$ respectively.  Since the bases are mutually
  unbiased we have
  \[
  \ket{b_j} = \frac{1}{\sqrt{n}} \big[ 
    \alpha_{1j}\ket{a_1} + \cdots     \alpha_{nj}\ket{a_n} \big] 
  \]
  where the $\alpha_{ij}$ are scalars satisfying $\sizeof{\alpha_{ij}}
  = 1$.  Setting $\ket{a_i'} = \alpha_{i1}\ket{a_i}$, we see that
  $\{\ket{a'_i}\}_i$ is also an orthonormal eigenbasis for
  $O_{\!\smallwhitedot}$, which is still mutually unbiased with
  respect to $\{\ket{b_j}\}_j$.  Now define:
  \begin{align*}
    \whitedelta : & \ket{a_i'} \mapsto \ket{a_i'} \otimes \ket{a_i'}\\
    \whiteepsilon : & \ket{a_i'} \mapsto 1
  \end{align*}
  This choice yields $\whiteeta = (\whiteepsilon)^\dagger = \sum_i
  \ket{a_i'} = \sqrt{n} \ket{b_1}$.

  In a similar fashion we can choose an eigenbasis $\ket{b_1'}, \ldots
  , \ket{b_n'}$ for
  $O_{\!\smallgraydot}$ such that the resulting $\graydelta$ and
  $\grayepsilon$ satisfy $(\grayepsilon)^\dagger =
  \sqrt{n}\ket{a_1'}$.  It is straightforward to verify that this can
  be done such that $\ket{b_1'} = \ket{b_1}$, ensuring the coherence
  of $\whiteobs$ and $\grayobs$.
\end{proof}

For this reason we will from now on assume that pairs of complementary
observables are always coherent.

%%%%%%%%%%%%%%%%%%%%%%%%%%%%%%%%%%%%%%%%%%%%%%%%%%%%%%%
%%%%%%%%%%%%%%%%%%%%%%%%%%%%%%%%%%%%%%%%%%%%%%%%%%%%%%%
\subsection{Strongly complementary observables}
\label{sec:strongly-compl-obser}
%%%%%%%%%%%%%%%%%%%%%%%%%%%%%%%%%%%%%%%%%%%%%%%%%%%%%%%
%%%%%%%%%%%%%%%%%%%%%%%%%%%%%%%%%%%%%%%%%%%%%%%%%%%%%%%

Many familiar observables, when expressed in terms of algebras, turn
out to have useful additional properties.  These are called
\emph{strongly complementary}; before describing them we will require
some preliminary definitions.

\begin{definition}\label{def:bialgebra}
  A \emph{(commutative) bialgebra} on $X$ is a 4-tuple
  $(\mu,\eta,\delta,\epsilon)$ of maps,
  \begin{align*}
    \mu : &\; X \otimes X \to X &\; \delta : &\; X \to X \otimes X \\
    \eta : &\; I \to X &\; \epsilon : &\; X \to I\;,
  \end{align*}
  which we write graphically as $(\graymult, \grayunit, \whitecomult,
  \whiteunit)$, such that:
  \begin{itemize}
  \item $(X,\mu,\eta)$ is a (commutative) monoid; 
  \item $(X, \delta, \epsilon)$ is (cocommutative) comonoid; and, 
  \item the following additional equations are satisfied:
    \begin{gather}
\beginpgfgraphicnamed{bialg-std}
\InputIfFileExists{bialg-std.tikz}{}{\input{./figures/bialg-std.tikz}}
\endpgfgraphicnamed\label{eq:bialg-i}  \\
\beginpgfgraphicnamed{bialg-std-ii}
\InputIfFileExists{bialg-std-ii.tikz}{}{\input{./figures/bialg-std-ii.tikz}}
\endpgfgraphicnamed\label{eq:bialg-ii} \\
\beginpgfgraphicnamed{scalar}
}
\endpgfgraphicnamed = {} \label{eq:biagl-iii}
    \end{gather}
  \end{itemize}
\end{definition}

\begin{remark}
  Note that equations~\eqref{eq:bialg-ii} and \eqref{eq:biagl-iii} are
  very similar to the equations required for the coherence of two
  observables, per Definition~\ref{def:coherence}.  The only
  difference there is that the scalar %
\beginpgfgraphicnamed{scalar}
}
\endpgfgraphicnamed is not assumed to be
  trivial.
\end{remark}

\begin{definition}\label{def:hopf-algebra}
  A \emph{(commutative) Hopf algebra} on $X$ is a (commutative)
  bialgebra on $X$, augmented with a map $s:X\to X$, called the
  \emph{antipode}, which satisfies:
  \begin{equation}    
\beginpgfgraphicnamed{hopf-law}
\InputIfFileExists{hopf-law.tikz}{}{\input{./figures/hopf-law.tikz}}
\endpgfgraphicnamed\label{eq:hopfalg}\;.
  \end{equation}
\end{definition}
Again, note the similarity to Equation~\eqref{eq:antipode-hopf}:
the difference is only by a scalar factor.

\begin{definition}
A pair $(\whiteobs, \grayobs)$ of observables on the same object
$X$ is \emph{strongly complementary} iff they are coherent and: 
\begin{equation}\label{eq:bialg}
\beginpgfgraphicnamed{bialg}
\InputIfFileExists{bialg.tikz}{}{\input{./figures/bialg.tikz}}
\endpgfgraphicnamed 
\end{equation} 
\end{definition}

To expand on this definition slightly, given a pair of strongly
complementary observables, if we consider just the monoid part of one
and the comonoid part of the other then the resulting structure is, up
to a scalar factor, a bialgebra.  Note that thanks to the up-down
symmetry induced by the $\dagger$ it doesn't matter which is the
monoid and which the comonoid.  For obvious reasons, the we say that a
pair of strongly complementary observables forms a \emph{scaled
  bialgebra}, and we refer to Equation~\eqref{eq:bialg} as the
\emph{bialgebra law}.  Notice that we have not, as yet, established
any connection between complementarity
(Definition~\ref{def:complementary-obs}) and strong complementarity.
The following theorem links the two definitions.

\begin{theorem}\label{thm:SCimpliesC}  Let \whiteobs and \grayobs be
  strongly complementarity observables; then they are complementary.
  \begin{proof} 
    Let $s$ be defined by 
    \ctikzfig{mub_antipode} 
    as per Equation~\eqref{eq:mub-antipode}.  Using the bialgebra law
    we reason:
    \ctikzfig{sc_hopf_pf} 
    The last equation relies on the fact that $\whiteeta$ is classical
    for \grayobs (and $\grayeta$ for \whiteobs), and
    \eqref{eq:class-point-transpose}.
\end{proof}
\end{theorem}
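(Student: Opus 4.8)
The goal is to establish the Hopf law \eqref{eq:antipode-hopf}, since this equation \emph{is} the definition of complementarity (Definition~\ref{def:complementary-obs}). So the entire content of the theorem is: from the bialgebra law \eqref{eq:bialg} together with coherence (Definition~\ref{def:coherence}), deduce \eqref{eq:antipode-hopf} with $s$ as in \eqref{eq:mub-antipode}. The plan is to run the standard Hopf-algebra argument — in any bialgebra, a suitably written endomorphism is a two-sided convolution inverse of the identity, hence an antipode — but carried out diagrammatically in the two-coloured setting. Concretely, I would start from the antipode-decorated composite on the left of \eqref{eq:antipode-hopf} (the comultiplication of one colour, the map $s$ on one of its legs, then the multiplication of the other colour) and immediately substitute \eqref{eq:mub-antipode} for $s$.

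First, with $s$ expanded, the whole diagram is built from the two colours' (co)multiplications and (co)units. The key move is then to apply the bialgebra law \eqref{eq:bialg} — the ``bone'' rule converting a $\whiteobs$-comultiplication followed by a $\grayobs$-multiplication into the crossed gray-then-white configuration — together with the auxiliary unit/counit equations packaged into Definition~\ref{def:bialgebra}. This is the only point at which \emph{strong}, rather than plain, complementarity is really used. The bialgebra rewrite leaves clusters of same-coloured dots adjacent to one another; I would fuse each monochromatic cluster using (co)associativity, or more economically the spider law of Proposition~\ref{prop:spider}, after checking that the relevant subdiagrams are connected so the law applies. At this stage the diagram should contain a $\whiteeta$ feeding a gray spider (and symmetrically a $\grayeta$ feeding a white spider); here coherence enters, since it states exactly that $\whiteeta$ is, up to the cancellable scalar $\tikzfig{scalar}$, a classical point for $\grayobs$, hence copied and deleted by the gray structure — collapsing those spiders. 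Finally, because \eqref{eq:mub-antipode} inserts classical points in transposed position, I would use \eqref{eq:class-point-transpose} — the $\dagger$-Frobenius transpose of a classical point equals its adjoint — to straighten them, after which the diagram reduces to $\grayeta \circ \whiteepsilon$ times a scalar, i.e.\ precisely \eqref{eq:antipode-hopf}.

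The main difficulty I anticipate is bookkeeping rather than insight. Both the bialgebra law and coherence come in ``scaled'' form, so one must carry the scalar factors through every rewrite and check they combine into the single scalar that appears in \eqref{eq:antipode-hopf}, which is where cancellability of $\tikzfig{scalar}$ is needed. The second delicate point is orientation discipline: one must verify that the transposes introduced by the definition \eqref{eq:mub-antipode} of $s$ are exactly those removed by \eqref{eq:class-point-transpose}, and that the spider law is only invoked on genuinely connected pieces. Neither is conceptually deep, but both are easy to get subtly wrong, which is no doubt why the argument is best displayed as a single chain of pictures rather than in symbols.
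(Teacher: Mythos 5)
Your proposal is correct and follows the same route as the paper: substitute the definition \eqref{eq:mub-antipode} of $s$ into the left-hand side of \eqref{eq:antipode-hopf}, apply the bialgebra law \eqref{eq:bialg}, then collapse the resulting units using coherence ($\whiteeta$ classical for $\grayobs$ and vice versa) together with \eqref{eq:class-point-transpose}, tracking the scalars throughout. One small caveat on your framing: the opening appeal to ``the standard Hopf-algebra argument in any bialgebra'' overstates what bialgebra structure alone gives you (not every bialgebra admits an antipode, as the paper notes immediately after the theorem); your actual steps correctly lean on the Frobenius/coherence data instead, so the proof itself is fine.
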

As a consequence, strongly complementary observables always form a
\emph{scaled Hopf algebra}.  Note that Theorem \ref{thm:SCimpliesC}
relies on the fact that \whiteobs and \grayobs are Frobenius algebras;
it is certainly not the case that every bialgebra is a Hopf algebra.

The converse to Theorem~\ref{thm:SCimpliesC} does not hold:  it is
possible to find coherent complementary observables in \fhilb
which are not strongly complementary.  See \cite{CD2009} for a
counterexample. 

The following lemma about the antipode for a strongly complementary
pair was shown in~\cite{KissingerThesis}.

\begin{lemma}\label{lem:antipode}
  If $(\whiteobs,\grayobs)$ are strongly complementary, and have
  enough classical points then the antipode $s$ is:
  \begin{itemize}
  \item self-adjoint;
  \item a monoid homomorphism from $\whiteobs$ to itself, and
    similarly for \grayobs; and
  \item a comonoid homomorphism from $\whiteobs$ to itself, and
    similarly for \grayobs.
  \end{itemize}
\end{lemma}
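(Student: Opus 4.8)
The plan is to exploit two symmetries to cut the work down. First, the $\dagger$: once we know $s^\dagger = s$, the claim ``$s$ is a comonoid homomorphism for $\whiteobs$'' is equivalent — by applying $\dagger$ and using $\whitemu = \whitedelta^\dagger$, $\whiteeta = \whiteepsilon^\dagger$ — to ``$s$ is a monoid homomorphism for $\whiteobs$''. Second, the colour symmetry: the bialgebra law \eqref{eq:bialg} and coherence (Definition~\ref{def:coherence}) are symmetric under $\whiteobs \leftrightarrow \grayobs$, and the definition \eqref{eq:mub-antipode} of $s$ is too (up to interchanging cups and caps), so whatever we prove for $\whiteobs$ transfers to $\grayobs$. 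Hence it suffices to establish (a) $s^\dagger = s$, and (b) $\whitedelta \circ s = (s\otimes s)\circ \whitedelta$ together with $\whiteepsilon \circ s = \whiteepsilon$.

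For (a) I would write $s$ as in \eqref{eq:mub-antipode}, i.e.\ as a diagram built from the cups and caps of $\whiteobs$ and $\grayobs$, recalling that for a $\dagger$-SCFA the cap is exactly the dagger of the cup (indeed $\whitecup = \whitecomult \circ \whiteunit$ gives $\whitecup^\dagger = \whitecounit \circ \whitemult = \whitecap$, and similarly for $\grayobs$). Taking the dagger of $s$ therefore merely reflects this diagram, turning each white cup into a white cap and each grey cup into a grey cap; one then checks, using cocommutativity of the two comonoids and the bialgebra law, that the reflected diagram is again $s$. This is short diagram-chasing.

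The substance is (b), and the key sublemma is: \emph{$s$ sends classical points of $\whiteobs$ to classical points of $\whiteobs$} (and, by colour symmetry, likewise for $\grayobs$). To prove it, let $k$ be a classical point of $\whiteobs$. Since $\whiteobs$ and $\grayobs$ are strongly complementary they form a scaled Hopf algebra (Theorem~\ref{thm:SCimpliesC} and the surrounding discussion), so the Hopf law \eqref{eq:hopfalg} holds; feeding $k$ into it and using $\whitedelta \circ k = k\otimes k$, $\whiteepsilon \circ k = \id{I}$ gives $\graymu \circ (s(k)\otimes k) = \grayeta$ up to the cancellable coherence scalar, i.e.\ $s(k)$ is a two-sided $\graymu$-inverse of $k$ (the other side by commutativity of $\graymu$). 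Now apply $\whitedelta$ to both sides. On the right, coherence says $\grayeta$ is, up to a cancellable scalar, a classical point of $\whiteobs$, so $\whitedelta \circ \grayeta = \grayeta \otimes \grayeta$. On the left, the bialgebra law \eqref{eq:bialg} lets $\whitedelta$ pass through $\graymu$, and using $\whitedelta \circ k = k\otimes k$ this rewrites to $(R_k\otimes R_k)\circ \whitedelta \circ s(k)$, where $R_k := \graymu \circ (\id{X}\otimes k) : X\to X$, which is invertible with inverse $R_{s(k)}$ by the previous displayed equation. Precomposing with $R_{s(k)}\otimes R_{s(k)}$ then yields $\whitedelta \circ s(k) = s(k)\otimes s(k)$, and a similar but easier computation with $\whiteepsilon$ and equation~\eqref{eq:bialg-ii} gives $\whiteepsilon \circ s(k) = \id{I}$; so $s(k)$ is a classical point of $\whiteobs$.

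Finally, (b) follows from the sublemma via the hypothesis that $\whiteobs$ has enough classical points (Definition~\ref{def:enough-classical-points}): for each classical $k_i$ of $\whiteobs$ we have $(\whitedelta \circ s)\circ k_i = \whitedelta \circ s(k_i) = s(k_i)\otimes s(k_i)$ by the sublemma, while $((s\otimes s)\circ \whitedelta)\circ k_i = (s\otimes s)(k_i\otimes k_i) = s(k_i)\otimes s(k_i)$; agreement on all $k_i$ forces $\whitedelta \circ s = (s\otimes s)\circ \whitedelta$, and likewise $\whiteepsilon \circ s \circ k_i = \whiteepsilon \circ s(k_i) = \id{I} = \whiteepsilon \circ k_i$ gives $\whiteepsilon \circ s = \whiteepsilon$. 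Combining (a), (b), and the two symmetries from the first paragraph gives all three bullet points. I expect the sublemma to be the main obstacle — concretely, keeping the bialgebra and coherence scalars under control while inverting $R_k$, and being careful that $R_k$ really is two-sided invertible — rather than the bookkeeping in (a) or the final descent through classical points.
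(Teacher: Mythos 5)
First, note that the paper does not actually prove Lemma~\ref{lem:antipode}; it only cites \cite{KissingerThesis}. So there is no in-paper argument to compare against, and your proposal has to stand on its own. The core of your argument -- the sublemma that $s$ sends classical points of \whiteobs to classical points of \whiteobs, proved by feeding a classical point into the Hopf law \eqref{eq:antipode-hopf}, pushing $\whitedelta$ through $\graymu$ with the bialgebra law \eqref{eq:bialg}, and cancelling the invertible map $R_k$ -- is correct (modulo the scalar bookkeeping you flag, which the paper itself treats cavalierly), and the descent through enough classical points is fine. You could in fact shortcut the $R_k$ computation: your Hopf-law step shows $s(k)$ is the $\grayplus$-inverse of $k$, and the colour-swapped form of Theorem~\ref{thm:subphasegroup} already tells you that the (scaled) classical points of \whiteobs form a subgroup of $\Phi_{\!\smallgraydot}$ under $\grayplus$, so the inverse of $k$ is again a scaled classical point; uniqueness of inverses in a commutative monoid then gives the sublemma directly.

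The genuine gap is self-adjointness, and it matters because your whole architecture routes the other bullets through it. The definition \eqref{eq:mub-antipode} composes a cap of one colour with a cup of the other, so $s^\dagger$ is the \emph{colour-and-side-swapped} composite; it is not the same diagram as $s$, and cocommutativity does not identify them -- commutativity only makes each individual cup or cap symmetric, it says nothing about exchanging the roles of \whiteobs and \grayobs. For the same reason your ``second symmetry'' (that the definition of $s$ is invariant under $\whiteobs \leftrightarrow \grayobs$) is not free: it is essentially equivalent to $s^\dagger = s$. This is why ``enough classical points'' appears in the hypotheses at all: the standard route is to show that $s$ and $s^\dagger$ both act as the group inverse on the classical points of one colour (for $s^\dagger$, apply $\dagger$ to the Hopf law to get $\whitemu\circ(1\otimes s^\dagger)\circ\graydelta = \whiteeta\circ\grayepsilon$ up to scalar, and feed in \grayobs-classical points), and then invoke uniqueness of inverses plus enough classical points to conclude $s = s^\dagger$. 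Without some such argument, your proposal establishes only the $\whitedelta$-comonoid and (by $\dagger$ of the daggered Hopf law) the $\graymu$-monoid homomorphism properties; the remaining two bullets and the self-adjointness claim are still open. Replace ``this is short diagram-chasing'' with that classical-points computation and the proof closes.
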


%%%%%%%%%%%%%%%%%%%%%%%%%%%%%%%%%%%
%%%%%%%%%%%%%%%%%%%%%%%%%%%%%%%%%%%
%%%%%%%%%%%%%%%%%%%%%%%%%%%%%%%%%%%
\subsection{Strong Complementarity and Phase Groups} 
%%%%%%%%%%%%%%%%%%%%%%%%%%%%%%%%%%%
%%%%%%%%%%%%%%%%%%%%%%%%%%%%%%%%%%%
%%%%%%%%%%%%%%%%%%%%%%%%%%%%%%%%%%%

For  complementary observables, classical points of one observable are
always included in the phase group of the other observable, up to a
normalizing scalar. Strong complementarity strengthens this property
to inclusion as a subgroup.  Let ${\cal K}_{\!\smallgraydot}$ be the
set of classical points of $\grayobs$ multiplied by the scalar factor
$%
\beginpgfgraphicnamed{scalar}
}
\endpgfgraphicnamed$.

\begin{theorem}\em\label{thm:subphasegroup} 
  Let $(\whiteobs, \grayobs)$ be strongly complementary
  observables and let $\grayobs$ have finitely many classical
  points. %, ${\cal K}_{\!\smallgraydot}$.  
  Then ${\cal
    K}_{\!\smallgraydot}$ forms a subgroup of the phase group
  $\Phi_{\!\smallwhitedot}$ of $\whiteobs$.  The converse also holds
  when $\whiteobs$ has `enough classical points'.
\end{theorem}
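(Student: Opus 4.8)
The plan is to prove the two implications separately, using throughout the dictionary between classical points of an observable structure and the \emph{group-like elements} of the associated scaled bialgebra.

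For the forward direction, I would first note that by Theorem~\ref{thm:SCimpliesC} strongly complementary observables are complementary, so every classical point of $\grayobs$, rescaled by the coherence scalar, already lies in $\whitePhi$; hence $\grayK \subseteq \whitePhi$. The group $\whitePhi$ has unit $\whiteeta$, and coherence (Definition~\ref{def:coherence}) says precisely that $\whiteeta$ is the coherence-scalar multiple of some classical point of $\grayobs$, so $\whiteeta \in \grayK$; in particular $\grayK$ is non-empty and contains the identity of $\whitePhi$. Next I would show $\grayK$ is closed under $\whiteplus$: the bialgebra law \eqref{eq:bialg} makes $\graydelta$ an algebra homomorphism for $\whitemu$ up to the coherence scalar, so if $k,k'$ are classical points of $\grayobs$ (each copied by $\graydelta$ and annihilated by $\grayepsilon$) then $\whitemu\circ(k\otimes k')$ is again copied by $\graydelta$ and annihilated by $\grayepsilon$, up to scalars; the rescaling in the definition of $\grayK$ is calibrated so that these scalars cancel, yielding $(\text{scaled }k)\whiteplus(\text{scaled }k') \in \grayK$. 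Finally, $\grayK$ is a finite non-empty subset of the group $\whitePhi$ that is closed under $\whiteplus$, hence a subgroup (each element has finite order, so its inverse is one of its powers); if one drops finiteness, Lemma~\ref{lem:antipode} supplies inverses, since the antipode is a comonoid homomorphism of $\grayobs$ and the Hopf law exhibits the scaled $s(k)$ as the inverse of the scaled $k$.

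For the converse I would assume $\whiteobs$ has enough classical points and $\grayK$ is a subgroup of $\whitePhi$ (coherence is then available: one half, $\whiteeta \in \grayK$, is forced, and the other we take as the standing assumption on complementary pairs), and try to derive the bialgebra law \eqref{eq:bialg} together with the residual coherence-type equations \eqref{eq:bialg-ii}--\eqref{eq:biagl-iii} of Definition~\ref{def:bialgebra}. The strategy is to verify each of these morphism identities by evaluating on classical points of $\whiteobs$, which is legitimate because $\whiteobs$ --- and hence, using the duals, the product $\whiteobs\otimes\whiteobs$ --- has enough classical points. On such inputs $\whitemu$ acts through the spider law for $\whiteobs$, while each $g\in\grayK$ yields a phase map $\whiteLambda(g)$ which is unitary and sends every classical point $k$ of $\whiteobs$ to $k$ tensored with a scalar (Proposition~\ref{thm:phase-group-props}(3)); moreover $\whiteLambda$ is a group homomorphism (Proposition~\ref{thm:phase-group-props}(2)), so the $\whiteLambda(g)$ form a group of phase maps isomorphic to $\grayK$. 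Rewriting $\graydelta$ and $\grayepsilon$ on the states that arise via this action reduces each target identity to a statement about how the elements of $\grayK$ compose under $\whiteplus$ and invert, and the hypothesis that $\grayK$ is a \emph{subgroup} is exactly what makes the two sides agree; this then gives coherence plus the bialgebra law, i.e. strong complementarity.

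I expect the converse to be the main obstacle. The forward direction is essentially the observation that classical points of $\grayobs$ are group-like for the scaled bialgebra, plus the elementary fact that a finite submonoid of a group is a subgroup. For the converse one has only the abstract group law on $\grayK$ and the ``enough classical points'' handle on $\whiteobs$, and must from these reconstruct the full bialgebra law; getting the reduction-to-classical-points argument to actually close --- tracking the various powers of the coherence scalar, and verifying that evaluation on $\whiteobs$-classical-points genuinely pins down $\graydelta$ and $\grayepsilon$ rather than only some of their values --- is the delicate part, and is presumably where the detailed proof (as in \cite{KissingerThesis}) does its real work.
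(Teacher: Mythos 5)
Your forward direction is essentially the paper's own argument: the Hopf law (via Theorem~\ref{thm:SCimpliesC}) puts the rescaled classical points of $\grayobs$ inside $\whitePhi$, coherence supplies the unit, the bialgebra law gives closure under $\whiteplus$, and a finite non-empty submonoid of a group is a subgroup. Nothing to object to there.

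The converse is where your proposal has a genuine gap. You propose to verify the bialgebra law \eqref{eq:bialg} by evaluating on classical points of $\whiteobs$, but the hypotheses give you no handle on $\graydelta$ or $\grayepsilon$ applied to such points: $\graydelta$ is only known to copy \emph{its own} classical points, and the subgroup hypothesis only constrains $\whitemu$ on elements of $\grayK$. After $\whitemu$ collapses a pair of $\whiteobs$-classical points you are left with $\graydelta$ of a $\whiteobs$-classical point, for which no rewrite is available, so the reduction you sketch cannot close --- and this is not a matter of tracking coherence scalars more carefully. The argument the paper intends runs the other way: plug a pair $k,k'$ of $\grayobs$-classical points into both sides of \eqref{eq:bialg}; on one side $\graydelta\otimes\graydelta$ copies them and you obtain $(k\whiteplus k')\otimes(k\whiteplus k')$, while on the other you obtain $\graydelta(k\whiteplus k')$, and these agree precisely because the subgroup hypothesis makes $k\whiteplus k'$ again a (scaled) classical point of $\grayobs$; the counit equation is handled the same way with $\grayepsilon$. ``Enough classical points'' then promotes this pointwise agreement to the morphism identity --- but note that this requires enough classical points for $\grayobs$ (the observable whose points are being quantified over), not $\whiteobs$; the attribution to $\whiteobs$ in the theorem statement appears to be a slip, and taking it at face value is probably what steered you toward the unworkable $\whiteobs$-evaluation strategy.
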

\begin{proof}
By strong complementarity it straightforwardly follows that, up to a
scalar, $\whitemu$ applied to two classical points of $\grayobs$
yields again a classical point of $\grayobs$:
\[
\beginpgfgraphicnamed{bialgclosedness}
\InputIfFileExists{bialgclosedness.tikz}{}{\input{./figures/bialgclosedness.tikz}}
\endpgfgraphicnamed
\]
The unit of $\Phi_{\!\smallwhitedot}$ is, up to a scalar, also a
classical point of $\grayobs$ by coherence. Hence,  ${\cal
K}_{\!\smallgraydot}$ is a submonoid of $\Phi_{\!\smallwhitedot}$ and
any finite submonoid is a subgroup.  The converse follows from:
\[
\beginpgfgraphicnamed{bialgclosedness2}
\InputIfFileExists{bialgclosedness2.tikz}{}{\input{./figures/bialgclosedness2.tikz}}
\endpgfgraphicnamed
\]
together with the `enough classical points' assumption.
\end{proof}

Recall that the exponent of a group $G$ is the maximum order of any
element of that group: $\textrm{exp}(G) = \textrm{max}\{ |g| : g \in G
\}$.

\begin{corollary}\label{cor:orderHopf}\em
  For any pair of strongly complementary observables, let $k =
  \textrm{\rm exp}({\cal K}_{\!\smallgraydot})$. Then, assuming
  $\grayobs$ has `enough classical points':
  \begin{equation}\label{eq:orderHopf}
\beginpgfgraphicnamed{orderHopf}
\InputIfFileExists{orderHopf.tikz}{}{\input{./figures/orderHopf.tikz}}
\endpgfgraphicnamed
  \end{equation}
\end{corollary}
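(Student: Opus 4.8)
The plan is to read Equation~\eqref{eq:orderHopf} as the statement that the one-to-$k$ gray comultiplication spider \graydelta, followed by the $k$-to-one white multiplication spider \whitemu, collapses---up to a fixed power of the cancellable coherence scalar $\lambda$ of Definition~\ref{def:coherence}---to the disconnected map $\whiteeta\circ\grayepsilon$, and then to discharge this equation using the `enough classical points' hypothesis on \grayobs. Since both sides are morphisms $X\to X$ and, by Definition~\ref{def:enough-classical-points}, \grayobs\ has enough classical points, it suffices to check the identity after precomposing with an arbitrary classical point $\kappa_0$ of \grayobs---provided the normalising scalar that emerges is independent of $\kappa_0$.

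First I would evaluate the left-hand side on $\kappa_0$. As $\kappa_0$ is a classical point, $\graydelta\circ\kappa_0=\kappa_0\otimes\kappa_0$ and $\grayepsilon\circ\kappa_0=1$ (Definition~\ref{def:classicalpoint}); a one-line induction on the recursion defining the iterated comultiplication then shows that the one-to-$k$ gray spider sends $\kappa_0$ to $\kappa_0^{\otimes k}$. Feeding this into the $k$-to-one white spider yields the $k$-fold $\whiteplus$-power of $\kappa_0$. Now $\lambda\cdot\kappa_0$ is, by construction, the element of \grayK\ associated with $\kappa_0$, and by Theorem~\ref{thm:subphasegroup} \grayK\ is a finite subgroup of the phase group \whitePhi; hence $k$ annihilates $\lambda\kappa_0$ in \whitePhi, i.e.\ the $k$-fold $\whiteplus$-power of $\lambda\kappa_0$ equals the group identity, which is \whiteeta. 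Pulling the central scalar $\lambda$ out of each of the $k$ tensor factors and cancelling it $k$ times, the $k$-fold $\whiteplus$-power of $\kappa_0$ itself equals $\lambda^{-k}\,\whiteeta$. On the other hand the disconnected map $\whiteeta\circ\grayepsilon$ applied to $\kappa_0$ gives $\whiteeta\cdot(\grayepsilon\circ\kappa_0)=\whiteeta$, and the prefactor $\lambda^{-k}$ plainly does not depend on the choice of $\kappa_0$. Invoking Definition~\ref{def:enough-classical-points} then upgrades this pointwise agreement to the point-free identity~\eqref{eq:orderHopf}.

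The one genuinely delicate step is the scalar bookkeeping: the argument hinges on the coherence scalar $\lambda$ relating a bare classical point to the corresponding element of \grayK\ being a \emph{fixed}, \emph{cancellable} scalar, independent of which classical point is chosen (exactly the standing assumption registered just after Definition~\ref{def:coherence}), so that a single normalisation $\lambda^{-k}$ applies uniformly and the hypothesis of Definition~\ref{def:enough-classical-points} is genuinely met. With that in hand, the remaining ingredients---the copying behaviour of the iterated gray comultiplication, the identification of the white spider on $\kappa_0^{\otimes k}$ with the $k$-fold phase-group sum, and the exponent bound supplied by Theorem~\ref{thm:subphasegroup}---are all routine.
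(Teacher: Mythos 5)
Your proof is correct and follows essentially the same route as the paper: evaluate both sides on an arbitrary classical point of \grayobs, use the copying property to reduce the left-hand side to a $k$-fold $\whiteplus$-power, invoke Theorem~\ref{thm:subphasegroup} together with the fact that the order of every element of the finite abelian group ${\cal K}_{\!\smallgraydot}$ divides its exponent $k$, and then lift the pointwise identity to the point-free one via `enough classical points'. Your explicit tracking of the coherence scalar $\lambda$ is more careful than the paper's one-line argument, which suppresses this bookkeeping, but it is the same proof.
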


\begin{proof}
  In a finite abelian group, the order of any element divides
  $\textrm{\rm exp}({\cal K}_{\!\smallgraydot})$. The result then
  follows by:
  \[
\beginpgfgraphicnamed{orderHopfproof1}
\InputIfFileExists{orderHopfproof1.tikz}{}{\input{./figures/orderHopfproof1.tikz}}
\endpgfgraphicnamed
  \] 
  together with the `enough classical points' assumption.
\end{proof}

\begin{proposition}\label{prop:alterdef}\em
  For a pair of strongly  complementary observables
  $\raisebox{1mm}{%
\beginpgfgraphicnamed{classicalpointaction}
\begin{tikzpicture}
	\begin{pgfonlayer}{nodelayer}
		\node [style=none] (0) at (-1, 0.15) {};
		\node [style=white dot] (1) at (-1, -0.12) {\footnotesize $\!\!i\!\!$};
		\node [style=none] (2) at (-1, -0.4) {};
	\end{pgfonlayer}
	\begin{pgfonlayer}{edgelayer}
		\draw (1) to (2.center);
		\draw (0.center) to (1);
	\end{pgfonlayer}
\end{tikzpicture}}
\endpgfgraphicnamed}$ is a
  $\grayobs$-homomorphism for all $%
\beginpgfgraphicnamed{classicalpointwhite}
\begin{tikzpicture}
	\begin{pgfonlayer}{nodelayer}
		\node [style=none] (0) at (0, 0.25) {};
		\node [style=white dot] (1) at (0, 0) {\footnotesize $\!\!i\!\!$};
	\end{pgfonlayer}
	\begin{pgfonlayer}{edgelayer}
		\draw (0.center) to (1);
	\end{pgfonlayer}
\end{tikzpicture}}
\endpgfgraphicnamed\in{\cal
  K}_{\!\smallgraydot}$. Conversely,   this property defines strong
  complementarity provided $\whitedelta$ has `enough classical points'.
\end{proposition}

\begin{proof}
Similar to the proof of Thm.~\ref{thm:subphasegroup}.
\end{proof}

%%%%%%%%%%%%%%%%%%%%%%%%%%%%%%%%%%%
%%%%%%%%%%%%%%%%%%%%%%%%%%%%%%%%%%%
%%%%%%%%%%%%%%%%%%%%%%%%%%%%%%%%%%%
\subsection{Classification of Strong Complementarity in \fhilb}\label{sec:classification}
%%%%%%%%%%%%%%%%%%%%%%%%%%%%%%%%%%%
%%%%%%%%%%%%%%%%%%%%%%%%%%%%%%%%%%%
%%%%%%%%%%%%%%%%%%%%%%%%%%%%%%%%%%%

\begin{corollary}\em\label{col:classification}
Every pair of strongly complementary observables in \fhilb is
of the following form: 
\[
\left\{\begin{array}{cl}
\graydelta   & :: \ket{g}\mapsto \ket{g}\otimes \ket{g}\vspace{1mm}\\
\grayepsilon & :: \ket{g}\mapsto 1
\end{array}\right.
\quad
\left\{\begin{array}{cl}
\whitedelta^\dagger   & :: \ket{g}\otimes \ket{h} \mapsto {1\over\sqrt{D}} \ket{g + h}\vspace{1mm} \\
\whiteepsilon^\dagger & :: 1 \mapsto \sqrt{D} \ket{0}
\end{array}\right.
\]
where $(G =\{g, h, \ldots\}, +, 0)$ is a finite Abelian
group. Conversely, each such pair is always strongly complementary.  
\end{corollary}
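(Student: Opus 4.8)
The plan is to prove the two implications separately, leaning for the forward direction almost entirely on the classification of $\dagger$-SCFAs in \fhilb from \cite{CPV2008} and on Theorem~\ref{thm:subphasegroup}, and settling the converse by direct computation. For the converse I would fix a finite Abelian group $(G,+,0)$ of order $D$, put $X=\mathbb{C}^D$ with orthonormal basis $\{\ket g\}_{g\in G}$, let $\grayobs$ be the $\dagger$-SCFA that copies this basis, and define $\whitemu$ by $\ket g\otimes\ket h\mapsto\tfrac1{\sqrt D}\ket{g+h}$ with unit $\whiteeta:=\sqrt D\ket 0$, taking $\whitedelta:=\whitemu^\dagger$ and $\whiteepsilon:=\whiteeta^\dagger$ (so $\whitedelta^\dagger$ and $\whiteepsilon^\dagger$ are exactly the displayed maps). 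First I would check that $\whiteobs$ is a $\dagger$-SCFA: commutativity and associativity of $\whitemu$ come from those of $+$, the unit law is immediate, and speciality holds because $\#\{(g,h):g+h=m\}=D$, so $\whitemu\circ\whitedelta=\tfrac1D\cdot D\cdot\id{X}=\id{X}$; the Frobenius identity is the standard index computation for normalised group algebras. Coherence then holds with the cancellable scalar $\sqrt D$: $\whiteeta=\sqrt D\ket 0$ is $\sqrt D$ times a classical point of $\grayobs$, and dually $\grayeta=\sum_g\ket g$ is $\sqrt D$ times the white classical point $\tfrac1{\sqrt D}\sum_g\ket g$ coming from the trivial character of $G$. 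Finally the scaled bialgebra law~\eqref{eq:bialg} is a one-line verification: the two sides of the comultiplication-over-multiplication equation both evaluate to scalar multiples of $\ket{g+h}\otimes\ket{g+h}$, differing by a factor of $\sqrt D$ --- precisely the normalising scalar the scaled law permits.

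For the forward direction, let $(\whiteobs,\grayobs)$ be strongly complementary in \fhilb. By \cite{CPV2008}, $\grayobs$ is the copying algebra of some orthonormal basis of $X$; indexing that basis by a (necessarily finite) set $G$ gives $\graydelta::\ket g\mapsto\ket g\otimes\ket g$ and $\grayepsilon::\ket g\mapsto 1$ --- the first displayed pair --- and the classical points of $\grayobs$ are exactly the unit vectors $\ket g$. By coherence $\whiteeta$ is a nonzero scalar multiple of one of them; relabel its index $0$. Now I would apply Theorem~\ref{thm:subphasegroup}, whose hypotheses hold since $\grayobs$ has finitely many classical points and \fhilb has enough classical points: the set $\grayK$ --- the classical points of $\grayobs$ each scaled by the single coherence scalar $s$ --- is a subgroup of the phase group $\whitePhi$ under $\whiteplus$. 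Because $\grayK$ is closed under $\whiteplus=\whitemu(-\otimes-)$ and every one of its elements carries the \emph{same} scalar $s$, this forces $\whitemu(\ket g\otimes\ket h)=\tfrac1s\ket{g+h}$, where $+$ denotes the group operation that $\grayK$ transports to the index set $G$; thus $G$ becomes a finite Abelian group (Abelian since $\whitemu$ is commutative, identity $0$ since $\whiteeta$ is the unit of $\whitePhi$), and $\whitemu$ has the claimed shape up to the single scalar $\tfrac1s$.

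It then remains only to identify this scalar. Since $\whiteobs$ is a $\dagger$-SCFA on the $D$-dimensional space $X$ it copies a $D$-element basis, so $\|\whiteeta\|^2=\whiteepsilon\circ\whiteeta=D$; as $\whiteeta=s\ket 0$ with $\ket 0$ a unit vector this gives $|s|=\sqrt D$, hence $\whiteepsilon^\dagger::1\mapsto\sqrt D\ket 0$ and, feeding $|s|=\sqrt D$ back through the group law on $\grayK$, $\whitedelta^\dagger::\ket g\otimes\ket h\mapsto\tfrac1{\sqrt D}\ket{g+h}$ (as a consistency check, $\dagger$-speciality of $\whiteobs$ alone gives $\sum_{g+h=m}\lvert\tfrac1s\rvert^2=1$, i.e.\ $|s|=\sqrt D$). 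The step I expect to take the most care is exactly this final scalar bookkeeping --- confirming that a single coherence scalar serves all classical points of $\grayobs$, that it equals $\|\whiteeta\|$, and fixing its phase by the evident normalisation so that it is literally $\sqrt D$; everything prior is either quoted from \cite{CPV2008} and Theorem~\ref{thm:subphasegroup} or a short computation with group indices.
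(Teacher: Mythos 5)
Your proposal is correct and takes essentially the same route as the paper's (very terse) proof: both rest the forward direction on Theorem~\ref{thm:subphasegroup} together with the \cite{CPV2008} fact that every $\dagger$-SCFA in \fhilb copies an orthonormal basis. You simply make explicit the scalar bookkeeping ($|s|=\sqrt{D}$ from speciality/normalisation of $\whiteeta$) and the direct verification of the converse, both of which the paper asserts without detail.
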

\begin{proof}
By Theorem \ref{thm:subphasegroup} it follows that the classical
points of one observable (here  $\grayobs$) form a group for the
multiplication of the other observable (here $\whitedelta^\dagger$),
and in \fhilb this characterises strong complementarity.   
\end{proof}
%Note that $G$ in this theorem is just \grayK as a subgroup of $\whitePhi$.

%%%% this has nothing to do with MUBs... see note below.
One of the longest-standing open problems in quantum information is
the characterisation of the number of pairwise complementary
observables in a Hilbert space of dimension $D$.  In all known cases
this is $D+1$, and the smallest unknown case is $D=6$.  We now show
that in the case of strong complementarity this number is always $2$
for $D\geq 2$.

\begin{theorem}\em 
  In a Hilbert space with $D\geq 2$ the largest set of pairwise
  strongly complementary observables has size at most $2$.
\end{theorem}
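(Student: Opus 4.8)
The plan is to argue by contradiction, using only two facts established above: strong complementarity implies complementarity (Theorem~\ref{thm:SCimpliesC}), and a strongly complementary pair is coherent, so that the unit of one observable is (up to an invertible scalar) a classical point of the other. The key move will be to notice that an observable which appears in two strongly complementary pairs is pinned twice over.

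First I would suppose that $\mathcal O_1,\mathcal O_2,\mathcal O_3$ are observable structures on a Hilbert space $X$ with $\dim X = D\ge 2$ which are pairwise strongly complementary, and write $K_i$ for the orthonormal eigenbasis (set of classical points) of $\mathcal O_i$. Step one: by Theorem~\ref{thm:SCimpliesC} each pair $(\mathcal O_i,\mathcal O_j)$ with $i\ne j$ is complementary, and since every observable in $\catFHilb$ has enough classical points, complementarity of the pair is equivalent to mutual unbiasedness of $K_i$ and $K_j$; hence $|\innp{v}{w}|^2 = 1/D$ for all $\ket v\in K_i$, $\ket w\in K_j$ with $i\ne j$.

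Step two is to invoke coherence. Since $(\mathcal O_1,\mathcal O_2)$ is strongly complementary it is coherent, so by Definition~\ref{def:coherence} the unit $\eta_1$ of $\mathcal O_1$ equals $s\ket v$ for some classical point $\ket v\in K_2$ and some scalar $s$, which is nonzero by the standing assumption that the coherence scalar is cancellable. The same reasoning applied to the strongly complementary pair $(\mathcal O_1,\mathcal O_3)$ gives $\eta_1 = s'\ket w$ for some $\ket w\in K_3$ with $s'\ne 0$. Step three: then $\ket v$ and $\ket w$ span the same line in $X$, so $|\innp{v}{w}| = 1$, whereas $\ket v\in K_2$ and $\ket w\in K_3$ force $|\innp{v}{w}|^2 = 1/D$ by step one — impossible for $D\ge 2$. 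This contradiction shows no such triple exists, so the largest set of pairwise strongly complementary observables has size at most $2$ (and size $2$ is attained, e.g.\ by any pair of the form in Corollary~\ref{col:classification}).

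I do not expect a genuine obstacle here; the only things to check are routine — that classical points in $\catFHilb$ are honest unit basis vectors, and that the coherence scalar is invertible — and both are immediate from the set-up. The whole content is the observation in step two, that a common observable of two strongly complementary pairs must have its unit be, simultaneously, a classical point of two mutually unbiased bases. A more computational alternative would route through Corollary~\ref{col:classification}: fixing $\mathcal O_2$ as the copying structure on a basis $\{\ket g\}_{g\in G}$ forces both $\mathcal O_1$ and $\mathcal O_3$ to be group-algebra multiplications on that same basis, for abelian group structures $(G,+)$ and $(G,+')$ on the index set, after which one shows $\mathcal O_1$ and $\mathcal O_3$ cannot themselves be complementary — but this requires a case analysis that the coherence argument sidesteps entirely, so I would present the coherence proof.
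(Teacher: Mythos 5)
Your proof is correct, and it runs on the same engine as the paper's --- coherence of a strongly complementary pair pins a unit to a classical point of the partner observable --- but it aims that engine at a different vector and finishes with a different contradiction. The paper fixes the common observable $\whiteobs$ and considers the units of the other two: by coherence, $\grayeta$ and $\blacketa$ are each proportional to a classical point of $\whiteobs$; a short computation using the third (gray--black) pair shows their inner product is nonzero, so by orthogonality of the $\whiteobs$-basis they are proportional to the \emph{same} classical point, and a diagrammatic computation then shows the identity on $X$ would have rank $1$, impossible for $D\geq 2$. You instead consider the single unit $\eta_1$ of the common observable, note that coherence with each of the other two makes it simultaneously proportional to a unit vector of $K_2$ and a unit vector of $K_3$, and obtain $1=\sizeof{\innp{v}{w}}^2=1/D$ from mutual unbiasedness of $K_2$ and $K_3$ --- which you get from Theorem~\ref{thm:SCimpliesC} together with the fact that in \catFHilb complementarity of observables with enough classical points is exactly mutual unbiasedness of eigenbases. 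Your route trades the paper's diagrammatic rank argument for an appeal to SC${}\Rightarrow{}$C and the \catFHilb-specific reading of complementarity as unbiasedness (the paper's rank-$1$ conclusion is equally \catFHilb-specific, so nothing is lost in generality); the two routine checks you flag --- that classical points are orthonormal unit vectors and that the coherence scalar is nonzero --- are indeed immediate. Both proofs are sound; yours is a legitimate and slightly more elementary variant.
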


\begin{proof}
  Assume that both $(\whiteobs, \grayobs)$ and  $(\whiteobs,
  \blackobs)$ are strongly complementary pairs.  By coherence
  $\grayunit$ and $\unit$ must be proportional to classical points of
  $\whiteobs$. If $(\grayobs, \blackobs)$ were to be strongly
  complementary observables, it is easily shown that
  $%
\beginpgfgraphicnamed{innerprod}
\begin{tikzpicture}[dotpic]
	\begin{pgfonlayer}{nodelayer}
		\node [style=dot] (0) at (-4, 0.25) {};
		\node [style=gray dot] (1) at (-4, -0.25) {};
	\end{pgfonlayer}
	\begin{pgfonlayer}{edgelayer}
		\draw [style=short diredge] (1) to (0);
	\end{pgfonlayer}
\end{tikzpicture}}
\endpgfgraphicnamed\not= 0$ so $\grayunit$ and $\unit$ are
  proportional to the same classical point.  Hence, up to a non-zero
  scalar: 
  \[
\beginpgfgraphicnamed{two_strong_compl_pfBIS}
\InputIfFileExists{two_strong_compl_pfBIS.tikz}{}{\input{./figures/two_strong_compl_pfBIS.tikz}}
\endpgfgraphicnamed
  \]
  i.e.~the identity has rank 1, which fails for $D\geq2$. By Corollary
  \ref{col:classification} a strongly complementary pair  exists for
  any $D\geq2$.
\end{proof}

%%%%%%%%%%%%%%%%%%%%%%%%%%%%%%%%%%%%%%%%%%%%%%%%%%%%%%%%%%%%%%%%%%
%%%%%%%%%%%%%%%%%%%%%%%%%%%%%%%%%%%%%%%%%%%%%%%%%%%%%%%%%%%%%%%%%%
%\bibliographystyle{spphys}
%\bibliography{merminchapter}
%\end{document}
%%%%%%%%%%%%%%%%%%%%%%%%%%%%%%%%%%%%%%%%%%%%%%%%%%%%%%%%%%%%%%%%%%
%%%%%%%%%%%%%%%%%%%%%%%%%%%%%%%%%%%%%%%%%%%%%%%%%%%%%%%%%%%%%%%%%%

%%%%%%%%%%%%%%%%%%%%%%%%%%%%%%%%%%%
%%%%%%%%%%%%%%%%%%%%%%%%%%%%%%%%%%%
%%%%%%%%%%%%%%%%%%%%%%%%%%%%%%%%%%%
\section{Mixed states, measurements, and ``abstract probabilities''}
\label{sec:mixed-stat-meas}
%%%%%%%%%%%%%%%%%%%%%%%%%%%%%%%%%%%
%%%%%%%%%%%%%%%%%%%%%%%%%%%%%%%%%%%
%%%%%%%%%%%%%%%%%%%%%%%%%%%%%%%%%%%

For some ket $\ket\psi$ in a Hilbert space, there are (at least) four
distinct ways to represent $\ket\psi$ as a linear map.

It is possible to represent a ket $\ket\psi \in H$ as a map $\ket\psi
\colon \mathbb C \to H$, sending $1 \in \mathbb C$ to $\ket\psi$. We
call such a map a ``point'' of $H$, because it does nothing more than
picking out a specific element. The second map is the associated bra
$\bra\psi \colon H \to \mathbb C$. This kind of map is called a 
``co-point''. We can also regard such a map as an element of the dual space
$H^*$. But then, $H^*$ is just another Hilbert space, so we could just
as well represent $\bra\psi$ as a point of $H^*$. That is, a linear
map $\bra\psi^* \colon \mathbb C \to H^*$. There is yet a fourth way to
represent $\ket\psi$, namely as a linear map $\ket\psi^* \colon H^*
\to \mathbb C$, sending a bra $\bra\phi \in H^*$ to the inner product
$\braket\phi\psi \in \mathbb C$.

So, for a given ket $\ket\psi$, there are four ways to write it as
points or copoints.
\ctikzfig{four_points}

The difference in these four pictures is largely notational: the data
they represent is the same. However, its a very useful piece of
notation, especially when representing functionals between spaces of
maps. Firstly, we note that we can represent a map $M \colon A \to B$
as a special kind of point, $\ket{\Psi_M} \in A^* \otimes B$. \[ M =
\sum a_i^j \ket{j}\bra{i} \qquad \mapsto \qquad \ket{\Phi_M} = \sum
a_i^j \bra{i} \otimes \ket{j} \]

These two objects clearly represent the same data. In fact, this
mapping is essentially the Choi-Jamio\l{}kowski isomorphism. However
instead of fixing a basis, we rely on the dual space $A^*$. Thus the
value on the right does not depend on a choice of orthonormal basis. By fixing a
basis $\mathcal B = \{ \ket{i} \}$, we can define a transposition map
$T_{\mathcal{B}}(\ket{i}) = \bra{i}$. Then the usual C-J isomorphism
is recovered as $(T_{\mathcal{B}} \otimes 1)\ket{\Phi_M}$. However,
this \textit{does} depend on a choice of basis, since
$T_{\mathcal{B}}$ does.

In~\cite{AC2004}, the authors refer to $\ket{\Psi_M}$ as the ``name''
of a map. We shall often find this representation more useful than the
usual C-J representation, especially in instances involving several
distinct orthonormal bases. Using the caps and cups from before, we can
isomorphically relate maps and their associated names.
\ctikzfig{map_state_duality}

The benefit of working with names of maps, as opposed to the maps
themselves becomes clear when we start considering higher-order
functionals. For a finite-dimensional Hilbert space $H$, let $L(H)$ be
the space of linear maps from $H$ to itself. When operating on density
matrices, we often want to consider maps of the form $\Phi \colon L(H)
\to L(K)$. We can either treat this as a genuine, higher-order map, or
we can treat it as a first-order map from names to names.
\[
\left(
\boxmap{\rho}
\right) \ \ \ \Rightarrow\ \ \ 
\beginpgfgraphicnamed{higher_order_map}
\InputIfFileExists{higher_order_map.tikz}{}{\input{./figures/higher_order_map.tikz}}
\endpgfgraphicnamed
\]

Since, in finite dimensions, we have an isomorphism $L(H) \cong H^*
\otimes H$, we know that all maps $\Phi \colon L(H) \to L(K)$ can be
represented this way.

In ordinary quantum theory, mixed quantum states are represented as
positive operators and operations as completely positive maps, or
CPMs. These are maps that take positive operators to positive
operators. A general CPM can be written in terms of a set of linear
maps $\{ B_i :  H \rightarrow K \}$ called its \textit{Kraus maps}.
\[ \Theta(\rho) = \sum_i B_i \rho B_i^\dagger \]

As before, we can collapse the higher-order map $\Theta$ to a first-
order map by translating the positive operator $\rho$ to its associated
name.
\ctikzfig{rho_point}

Then, we can encode the Kraus vectors of $\Theta$ in a map $B' = \sum
\ket{i} \otimes B_i$ and represent $\Theta$ as: 
\begin{equation}
\beginpgfgraphicnamed{kraus_decomp}
\InputIfFileExists{kraus_decomp.tikz}{}{\input{./figures/kraus_decomp.tikz}}
\endpgfgraphicnamed
\end{equation}

When we take the elements in Eq. (\ref{eq:cp-map}) to be morphisms in
an arbitrary $\dagger$-compact category, this gives us an abstract
definition of a completely positive map. This is Selinger's
representation of CPMs~\cite{SelingerCPM}. 
\begin{equation}\label{eq:cp-map}
\beginpgfgraphicnamed{abstract_cp}
\InputIfFileExists{abstract_cp.tikz}{}{\input{./figures/abstract_cp.tikz}}
\endpgfgraphicnamed
\end{equation}

Important special cases are \textit{states} where $A \cong I$,
\textit{effects} where $B \cong I$, and `pure' maps, where $C \cong
I$.

%%%%%%%%%%%%%%%%%%%%%%%%%%%%%%%%%%%
%%%%%%%%%%%%%%%%%%%%%%%%%%%%%%%%%%%
%%%%%%%%%%%%%%%%%%%%%%%%%%%%%%%%%%%
\subsection{Measurements and Born vectors}
%%%%%%%%%%%%%%%%%%%%%%%%%%%%%%%%%%%
%%%%%%%%%%%%%%%%%%%%%%%%%%%%%%%%%%%
%%%%%%%%%%%%%%%%%%%%%%%%%%%%%%%%%%%

Returning to quantum mechanics, we can see how a quantum measurement
would look in this language. A (projective) quantum measurement
$M_{\smallwhitedot}$ is a CPM that sends trace 1 positive operators
(in this case quantum states) to trace 1 positive operators that are
diagonal in some ONB (encoding a probability distribution of
outcomes). Suppose we wish to measure with respect to an
observable $\grayobs$,
whose classical points form an ONB $\{ \ket{x_i} \}$. The probability
of getting the $i$-th measurement outcome is computed using the Born
rule. 
\[ \textrm{Prob}(i, \rho) = \textrm{Tr}(\ketbra{x_i}{x_i} \rho) \]

We can write this probability distribution as a vector in the basis $\{ \ket{x_i} \}$. That is, a vector whose $i$-th entry is the probability of the $i$-th outcome:
\[ M_{\!\smallgraydot}(\rho) = \sum \textrm{Tr}(\ketbra{x_i}{x_i} \rho) \ket{x_i} \]

So, $M$ defines a linear map from density matrices to probability distributions. Expanding this graphically, we have:
\[
\sum_i \textrm{Tr}\left( %
\beginpgfgraphicnamed{P_rho}
\InputIfFileExists{P_rho.tikz}{}{\input{./figures/P_rho.tikz}}
\endpgfgraphicnamed \right) \ \graypointmap{i}\ =
\beginpgfgraphicnamed{measurement_derivation2}
\InputIfFileExists{measurement_derivation2.tikz}{}{\input{./figures/measurement_derivation2.tikz}}
\endpgfgraphicnamed
\]

We are now ready to make definitions for abstract measurements and
abstract probability distributions, which we shall call Born vector. 

\begin{definition}
  For an observable structure $\grayobs$, a measurement is defined as
  the following map: 
  \ctikzfig{measurement}
  Any point $\roundket{\Lambda}_{\!\smallgraydot} : I \rightarrow X$ of the following
  form is called a \textit{Born vector}, with respect to $\grayobs$: 
  \ctikzfig{born_vector}
\end{definition}

\begin{theorem}\label{thm:born-vector-probs}
  In \catFHilb, Born vectors for an observable $\grayobs$ are precisely those vectors whose entries are positive and sum to $1$, when written in the basis of $\grayobs$.
\end{theorem}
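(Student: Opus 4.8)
The plan is to prove the two inclusions separately, after first translating the diagrammatic definitions of a measurement and of a Born vector into their meaning in $\catFHilb$. Write $\{\ket{x_i}\}_{i=1}^{D}$ for the orthonormal basis of classical points of $\grayobs$; that the classical points form such a basis is the content of the classification of observable structures in $\catFHilb$ \cite{CPV2008}. My first step would be to evaluate the graphical measurement map $\graymeas$ and the Born-vector picture under the standard interpretation. Using the spider law (Equation~\eqref{eq:spidercomp}) to fuse the grey dots, together with $\graydelta :: \ket{x_i}\mapsto\ket{x_i}\otimes\ket{x_i}$ and $\grayepsilon :: \ket{x_i}\mapsto 1$, one checks that $\graymeas$ sends the name of an operator $\rho:X\to X$ to the vector $\sum_i \bra{x_i}\rho\ket{x_i}\,\ket{x_i}$; that is, it extracts the diagonal of $\rho$ in the $\grayobs$-basis and records it as a vector. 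Since a Born vector $\roundket{\Lambda}_{\!\smallgraydot}$ is by definition $\graymeas$ applied to an abstract CP state, and abstract CP states in $\catFHilb$ are exactly the positive operators by Selinger's representation of CP maps (Equation~\eqref{eq:cp-map}, \cite{SelingerCPM}), every Born vector is of the form $\sum_i \bra{x_i}\rho\ket{x_i}\,\ket{x_i}$ for a positive operator $\rho$, which moreover has trace $1$ because the purifying point appearing in the defining diagram is normalised.

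Given this, the forward inclusion is immediate. The $i$-th entry of $\roundket{\Lambda}_{\!\smallgraydot}$ in the $\grayobs$-basis equals $\bra{x_i}\rho\ket{x_i}$, which is nonnegative since $\rho$ is positive, and the entries sum to $\sum_i \bra{x_i}\rho\ket{x_i} = \Tr(\rho) = 1$ because $\{\ket{x_i}\}$ is an orthonormal basis. Hence a Born vector has nonnegative entries summing to $1$ when written in the basis of $\grayobs$.

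For the converse, let $v = \sum_i p_i\ket{x_i}$ be any vector with $p_i\ge 0$ and $\sum_i p_i = 1$. I would exhibit an explicit witness: take the normalised point $\ket{\psi} := \sum_i \sqrt{p_i}\,\ket{x_i} \colon I\to X$. Feeding $\ketbra{\psi}{\psi}$ into the measurement gives, by the computation of the first paragraph, the vector $\sum_i |\braket{x_i}{\psi}|^2\,\ket{x_i} = \sum_i p_i\,\ket{x_i} = v$. Since $\ket{\psi}$ is a normalised pure state it certainly determines a valid CP state (needing no ancilla, so this argument is robust to the precise form of the Born-vector diagram), and therefore $v$ is a Born vector for $\grayobs$.

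The only genuine content is the first step: establishing that the abstract measurement map, composed with the abstract CP-state form, really collapses in $\catFHilb$ to ``extract the diagonal of a trace-$1$ positive operator.'' This rests on three imported facts — the spider law to simplify the grey portion of the diagram, the observable-structure classification to identify the classical points with an orthonormal basis, and Selinger's characterisation of abstract CP maps as the completely positive maps (so that an abstract CP state is precisely a positive operator, and a normalised one is a density matrix). Once these are in place, both directions reduce to the elementary observations that the diagonal entries of a positive operator are nonnegative and that their sum is the trace, and I do not expect any further obstacle.
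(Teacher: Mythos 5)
Your proposal is correct and follows essentially the same route as the paper's proof: the forward direction rests on the observation that each $\grayobs$-coefficient of a Born vector is a completely positive scalar (which you phrase concretely as $\bra{x_i}\rho\ket{x_i}\ge 0$ for a positive trace-one $\rho$, where the paper keeps this diagrammatic) and that the coefficients sum to the trace, while the converse uses the identical witness $\ket\psi=\sum_i\sqrt{p_i}\,\ket{x_i}$. The only difference is presentational — you translate the diagrams into linear algebra up front rather than reasoning inside the graphical calculus — and your explicit flagging of the normalisation of the purifying point matches the role the deleting-point computation plays in the paper.
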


\begin{proof}
  Let $\roundket{\Lambda}_{\!\smallgraydot}$ be a Born vector. Its $i$-th coefficient in the $\grayobs$-basis is given by:
  \ctikzfig{born_vector_positive_num}
  We can see that these coefficients sum to $1$ by using the definition of the deleting point:
  \ctikzfig{born_vector_sum_to_one}
  This is a completely positive map from $\mathbb C$ to $\mathbb C$. In other words, it is a positive scalar. For the converse, assume $\roundket\Lambda_{\!\smallgraydot}$ is a probability distribution whose $i$-th coefficient in the $\grayobs$-basis is $p_i \in \mathbb R_+$. Then, letting $\psi = \sum \sqrt{p_i} \graypointmap{i}$:
  \ctikzfig{born_vector_converse}
  Post-composing with the deleting point yields $\sum (\sqrt{p_i})^2 = \sum p_i = 1$.
\end{proof}

Thus Born vectors in \catFHilb correspond precisely to \textit{probability distributions} over classical points.

We can naturally extend the definition above to points of the form
$\roundket{\Lambda}_{\!\smallgraydot} : I \rightarrow X \otimes \ldots \otimes X$ by
requiring that they be Born vectors with respect to the product
Frobenius algebra $\grayobs \otimes \ldots \otimes \grayobs$. 

The adjoint of the measurement map $m_{\smallgraydot}^\dagger$ is a
preparation operation. In \catFHilb, it takes a Born vector
$\roundket{\Lambda}_{\!\smallgraydot}$ with respect to $\grayobs$ and produces a
probabilistic mixture of the (pure) outcome states of $\grayobs$ with
probabilities given by $\roundket{\Lambda}_{\!\smallgraydot}$. 

This leads to a simple classical vs.~quantum diagrammatic paradigm
that applies in all of the models we consider~\cite{CPaqPav}: \em classical
systems are encoded as a single wire and
quantum systems as a double wire\em. The same applies to operations,
and $m_{\smallgraydot}$ and $m_{\smallgraydot}^\dagger$ allow passage
between these types. 

Note that the classical data will `remember' to which observable it
relates,  cf.~the encoding $\sum_i p_{i} \ket{x_i}$.  This is
physically meaningful since, for example, when one measures position
the resulting value will carry specification of the length unit in
which it is expressed.  If one wishes to avoid interconversion of this
`classical data with memory', one could fix one observable, and
unitarily transform the quantum data before measuring. Indeed, if  
\[
\beginpgfgraphicnamed{Obs_transf}
\InputIfFileExists{Obs_transf.tikz}{}{\input{./figures/Obs_transf.tikz}}
\endpgfgraphicnamed \quad \mbox{then} \quad %
\beginpgfgraphicnamed{Obs_transf2}
\InputIfFileExists{Obs_transf2.tikz}{}{\input{./figures/Obs_transf2.tikz}}
\endpgfgraphicnamed
\]
measures the $\whiteobs$-observable but produces $\grayobs$-data. In
\catFHilb, all observable structures are unitarily isomorphic, so any
projective measurement can be obtained in this way. A particularly
relevant example is when these unitaries are phases with respect the
another observable structure $\whiteobs$.

\begin{equation}\label{eq:alphameasurement}
\graymeas^\alpha := %
\beginpgfgraphicnamed{Obs_transf3}
\InputIfFileExists{Obs_transf3.tikz}{}{\input{./figures/Obs_transf3.tikz}}
\endpgfgraphicnamed
\end{equation}

When $\whiteobs$ is induced by the Pauli spin-$Z$ observable and
$\grayobs$ by the Pauli spin-$X$ observable, then $\graymeas =
\graymeas^0$ is an $X$ measurement and $\graymeas^{\pi/2}$ is a $Y$
measurement. Note however, that both produce Born vectors of outcome
probabilities with respect to the $\grayobs$ basis. This will be
useful in the next section.

%%%%%%%%%%%%%%%%%%%%%%%%%%%%%%%%%%%
%%%%%%%%%%%%%%%%%%%%%%%%%%%%%%%%%%%
%%%%%%%%%%%%%%%%%%%%%%%%%%%%%%%%%%%

%%%%%%%%%%%%%%%%%%%%%%%%%%%%%%%%%%%
%%%%%%%%%%%%%%%%%%%%%%%%%%%%%%%%%%%
%%%%%%%%%%%%%%%%%%%%%%%%%%%%%%%%%%%
%\section{Application to GHZ/Mermin Theorem}
\section{Example: non-locality of QM}
%%%%%%%%%%%%%%%%%%%%%%%%%%%%%%%%%%%
%%%%%%%%%%%%%%%%%%%%%%%%%%%%%%%%%%%
%%%%%%%%%%%%%%%%%%%%%%%%%%%%%%%%%%% 

In 1989 Greenburger, Horne, and Zeilinger provided an
analysis~\cite{GHZ89} of quantum theory which improves on Bell's
theorem in one crucial way. Whereas Bell demonstrated a
\textit{probabilistic} argument that quantum theory is incompatible
with the assumption of local realism (i.e. quantum theory generates
correlations that are too high for a classical local hidden variable
model), the GHZ/Mermin theorem illustrates a situation that rules out
a locally realistic model \textit{possibilistically}. That is, they
described a series of experiments for which quantum theory predicts a
single, definite outcome that is impossible under the assumption of
locality.

Here, we reproduce Mermin's version of this argument~\cite{Mermin}
using diagrammatic techniques. There are two key ingredients for this
translation. The first is a graphical notion of locality.  For our
purposes, it will suffice to treat locality as the fact that global
probability can be traced down to hidden states that determine all
measurement outcomes, since we shall show that no hidden state can
ever be compatible with the predictions of quantum theory.  Hence,
there is no point in even considering crafting a local hidden variable
representation. 

The second key ingredient is \textit{parity}. The GHZ/Mermin trick for
producing definite outcomes is to look not at individual measurement
outcomes, but the overall parity of those outcomes, i.e. ``Did the
experiment produce and even or an odd number of clicks?''. Considering
a single outcome (click or no-click) as an element of the abelian
group $\mathbb Z_2$, the parity of a set of outcomes is given by their
sum in the group. We already saw in section~\ref{sec:classification} that
strongly complementary observables are classified by abelian
groups. In two dimensions, there is only one such strongly
complementary pair, namely the one corresponding to $\mathbb
Z_2$. When we prepare a GHZ state with respect to a certain observable
(e.g. spin-$Z$) and conduct measurements using a strongly
complementary observable (e.g. spin-$X$), we will see this $\mathbb
Z_2$ structure arise.

By combining these two elements (the topological picture of
locality and the encoding of abelian groups as strongly complementary
observables) we will derive a contradiction. Furthermore in
section~\ref{sec:necessity}, we shall see how strong complementary was
embedded in the pre-conditions of a GHZ/Mermin-style argument in the
first place.

% The crucial two assumptions that lead to a contradiction are:

% \begin{enumerate}
%   \item Locality of $\roundket\Lambda_{\!\smallgraydot}$.
%   \item Compatibility of $\roundket\Lambda_{\!\smallgraydot}$ with the predictions of quantum mechanics.
% \end{enumerate}

% \begin{itemize}
%   \TODOitem{background and textual description of Mermin}
% \end{itemize}

%%%%%%%%%%%%%%%%%%%%%%%%%%%%%%%%%%%
%%%%%%%%%%%%%%%%%%%%%%%%%%%%%%%%%%%
%%%%%%%%%%%%%%%%%%%%%%%%%%%%%%%%%%%  
\subsection{A local hidden variable model}
%%%%%%%%%%%%%%%%%%%%%%%%%%%%%%%%%%%
%%%%%%%%%%%%%%%%%%%%%%%%%%%%%%%%%%%
%%%%%%%%%%%%%%%%%%%%%%%%%%%%%%%%%%%

For a particular $n$-party  state $|\Psi\rangle$ in some theory,
a \em local hidden variable (LHV) \em  model  for that state consists
of: 
\begin{itemize}
\item a family of hidden states $|\lambda\rangle$, each of  which
  assigns to any measurement on each  subsystem a definite outcome; and,  
\item a probability distribution on these hidden states,
\end{itemize}
which simulates the probabilities of that theory. We say that a theory
is \em local \em if each state admits a LHV model.

Consider three systems and four possible (compound) measurement
settings, $XXX$, $XYY$, $YXY$, and $YYX$.  A hidden state of an
underlying LHV model stores one measurement outcome for each setting
on each system:
\[ \ket{\lambda'} =  |\ 
  \underbrace{\overbrace{+}^X\overbrace{-}^Y}_{\textit{\footnotesize system~1}}\  
  \underbrace{\overbrace{-}^X\overbrace{+}^Y}_{\textit{\footnotesize system~2}}\ 
  \underbrace{\overbrace{-}^X\overbrace{+}^Y}_{\textit{\footnotesize system~3}}\
 \rangle \]
%In other words, the settings and outcomes on any of the three qubits do not affect the outcomes on the other two. This is akin to the local hidden variable states described by Mermin. As before, 
 We can represent this diagrammatically as follows:
\[
\beginpgfgraphicnamed{local_hidden_stateNEW}
\InputIfFileExists{local_hidden_stateNEW.tikz}{}{\input{./figures/local_hidden_stateNEW.tikz}}
\endpgfgraphicnamed
\]
that is, we simply copy those values to each of the four measurement settings.

%%%%%%%%%%%%%%%%%%%%%%%%%%%%%%%%%%%
%%%%%%%%%%%%%%%%%%%%%%%%%%%%%%%%%%%
%%%%%%%%%%%%%%%%%%%%%%%%%%%%%%%%%%%
\subsection{Encoding the GHZ state and computing correlations, diagrammatically}\label{sec:corcomp}
%%%%%%%%%%%%%%%%%%%%%%%%%%%%%%%%%%%
%%%%%%%%%%%%%%%%%%%%%%%%%%%%%%%%%%%
%%%%%%%%%%%%%%%%%%%%%%%%%%%%%%%%%%%

To present Mermin/GHZ style argument graphically, we first show how to
compute measurement outcomes for an $n$-party GHZ state
graphically. This computation relies on a standard theorem about
bialgebras, which relates a graph-theoretic property of diagrams to
equality of bialgebra expressions.

\begin{definition}
  Let $(\graymult, \grayunit, \whitecomult, \whitecounit)$ be a commutative, cocommutative bialgebra, and let $\mathcal D$ be a diagram consisting only of $\graymult, \grayunit, \whitecomult, \whitecounit$, identity maps, and swaps. Then, the characteristic matrix $\chi$ of $\mathcal D$ is a matrix of natural numbers where the $(i,j)$-th entry represents the number of forward-directed paths connecting the $i$-th input to the $j$-th output.
\end{definition}

\begin{example}
  The following terms have characteristic matrix
  \[ \left(\begin{matrix} 0 & 1 \\ 0 & 0 \end{matrix}\right): \qquad
\beginpgfgraphicnamed{char_0100}
\InputIfFileExists{char_0100.tikz}{}{\input{./figures/char_0100.tikz}}
\endpgfgraphicnamed \]
  
  The following terms have characteristic matrix
  \[ \left(\begin{matrix} 1 & 1 \\ 1 & 1 \\ 1 & 1 \end{matrix}\right): \qquad
\beginpgfgraphicnamed{char_111111}
\InputIfFileExists{char_111111.tikz}{}{\input{./figures/char_111111.tikz}}
\endpgfgraphicnamed \]
\end{example}

\begin{theorem}\label{thm:bialg-nf}
  If two diagrams generated by the same bialgebra have the same characteristic matrix, they are equal as maps.
\end{theorem}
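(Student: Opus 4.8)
The plan is to prove a normal-form theorem: attach to every matrix $\chi \in \mathbb{N}^{m\times n}$ a canonical bialgebra diagram $N_\chi : X^{\otimes m}\to X^{\otimes n}$ and show that any diagram $\mathcal D$ built from $\graymult,\grayunit,\whitecomult,\whitecounit$, identities and swaps is equal to $N_{\chi(\mathcal D)}$, where $\chi(\mathcal D)$ is its characteristic matrix. The theorem then follows immediately, since two diagrams with the same characteristic matrix are both equal to the same $N_\chi$. To define $N_\chi$, write $r_i$ for the $i$-th row sum and $c_j$ for the $j$-th column sum of $\chi$ (so $\sum_i r_i = \sum_j c_j$); first apply to each input wire $i$ an iterated comultiplication producing $r_i$ copies (using the counit $\whitecounit$ when $r_i=0$), then apply a permutation of wires that routes exactly $\chi_{ij}$ of the copies coming from input $i$ to output position $j$, and finally apply at each output position $j$ an iterated multiplication of its $c_j$ incoming wires (using the unit $\grayunit$ when $c_j=0$). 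This $N_\chi$ is well defined: the iterated (co)multiplications are unambiguous by (co)associativity, and the routing permutation is irrelevant because $\graymult$ is commutative and $\whitecomult$ cocommutative, so any two admissible routings differ only by permuting inputs of the $\mu$-trees and outputs of the $\delta$-trees.

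The core is an induction on the construction of $\mathcal D$. The base cases (a single identity, swap, $\graymult$, $\grayunit$, $\whitecomult$, or $\whitecounit$) are checked directly against the definition of $N_\chi$. For a tensor $\mathcal D_1 \otimes \mathcal D_2$ the characteristic matrix is the block-diagonal sum $\chi_1 \oplus \chi_2$ and one has $N_{\chi_1}\otimes N_{\chi_2} = N_{\chi_1\oplus\chi_2}$ immediately. The essential case is sequential composition: by the inductive hypothesis it suffices to show
\[
N_{\chi_2}\circ N_{\chi_1} \;=\; N_{\chi_2\chi_1},
\]
that is, composition of normal forms realises matrix multiplication. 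In $N_{\chi_2}\circ N_{\chi_1}$ the output layer of multiplications of $N_{\chi_1}$ meets the input layer of comultiplications of $N_{\chi_2}$; repeatedly applying the bialgebra law \eqref{eq:bialg-i} slides each $\whitecomult$ past each $\graymult$, turning the interface into a bipartite grid in which the number of forward-directed paths from original input $i$ to final output $k$ through the $\ell$-th intermediate wire is $(\chi_1)_{i\ell}(\chi_2)_{\ell k}$. Summing over $\ell$ and then fusing parallel edges using (co)associativity, commutativity and cocommutativity collapses the grid back to a single normal form whose characteristic matrix is exactly $\chi_2\chi_1$. The degenerate subcases — an intermediate wire with no incoming or no outgoing edges — are absorbed by the unit/counit compatibilities \eqref{eq:bialg-ii} and \eqref{eq:biagl-iii} (namely $\whitecounit\circ\graymult = \whitecounit\otimes\whitecounit$, $\whitecomult\circ\grayunit = \grayunit\otimes\grayunit$, and $\whitecounit\circ\grayunit = \id{I}$), which say precisely that a zero row or column of $\chi_1$ or $\chi_2$ propagates correctly through the product. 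This closes the induction, and hence the theorem.

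I expect the main obstacle to be making the composition step genuinely rigorous: bookkeeping the wire permutations through the iterated applications of \eqref{eq:bialg-i} and checking that the path-counting argument is literally matrix multiplication, together with a uniform handling of the units and counits for zero rows and columns. A convenient way to streamline this is to isolate a \emph{generalised bialgebra law} first — that an iterated multiplication $\mu_m$ postcomposed with an iterated comultiplication $\delta_n$ equals the ``full grid'' diagram, which is $N_J$ for the all-ones matrix $J \in \mathbb{N}^{m\times n}$ — and then reduce the general case of $N_{\chi_2}\circ N_{\chi_1}$ to iterated instances of this law by factoring each $\delta_{r_i}$ and $\mu_{c_j}$ through (co)associativity. (Conceptually, this is the statement that the PROP generated by a commutative, cocommutative bialgebra is the PROP of $\mathbb{N}$-valued matrices, with composition given by matrix product.)
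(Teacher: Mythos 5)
Your argument is correct and lands on the same normal form the paper uses (comultiplications at the inputs, multiplications at the outputs, one for each characteristic matrix), but it gets there by a genuinely different route. The paper's proof is a rewriting argument: push the gray dots upward past the white dots using the three bialgebra moves, check that each move preserves the characteristic matrix, and assert that normal forms are in bijection with matrices, deferring the formal details to Lack. You replace this with a structural induction on the diagram, concentrating all of the rewriting into the single lemma that composition of normal forms realises matrix multiplication (with the paper's rows-index-inputs convention this reads $N_{\chi_1};N_{\chi_2}=N_{\chi_1\chi_2}$, so your $\chi_2\chi_1$ is a transposition of conventions rather than an error). This buys a cleaner logical structure: no termination or confluence of a rewriting procedure has to be addressed, and the paper's asserted ``exactly one normal form per matrix'' is replaced by the well-definedness of $N_\chi$, which you correctly reduce to commutativity and cocommutativity absorbing the choice of routing permutation. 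It also makes explicit the conceptual statement the paper only cites, namely that the free commutative/cocommutative bialgebra PROP is the PROP of $\mathbb{N}$-valued matrices. The residual work is exactly where you say it is --- the generalised bialgebra law expressing $\delta_n \circ \mu_m$ as the full grid, proved by induction from \eqref{eq:bialg-i}, with the zero-row and zero-column cases handled by \eqref{eq:bialg-ii} and \eqref{eq:biagl-iii} --- and your treatment of those degenerate cases is the right one.
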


\begin{proof}
  (sketch) It is possible to show by case analysis that the three bialgebra equations can be used to move all of the gray dots to the top all the white dots to the bottom.
  \ctikzfig{bialg_all3}
  We can furthermore show that all three of these transformations preserve the characteristic matrix of $\mathcal D$. Once this is done, we obtain a diagram in normal form:
  \[
\beginpgfgraphicnamed{bialg_nf_example}
\InputIfFileExists{bialg_nf_example.tikz}{}{\input{./figures/bialg_nf_example.tikz}}
\endpgfgraphicnamed
    \qquad \leftrightarrow \qquad
    \left(
    \begin{matrix}
      1 & 0 & 0 \\
      0 & 0 & 0 \\
      1 & 2 & 0 \\
      0 & 0 & 1 \\
      0 & 0 & 3
    \end{matrix}
    \right)
  \]
  Then, it is possible to show there is \textit{exactly one} such
  normal form for each characteristic matrix. In fact, it is
  straightforward to read off the matrix by counting edges in the
  normal form. Since every diagram can be put into normal form using
  equations that preserve the characteristic matrix, and normal forms
  are in 1-to-1 correspondence with characteristic matrices, this
  completes the proof.\footnote{For a formal statement and proof of this theorem, in terms of factorisation systems see~\cite{Lack2004}.}
\end{proof}

% We shall not present the full proof here; for that see \cite{Lack} and
% refernces therein.  
% Instead we treat only the case of interest, as needed for Corollary
% \ref{cor:genbialg}.  Suppose that $f$ has the form $\delta_n \circ
% \mu_{n'}$; we use induction on both $n$ and $n'$.  The base case $n = 0,
% n'=0$, i.e.~$f = \epsilon\circ\eta$ is trivial.  If $n = 0$ and $n' > 0$
% then $f$ has the form $\epsilon \circ \mu_{n'}$ by induction on $n'$
% is equal to ${n'}$ copies of $\epsilon$.
% \[
% \tikzfig{bialg-nf-proof-i}
% \]
% The case $n > 0, n' = 0$ is exactly dual.  Now suppose $n,n'>0$; we
% have the situation pictured below.
% \[
% \tikzfig{bialg-nf-proof-ii}
% \]
% Notice that we have $\delta_{n-1} \circ \mu$ in the top left;  by
% induction we obtain the form show below:
% \[
% \tikzfig{bialg-nf-proof-iii}
% \]  
% Similarly, we have $\delta \circ \mu_{n-1}$ on the bottom right and by
% the same reasoning we can reduce $f$ to the required form.  Notice
% that in both diagrams there is a path from every input to every
% output.  This preservation of paths is the defining characteristic of
% the normal form for bialgebras.

We can now apply the theorem to prove the following corollary.

\begin{corollary}\label{cor:genbialg}
  The following equation holds for any connected bipartite graph with
  directions as shown.
  \begin{equation}\label{eq:bialgarrows}
\beginpgfgraphicnamed{directed_bialg_cor}
\InputIfFileExists{directed_bialg_cor.tikz}{}{\input{./figures/directed_bialg_cor.tikz}}
\endpgfgraphicnamed
  \end{equation}
\end{corollary}

\begin{proof}
  For the theorem on bialgebras to apply, all of the edges need to be
  directed upward. For a strongly complementary observable, the edge
  direction between two different colours can be reversed by applying
  the antipode $S$. Then, we use the fact that $S$ is a Frobenius
  algebra endomorphism to move it down.  \ctikzfig{directed_bialg_pf1}

  We apply Theorem \ref{thm:bialg-nf} and the spider theorem to complete the proof.
  \ctikzfig{directed_bialg_pf2}
\end{proof}

We compute the classical probability distributions (= $\grayobs$-data)
for $n$ measurements against arbitrary phases
$\alpha_i\in \Phi_{\!\smallwhitedot}$ on $n$ systems of any type in a 
generalised $GHZ^n_{\!\smallwhitedot\!}$-state:
\[ %
\beginpgfgraphicnamed{CorrelationComp1}
\InputIfFileExists{CorrelationComp1.tikz}{}{\input{./figures/CorrelationComp1.tikz}}
\endpgfgraphicnamed
   \stackrel{(\ref{eq:decspidercomp})}{=}
\beginpgfgraphicnamed{CorrelationComp2}
\InputIfFileExists{CorrelationComp2.tikz}{}{\input{./figures/CorrelationComp2.tikz}}
\endpgfgraphicnamed = (*) \]

Applying Corollary \ref{cor:genbialg}, we note that this is a
probability distribution followed by a $\whitedot$-copy.
\begin{equation}\label{eq:correlationspider}
(*) = %
\beginpgfgraphicnamed{CorrelationComp3}
\InputIfFileExists{CorrelationComp3.tikz}{}{\input{./figures/CorrelationComp3.tikz}}
\endpgfgraphicnamed =: %
\beginpgfgraphicnamed{angle_dist}
\InputIfFileExists{angle_dist.tikz}{}{\input{./figures/angle_dist.tikz}}
\endpgfgraphicnamed 
\end{equation}

The following is an immediate consequence.
\begin{theorem}\label{thm:symmetriccorr}\em
When measuring each system of a $GHZ^n_A$-state against an arbitrary
angle then the resulting classical probability distribution over
outcomes is symmetric.  
\end{theorem}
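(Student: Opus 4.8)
The plan is to read the statement off, almost immediately, from the computation that culminates in Equation~(\ref{eq:correlationspider}). That computation showed that measuring the $i$-th system of the $GHZ^n_{\!\smallwhitedot}$-state against an angle $\alpha_i\in\Phi_{\!\smallwhitedot}$ --- i.e.\ inserting the $\whitedot$-phase $\alpha_i$ on the $i$-th leg before a $\grayobs$-measurement --- yields a classical distribution on $n$ systems which, by the spider law~(\ref{eq:decspidercomp}) and Corollary~\ref{cor:genbialg}, equals an $n$-fold $\whitedot$-copy map post-composed with a single point $p:I\to X$; this is exactly the content of~(\ref{eq:correlationspider}). Two features of this normal form matter for us: all $n$ outputs now emanate from one $\whitedot$-spider, and the per-site phases have been absorbed into that spider, so $p$ depends on the $\alpha_i$ only through $\sum_i\alpha_i$. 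In particular, the a priori asymmetry caused by choosing different angles at different sites has already disappeared by the time we reach~(\ref{eq:correlationspider}).

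The single substantive step is then the observation that an $n$-legged $\whitedot$-copy map $\whitedelta_n:X\to X^{\otimes n}$ is invariant under permuting its $n$ outputs: for every $\sigma\in S_n$ acting on $X^{\otimes n}$ we have $\sigma\circ\whitedelta_n=\whitedelta_n$. This is an immediate instance of the spider rule (Proposition~\ref{prop:spider}): post-composing a connected diagram built from $\whitemu,\whiteeta,\whitedelta,\whiteepsilon$ with a permutation of some of its legs is again such a connected diagram with the same number of legs, hence equal to the same spider (equivalently, this is just coassociativity together with cocommutativity of $\whitedelta$). Writing the resulting distribution as $D=\whitedelta_n\circ p$, we get $\sigma\circ D=(\sigma\circ\whitedelta_n)\circ p=\whitedelta_n\circ p=D$ for all $\sigma\in S_n$. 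Interpreting in $\catFHilb$, applying $\sigma$ to the output wires permutes the $n$ classical systems, so this says precisely that the probability of any outcome tuple is unchanged under permuting the systems; that is, the distribution is symmetric.

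I do not expect a genuine obstacle here, since all the heavy machinery --- the bialgebra normal-form Theorem~\ref{thm:bialg-nf}, Corollary~\ref{cor:genbialg}, and the spider law --- has already been assembled, and what remains is only the elementary fact that copy maps are symmetric in their legs. The two points that deserve a sentence of care are: (i) making explicit that invariance of the diagram under permuting its output wires is, under the standard interpretation, exactly exchangeability of the resulting probability distribution; and (ii) noting that the individually chosen phases $\alpha_i$ do not break the symmetry, because the spider law has already merged them into the single spider from which all $n$ outputs branch, before the $\whitedot$-copy is isolated in~(\ref{eq:correlationspider}).
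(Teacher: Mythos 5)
Your proposal is correct and follows exactly the paper's route: the paper derives the theorem as an ``immediate consequence'' of Equation~(\ref{eq:correlationspider}), and your argument is precisely the elaboration of that step --- the phases are merged into a single spider by the spider law and Corollary~\ref{cor:genbialg}, and the resulting $n$-output $\whitedot$-copy map is invariant under permutation of its legs by cocommutativity and coassociativity, which is exchangeability of the outcome distribution. Nothing is missing.
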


\begin{theorem}\em
The classical probability distributions for
$\graymeas^{\alpha_1}\otimes\ldots\otimes
\graymeas^{\alpha_n}$-measurements on a $GHZ^n_A$-state is: 
\begin{itemize}
\item uncorrelated if $\roundket{\sum \alpha_i}_{\!\smallgraydot}$ is a classical point
for $\whiteobs$ and,
\item parity-correlated if $\roundket{\sum \alpha_i}_{\!\smallgraydot}$ is a classical
point $i$ for $\grayobs$ (i.e. contains precisely those outcomes
$i_1 \otimes \ldots \otimes i_n$ such that the sum of group elements
$\sum i_k$ is equal to $i$).
\end{itemize}
\end{theorem}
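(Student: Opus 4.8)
The plan is to start from the reduced form of the joint distribution already obtained in~\eqref{eq:correlationspider}: there the classical $\grayobs$-data produced by a $\graymeas^{\alpha_1}\otimes\cdots\otimes\graymeas^{\alpha_n}$ measurement on the $GHZ^n_A$-state was identified with the single-system Born vector $\roundket{\sum \alpha_i}_{\!\smallgraydot}$ composed with the $n$-fold white comultiplication (the ``$\whitedot$-copy''). So I would write $\roundket{\Lambda}_{\!\smallgraydot}:=\roundket{\sum \alpha_i}_{\!\smallgraydot}$, let $\delta_n\colon X\to X^{\otimes n}$ and $\mu_n=\delta_n^{\dagger}\colon X^{\otimes n}\to X$ be the iterated comultiplication and multiplication of $\whiteobs$ (cf.\ Theorem~\ref{thm:frob-alg-nf}, so $\delta_2=\whitedelta$, $\mu_2=\whitemu$), and reduce the whole statement to analysing the point $\delta_n\circ\roundket{\Lambda}_{\!\smallgraydot}$ of $X^{\otimes n}$: by Theorem~\ref{thm:born-vector-probs} its coefficients in the basis of classical points of $\grayobs\otimes\cdots\otimes\grayobs$ are exactly the outcome probabilities, so the two bullets are just two statements about the support and weights of those coefficients.

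For the first bullet, suppose $\roundket{\Lambda}_{\!\smallgraydot}$ is a classical point for $\whiteobs$. Then by Definition~\ref{def:classicalpoint} (equivalently Remark~\ref{rem:classpoints-are-comonoid-homos}) it is a comonoid homomorphism out of the trivial comonoid, so $\whitedelta$ copies it and $\whiteepsilon$ deletes it; iterating — or, more slickly, applying the spider rule of Proposition~\ref{prop:spider} to the connected white diagram $\delta_n$ — gives $\delta_n\circ\roundket{\Lambda}_{\!\smallgraydot}=\roundket{\Lambda}_{\!\smallgraydot}\otimes\cdots\otimes\roundket{\Lambda}_{\!\smallgraydot}$. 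Hence the joint distribution is an $n$-fold product of the single-system distribution, i.e.\ uncorrelated (consistent with the symmetry asserted in Theorem~\ref{thm:symmetriccorr}).

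For the second bullet, suppose $\roundket{\Lambda}_{\!\smallgraydot}$ equals the classical point $\ket i$ of $\grayobs$. The idea is to read off the coefficient at an outcome $i_1\otimes\cdots\otimes i_n$ by passing to the adjoint: since $\mu_n=\delta_n^{\dagger}$,
\[
\bra{i_1\cdots i_n}\,\delta_n\,\ket{i}\;=\;\bra{i}\,\mu_n\,\ket{i_1\cdots i_n}\;.
\]
Now, because $(\whiteobs,\grayobs)$ are strongly complementary, Theorem~\ref{thm:subphasegroup} (and the closedness step in its proof) tells us the scaled classical points $\grayK$ of $\grayobs$ are closed under $\whitemu$ and form a group; in particular $\whitemu(\ket g\otimes\ket h)=c\,\ket{g+h}$ for the group operation $+$ and a \emph{fixed} nonzero scalar $c$ independent of $g,h$ (concretely $c=1/\sqrt D$ by Corollary~\ref{col:classification}). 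Iterating,
\[
\mu_n\,\ket{i_1\cdots i_n}\;=\;c^{\,n-1}\,\ket{i_1+\cdots+i_n}\;,\qquad\text{hence}\qquad \bra{i_1\cdots i_n}\,\delta_n\,\ket i\;=\;c^{\,n-1}\,\langle i\,|\,i_1+\cdots+i_n\rangle\;.
\]
Since distinct classical points of $\grayobs$ are orthogonal, this is a fixed nonzero constant when $\sum_k i_k=i$ and $0$ otherwise. Thus $\delta_n\circ\ket i$ is supported precisely on the outcomes of group-sum $i$, with equal weight on each — which after normalisation is exactly the parity-correlated distribution claimed.

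The main obstacle I expect is entirely in the second bullet, and within it the single non-routine move: converting ``white comultiplication of a gray classical point'' into a statement about the group $\grayK$. The adjoint trick above makes this painless, but it leans on two facts that need to be invoked with care: (i) strong complementarity, to get closure of $\grayK$ under $\whitemu$ with a \emph{constant} proportionality scalar (Theorem~\ref{thm:subphasegroup}, Corollary~\ref{col:classification}), so that $c^{\,n-1}$ does not depend on which tuple is chosen; and (ii) mutual orthogonality of the classical points of $\grayobs$ — automatic in \catFHilb, or wherever $\grayobs$ has enough classical points — needed both to turn $\mu_n\ket{i_1\cdots i_n}\propto\ket{\sum i_k}$ into an honest support statement and to conclude the surviving probabilities are all equal. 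I would also flag that ``parity'' here means the $G$-valued sum; the $\mathbb Z_2$-valued version used in the Mermin/GHZ argument is the special case $G=\mathbb Z_2$, the unique two-dimensional strongly complementary pair by the classification of Section~\ref{sec:classification}.
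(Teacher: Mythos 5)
Your proposal is correct and follows essentially the route the paper intends: the theorem is stated as an immediate consequence of the correlation computation in (\ref{eq:correlationspider}), which reduces the joint outcome distribution to the $\whitedot$-copy $\delta_n$ applied to $\roundket{\sum \alpha_i}_{\!\smallgraydot}$, and your two case analyses (copying of a $\whiteobs$-classical point for the uncorrelated case, and the group-convolution computation via strong complementarity and Corollary~\ref{col:classification} for the parity-correlated case) supply exactly the details the paper leaves implicit. The only point to watch is normalisation of the resulting Born vectors, which you already flag and which the paper itself treats only up to scalars.
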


\begin{example}
We can compute the outcome
distributions for $XXX$,  $XYY$, $YXY$, and $YYX$ measurements on
three qubits in a GHZ-state using the technique described above.
First, outcome distribution $\roundket{A}_{\!\smallgraydot}$ for $XXX$:
\[
\beginpgfgraphicnamed{XXX_corrs}
\InputIfFileExists{XXX_corrs.tikz}{}{\input{./figures/XXX_corrs.tikz}}
\endpgfgraphicnamed
\]
Next, we compute outcome distribution $\roundket{B_1}_{\!\smallgraydot}$ for $XYY$:
\[
\beginpgfgraphicnamed{XYY_corrs}
\InputIfFileExists{XYY_corrs.tikz}{}{\input{./figures/XYY_corrs.tikz}}
\endpgfgraphicnamed
\]

Computing correlations as in Figure~(\ref{eq:correlationspider}) is symmetric in the choice of measurement angle for each of the systems. Thus, for the other two cases ($YXY$ and $YYX$), we get the same result:
$\roundket{B_1}_{\!\smallgraydot} = \roundket{B_2}_{\!\smallgraydot} = \roundket{B_3}_{\!\smallgraydot}$.
\end{example}

%%%%%%%%%%%%%%%%%%%%%%%%%%%%%%%%%%%
%%%%%%%%%%%%%%%%%%%%%%%%%%%%%%%%%%%
%%%%%%%%%%%%%%%%%%%%%%%%%%%%%%%%%%%
\subsection{Deriving the contradiction}\label{sec:contra}
Consider the function:
\begin{equation}\label{eq:parity-mapping}
\beginpgfgraphicnamed{constant_functionNEW}
\InputIfFileExists{constant_functionNEW.tikz}{}{\input{./figures/constant_functionNEW.tikz}}
\endpgfgraphicnamed
\end{equation}
\noindent
We have already seen that strongly complementary observables correspond to group algebras. That is, $\whitemult$ defines a group algebra over the classical points of $\grayobs$. For qubits there is only one choice: $\mathbb Z_2$. Thus, this function computes the parity (i.e. the $\mathbb Z_2$-sum) of
 all  outcomes.  % \bR each of the four experiments. It then furthermore
% computes the \textit{overall} parity of outcomes for the 3 final experiments (those involving $Y$ measurements).   \e

Measuring the parity for any local hidden state we obtain:
\[
\beginpgfgraphicnamed{NEW1}
\InputIfFileExists{NEW1.tikz}{}{\input{./figures/NEW1.tikz}}
\endpgfgraphicnamed
\]
that is, by (\ref{eq:orderHopf}):
\[
\beginpgfgraphicnamed{NEW2}
\InputIfFileExists{NEW2.tikz}{}{\input{./figures/NEW2.tikz}}
\endpgfgraphicnamed
\]
and hence:  
\[
\beginpgfgraphicnamed{NEW3}
\begin{tikzpicture}[dotpic]
	\begin{pgfonlayer}{nodelayer}
		\node [style=white dot] (0) at (0, 1.75) {};
		\node [style=none] (1) at (0, 2.5) {};
	\end{pgfonlayer}
	\begin{pgfonlayer}{edgelayer}
		\draw [style=diredge] (0) to (1.center);
	\end{pgfonlayer}
\end{tikzpicture}}
\endpgfgraphicnamed
\]
and measuring the parity in quantum theory we obtain: 
\[
\beginpgfgraphicnamed{NEW4}
\InputIfFileExists{NEW4.tikz}{}{\input{./figures/NEW4.tikz}}
\endpgfgraphicnamed
\]
that is, by the previous section: 
\[
\beginpgfgraphicnamed{NEW5}
\InputIfFileExists{NEW5.tikz}{}{\input{./figures/NEW5.tikz}}
\endpgfgraphicnamed
\]
and hence:
\[
\beginpgfgraphicnamed{NEW6}
\begin{tikzpicture}[dotpic]
	\begin{pgfonlayer}{nodelayer}
		\node [style=white dot, inner sep=1 pt, font={\footnotesize}] (0) at (0, 1.75) {$\pi$};
		\node [style=none] (1) at (0, 2.5) {};
	\end{pgfonlayer}
	\begin{pgfonlayer}{edgelayer}
		\draw [style=diredge] (0) to (1.center);
	\end{pgfonlayer}
\end{tikzpicture}}
\endpgfgraphicnamed
\]
which yields a contradiction.

\subsection{GHZ/Mermin assumptions and the necessity of strong complementarity}\label{sec:necessity}
%%%%%%%%%%%%%%%%%%%%%%%%%%%%%%%%%%%
%%%%%%%%%%%%%%%%%%%%%%%%%%%%%%%%%%%
%%%%%%%%%%%%%%%%%%%%%%%%%%%%%%%%%%%

We shall examine two assumptions that play a key role in a GHZ/Mermin style non-locality argument, and show that the presence of a strongly complementary observable arises as a consequence.

The original argument due to Greenburger, Horne, and Zeilinger~\cite{GHZ89} and later simplifications~\cite{GHZ,Mermin} focus on a state defined in terms of correlated (or anti-correlated) $Z$-spins and local spin measurements in the XY-plane. We generalise this assumption as follows.

\begin{description}
  \item[\textbf{Assumption 1 (Coherence).}] We will use a GHZ state defined with respect to some observable structure $\whiteobs$. Measurements are all conducted within a $\whiteobs$-phase of some coherent observable $\grayobs$.
\end{description}

In \catFHilb, all observables containing at least one unbiased classical point, w.r.t. $\whiteobs$, are within a $\whiteobs$-phase of a coherent observable, so we could weaken this assumption further. That is, if $\grayobs$ contains an unbiased classical point, we might as well assume it is coherent, since \textbf{Assumption 1} allows us to freely choose phases.

By \textbf{Assumption 1}, the correlations associated with each experiment are computed from this diagram:
\[ %
\beginpgfgraphicnamed{ghz3}
\InputIfFileExists{ghz3.tikz}{}{\input{./figures/ghz3.tikz}}
\endpgfgraphicnamed = %
\beginpgfgraphicnamed{ghz3_simplified}
\InputIfFileExists{ghz3_simplified.tikz}{}{\input{./figures/ghz3_simplified.tikz}}
\endpgfgraphicnamed \]

The second assumption is what~\cite{GHZ89} refers to as ``super-classicality''. We shall refer to it as \textit{sharpness}.

\begin{description}
\item[\textbf{Assumption 2 (Sharpness).}] After all subsystems except
  one are measured, the final measurement outcome is fixed.
\end{description}

The map %
\beginpgfgraphicnamed{decoh}
\InputIfFileExists{decoh.tikz}{}{\input{./figures/decoh.tikz}}
\endpgfgraphicnamed is called the \textit{decoherence map} for $\grayobs$. It projects from the space of all quantum mixed states to the the space of classical mixtures of eigenstates of $\grayobs$. To assert sharpness, we require that, once two of the three systems are measured, the third is invariant under this map:
\begin{equation}\label{eq:ConverseProof1}
\beginpgfgraphicnamed{ConverseProof1}
\InputIfFileExists{ConverseProof1.tikz}{}{\input{./figures/ConverseProof1.tikz}}
\endpgfgraphicnamed
\end{equation}
Plugging the unit of $\whiteobs$ in the 2nd system both for LHS and RHS, and using coherence we obtain:
\begin{equation}\label{eq:ConverseProof2}
\beginpgfgraphicnamed{ConverseProof2}
\InputIfFileExists{ConverseProof2.tikz}{}{\input{./figures/ConverseProof2.tikz}}
\endpgfgraphicnamed 
\end{equation}
and  by exploiting symmetry we have: 
\begin{equation}\label{eq:ConverseProof3}
\beginpgfgraphicnamed{ConverseProof3}
\InputIfFileExists{ConverseProof3.tikz}{}{\input{./figures/ConverseProof3.tikz}}
\endpgfgraphicnamed
\end{equation}
Hence we obtain:
\[%\begin{equation}\label{eq:ConverseProof4} 
\beginpgfgraphicnamed{ConverseProof4}
\InputIfFileExists{ConverseProof4.tikz}{}{\input{./figures/ConverseProof4.tikz}}
\endpgfgraphicnamed
\]%\end{equation}
Since $\whitedelta^\dagger\circ(1_X\otimes \sum_i\alpha_i)$ is unitary
it cancels. Thus our assumptions lead us to conclude the following
equation for the observable structures $(\whiteobs, \grayobs)$:
\begin{equation}\label{eq:bialgalt}
\beginpgfgraphicnamed{bialgaltbis}
\InputIfFileExists{bialgaltbis.tikz}{}{\input{./figures/bialgaltbis.tikz}}
\endpgfgraphicnamed
\end{equation}
 
\begin{proposition}\label{prop:bialgalt}\em
A pair $(\whiteobs, \grayobs)$ of coherent observables satisfying Equation (\ref{eq:bialgalt}) are strongly complementary.
\end{proposition}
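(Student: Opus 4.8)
The plan is to show that, in the presence of coherence, Equation~(\ref{eq:bialgalt}) is equivalent to the bialgebra law~(\ref{eq:bialg}); since coherence is part of the hypothesis, $(\whiteobs,\grayobs)$ then satisfies the definition of strong complementarity directly, and nothing further is required.

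Written out, Equation~(\ref{eq:bialgalt}) is an identity between two morphisms built from the two (co)multiplications and (co)units, and the claim is that, once the wires are straightened out, this is precisely~(\ref{eq:bialg}). Concretely, I would use the cups and caps that each observable structure induces on $X$ (recorded just after Proposition~\ref{prop:spider}) to bend the external legs of~(\ref{eq:bialgalt}), and then apply the spider law~(\ref{eq:spidercomp}) to fuse whatever dots are created along the way, rearranging each side into the standard ``two-in/two-out'' shape --- that is, into the left- and right-hand sides of~(\ref{eq:bialg}) up to scalar factors. Each time a bend throws off a stray $\whiteeta$, $\whiteepsilon$, $\grayeta$, or $\grayepsilon$, I would appeal to coherence (Definition~\ref{def:coherence}): $\whiteeta$ is, up to the cancellable coherence scalar, a $\grayobs$-classical point, and dually for the counits, so the stray unit is copied or deleted cleanly by the opposite-colour (co)multiplication and contributes only that one fixed cancellable scalar --- exactly what a scaled bialgebra is permitted to carry. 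If needed, I would also re-cancel an isolated unitary of the form $\whitedelta^\dagger\circ(1_X\otimes(\text{phase}))$, just as in the derivation of~(\ref{eq:bialgalt}), which is legitimate by Proposition~\ref{thm:phase-group-props}(1).

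The outcome is that~(\ref{eq:bialgalt}) is transported to~(\ref{eq:bialg}); combined with the standing coherence hypothesis this is exactly the definition of a strongly complementary pair, proving the proposition. (Since every move above --- bending through a cup/cap, fusing dots, copying a coherent unit, cancelling a unitary --- is reversible, the same manipulations run backwards show (\ref{eq:bialg}) and (\ref{eq:bialgalt}) are in fact equivalent modulo coherence, although only the forward implication is needed here.) Note that, in contrast to Theorem~\ref{thm:subphasegroup} and Proposition~\ref{prop:alterdef}, no ``enough classical points'' assumption enters.

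The delicate part is not conceptual but diagrammatic bookkeeping: fixing exactly which cup and cap are used at each stage, checking that the spider law is invoked only on genuinely connected subdiagrams, and --- most importantly --- confirming that the assorted scalars produced by coherence assemble into the single cancellable scalar allowed in~(\ref{eq:bialg}), rather than an uncontrolled or vanishing factor. Once the wire routing is settled, what remains is routine use of the Frobenius, spider, and coherence equations.
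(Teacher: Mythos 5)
Your framing is right as far as it goes: since coherence is hypothesised, it suffices to derive the bialgebra law (\ref{eq:bialg}) from (\ref{eq:bialgalt}), and no ``enough classical points'' assumption is needed. But there is a genuine gap in how you propose to get there. You treat the passage from (\ref{eq:bialgalt}) to (\ref{eq:bialg}) as pure wire-straightening, with the only complications being stray units and scalars that coherence disposes of. That is not where the difficulty lies. The two observable structures induce \emph{different} self-dualities on $X$: the $\whitedot$-cup and the $\graydot$-cap do not cancel, and their composite is precisely the antipode $s$ of Equation (\ref{eq:mub-antipode}), which is a nontrivial map, not a scalar. So every time you bend a leg of (\ref{eq:bialgalt}) through a cup of one colour and it lands on structure of the other colour, you change the arrow direction of an internal wire in a way that cannot be absorbed by coherence or by the cancellable scalar permitted in a scaled bialgebra. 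This is exactly the obstruction the paper's proof confronts head-on: it first derives from (\ref{eq:bialgalt}) an intermediate identity (\ref{eq:bialg-funkydirs}), which it explicitly describes as ``very nearly the required equation for strong complementarity, but the directions are wrong,'' and then proves two further auxiliary equations --- using coherence \emph{together with} (\ref{eq:bialg-funkydirs}) itself --- that allow the internal arrow directions to be reversed before the final assembly into (\ref{eq:bialg}).

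Your proposal contains no substitute for those auxiliary direction-reversal lemmas, and the parenthetical claim that all your moves are reversible so that (\ref{eq:bialgalt}) and (\ref{eq:bialg}) are ``equivalent modulo coherence'' by wire-straightening alone cannot be right as a one-step manipulation: if it were, the paper would not need a three-stage argument. To repair the proposal you would need to (i) identify the intermediate wrong-direction form of the bialgebra law that (\ref{eq:bialgalt}) actually yields, and (ii) prove, from coherence and that intermediate form, the identities letting you exchange the $\whitedot$- and $\graydot$-induced dualities on the internal wires (equivalently, control the occurrences of $s$ that the bending introduces). The remaining bookkeeping of units and scalars is, as you say, routine --- but it is not the content of the proof.
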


\begin{proof}
  First, we show that equation (\ref{eq:bialgalt}) implies the following, for any pair of coherent observables:
  \begin{equation}\label{eq:bialg-funkydirs}
\beginpgfgraphicnamed{bialg_funkydirs}
\InputIfFileExists{bialg_funkydirs.tikz}{}{\input{./figures/bialg_funkydirs.tikz}}
\endpgfgraphicnamed
  \end{equation}

  The proof goes as follows:
  \ctikzfig{bialg_funkydirs_pf1}
  ...which implies:
  \ctikzfig{bialg_funkydirs_pf2}

  Equation (\ref{eq:bialg-funkydirs}) is very nearly the required equation for strong complementarity, but the directions are wrong. However, we can correct this by first showing the following equations, using coherence and (\ref{eq:bialg-funkydirs}):
  \ctikzfig{bialg_dot_copy_pf1}
  \ctikzfig{bialg_dot_copy_pf2}
  
  Then, we complete the proof by using the equations above to change the directions of the arrows on the inside:
  \ctikzfig{bialg_alt_form_pf1}
  \ctikzfig{bialg_alt_form_pf2}
  Thus any coherent pair of observable structures satisfying Equation (\ref{eq:bialgalt}) is a strongly complementary pair.
\end{proof}

%%%%%%%%%%%%%%%%%%%%%%%%%%%%%%%%%%%
%%%%%%%%%%%%%%%%%%%%%%%%%%%%%%%%%%%
%%%%%%%%%%%%%%%%%%%%%%%%%%%%%%%%%%%
%\subsection{Phase groups, generalised parity, and the extension to more systems, dimensions, and models}
%%%%%%%%%%%%%%%%%%%%%%%%%%%%%%%%%%%
%%%%%%%%%%%%%%%%%%%%%%%%%%%%%%%%%%%
%%%%%%%%%%%%%%%%%%%%%%%%%%%%%%%%%%%

%%%%%%%%%%%%%%%%%%%%%%%%%%%%%%%%%%%
%%%%%%%%%%%%%%%%%%%%%%%%%%%%%%%%%%%
%%%%%%%%%%%%%%%%%%%%%%%%%%%%%%%%%%%
\section{Summary and Further Reading}
%%%%%%%%%%%%%%%%%%%%%%%%%%%%%%%%%%%
%%%%%%%%%%%%%%%%%%%%%%%%%%%%%%%%%%%
%%%%%%%%%%%%%%%%%%%%%%%%%%%%%%%%%%%

% In this chapter, we developed the notion of a \emph{generalised
% compositional theory}, a new approach to studying quantum mechanics and
% constructing foil theories with quantum-like properties.  The GCT
% approach is both operational and axiomatic.  Operational because the
% primitive components are abstract physical processes, corresponding to
% the building blocks of an idealised experiment, and the only
% conclusions which we draw are about these processes.  Axiomatic
% because we make no prior commitment .... XXXX\todo{finish this thought}

In this chapter, we developed the notion of a \emph{generalised
compositional theory}, a new approach to studying quantum mechanics and
constructing foil theories with quantum-like properties. The main building
blocks for a GCT are:
\begin{itemize}
  \item a collection of systems $A, B, C, \ldots$,
  \item a collection of \textit{primitive} processes, and
  \item a notion of horizontal composition $\otimes$ and vertical composition $\circ$.
\end{itemize}
From this sparse setting, we began to add extra pieces of structure.
\begin{itemize}
  \item symmetry maps $\Rightarrow$ ``permutibility of systems''
  \item dagger $\Rightarrow$ ``time-reversed processes''
  \item duals $\Rightarrow$ ``map/state duality''
\end{itemize}

This structure and its diagrammatic presentation give a rich language for talking about composed processes. We then went on to define various concepts in this framework, often by analogy to their Hilbert space counterparts: pure states, reversible dynamics, quantum observables, complementarity, mixed states, and measurements. Using these ingredients, we worked through a complete example, following Mermin's illustration of a possibilistic locality violation, as predicted by quantum mechanics.

The interested reader can find many papers related to, or extending the formalism introduced in this chapter. One example is the \textit{ZX-calculus}, which is a graphical calculus for the interaction of the Pauli-Z and Pauli-X observable structures. In addition to the usual rules (complementarity, strong complementarity), several other rules are added, which turn out to be complete for stabiliser quantum mechanics~\cite{BackensCompleteness}. This calculus has been applied to the study of measurement-based quantum computing~\cite{DuncanPerdrix2010}, topological MBQC~\cite{Horsman}, and quantum protocols~\cite{Anne}.

The ideas developed in section~\ref{sec:mixed-stat-meas} originated in~\cite{CPaqPav}. In~\cite{CPstar}, a simplified formalism for interacting classical and quantum data was developed, and can be viewed as an abstraction of the C*-algebraic approach to the study of quantum information.

  % \TODOitem{\textbf{outlook}:}
  % \begin{itemize}
  %   \TODOitem{CP*-construction, CBH theorem, physical axioms, reconstructions}
  %   \TODOitem{spacetime structure, causal categories}
  %   \TODOitem{interaction with GPTs}
  % \end{itemize}
  % \TODOitem{\textbf{for non-locality:}}
  % \begin{itemize}
  %   \TODOitem{weakening of Assumption 1, i.e. extension to more state/observable combinations.}
  %   \TODOitem{phase group, other models (if not done in time :-P)}
  % \end{itemize}

\bibliographystyle{spphys}
\bibliography{merminchapter}

\end{document}